\documentclass[11pt]{article}

\usepackage[T1]{fontenc}
\usepackage[utf8]{inputenc}
\usepackage{lmodern}
\usepackage[margin=1in]{geometry}
\usepackage{microtype}

\usepackage{amsmath,amssymb,amsthm,mathtools,bm}
\usepackage{graphicx}
\usepackage{booktabs}
\usepackage{array}
\usepackage{multirow}
\usepackage{tabularx}
\usepackage{threeparttable}
\usepackage{siunitx}
\usepackage{caption}
\usepackage{pdflscape}
\usepackage[round,authoryear]{natbib}
\usepackage{xcolor}
\usepackage{hyperref}

\hypersetup{
  hidelinks,
  pdfauthor={Zhennan Wu, Yijie Wang, and Xiaoqing Huang},
  pdftitle={The Continuous Latent Ornstein-Uhlenbeck Dynamics Framework: A Scalable Latent Process Model for Multivariate Longitudinal Categorical Data},
  pdfsubject={arXiv preprint},
  pdfkeywords={longitudinal modeling, latent structural models, multivariate Ornstein-Uhlenbeck process, item response theory, factor analysis, Bayesian inference}
}
\allowdisplaybreaks
\newtheorem{thrm}{Theorem}
\newtheorem{lem}{Lemma}

\title{The Continuous Latent Ornstein--Uhlenbeck Dynamics Framework:\\
A Scalable Latent Process Model for Multivariate Longitudinal Categorical Data}

\author{%
Zhennan Wu\textsuperscript{1}, Yijie Wang\textsuperscript{1}, and Xiaoqing Huang\textsuperscript{2}\\[0.6em]
\small \textsuperscript{1}Department of Computer Science, Luddy School of Informatics, Computing, and Engineering,\\[-0.1em]
\small Indiana University Bloomington, Bloomington, Indiana, USA\\[0.35em]
\small \textsuperscript{2}Department of Biostatistics and Health Data Science,\\[-0.1em]
\small Indiana University School of Medicine, Indianapolis, Indiana, USA\\[0.6em]
\small \texttt{zwu1@iu.edu; yijwang@iu.edu; huanxi@iu.edu}
}
\date{}

\begin{document}

\maketitle

\begin{abstract}

Longitudinal biomedical studies increasingly collect irregularly sampled, multivariate categorical measurements that provide noisy manifestations of underlying continuous disease processes. 
These data present several challenges: the longitudinal dynamics of the data are often heterogeneous, with different subjects show different progressive patterns; Underlyingthe biological traitscharacteristics beneath the data are often unobserved latent variables that drive multiple measurements and co-t directly observed, with multiple measured items may jointly reflect the same unobserved variables, and the different variables themselves may evolve in an interdependently; manner: the data collected are often imbalanced, with diversities on time gaps existing both between different subjects and within the same subject. 
To address those challenges, we present the Continuous Latent Ornstein–Uhlenbeck Dynamics (CLOUD) framework for modeling complex disease trajectories from multivariate longitudinal categorical data. CLOUD links multivariate categorical observations to underlying latent functional domains and characterizes their coupled temporal evolution via the integration of a measurement component inspired by item response theory (IRT) and factor analysis (FA) with a dynamic component based on multivariate Ornstein–Uhlenbeck (OU) processes. Methodologically, we introduce a time-inhomogeneous OU process that incorporated covariate-dependent components into the shifting mean function of the latent dynamics, allowing baseline biomarkers and clinical characteristics to modulate individual-level disease trajectories while preserving the analytical tractability of the OU process. We further propose a structured, scalable parameterization of the OU drift matrix that enabled valid interaction modeling without restricting the number of latent functional domains. We establish theoretical properties of the proposed framework, including the stability of covariate-dependent trajectories, the generality of the drift matrix, and the identifiability of the entire model. Through simulation studies and an application to longitudinal amyotrophic lateral sclerosis (ALS) clinical data, we demonstrate that CLOUD provided a principled and flexible tool for characterizing subject-specific disease evolution across multiple interacting functional domains.
\end{abstract}

\noindent\textbf{Keywords:}
Longitudinal modeling; latent structural models; multivariate Ornstein--Uhlenbeck process;
item response theory; factor analysis; Bayesian inference.

\section{Introduction}
\label{sec:intro}

Longitudinal biomedical data provide vital insights into disease progression and the evolution of health outcomes by repeatedly measuring biological and clinical markers. 
However, extracting meaningful insights from it requires addressing several inherent complexities, including the interacting unobserved latent biological characteristics underlying the noisy observed measurements, the heterogeneity of longitudinal trajectories where individuals with different baseline characteristics exhibit distinct disease trajectories and temporal patterns, and the data imbalance in practice, where irregular sampling schedules and missing observations introduce temporal irregularities and sparseness.
Given these challenges, joint modeling frameworks have emerged as a powerful approach. Such frameworks typically comprise two interconnected components: a measurement model, which links observed responses to lower-dimensional latent variables, and a dynamic model, which characterizes the temporal evolution of these latent variables \citep{wang2017multidimensional,Lee2026ALV}.
The measurement model often projects high-dimensional observations onto a lower-dimensional latent space using methods such as factor analysis (FA), with applications in recurrent event modeling \citep{Chen2024DynamicFA}, gene expression trajectory analysis \citep{Cai2023DynamicFA} and other areas. 
When the observation is categorical, item response theory (IRT) provides a probabilistic framework to link discrete observations to continuous latent traits \citep{de2013theory}. 
For the dynamic model, to capture coupled latent dynamics more mechanistically, multivariate Ornstein–Uhlenbeck (OU) processes have emerged as an attractive tool for modeling
inter-individual variability \citep{Oravecz2009AHO,Oravecz2016BayesianDA} and oscillatory temporal behavior \citep{Tran2020LatentOM}. 

Although OU-based models have shown considerable promise for modeling irregular longitudinal data, several important challenges remain. 
First, modeling interactions among multiple latent dimensions becomes increasingly difficult as the dimensionality of the latent space grows. Although theoretical conditions for interaction modeling in OU-based systems have been established \citep{blackwell2003bayesian}, practical implementations have largely been confined to relatively low-dimensional settings \citep{Tran2020LatentOM,Abbott2024ABJ}. This challenge becomes particularly important in modern biomedical studies, where numerous clinical and biological processes must often be modeled jointly. 
Second, incorporating biomarker-dependent heterogeneity into latent dynamics remains an open challenge. Substantial clinical evidence suggests that disease progression rates vary with baseline biomarker profiles and clinical characteristics \citep{Huang2020LongitudinalBI,Benatar2024PrognosticCA}. Although recent OU-based models have substantially advanced continuous-time latent modeling for longitudinal outcomes \citep{Henry2023OrdinalOS,Abbott2024ABJ,zhou2025dynamic}, systematically incorporating covariate-dependent viability into latent dynamics remains largely unexplored. 
Third, integrating latent OU dynamics with a measurement model introduces additional identifiability challenges. Because the latent dynamic process is not directly observed but instead inferred from noisy measurements, the scale, orientation, and temporal dependence of the latent process may become confounded with the parameters of the measurement model. Careful treatment of these identifiability issues is therefore essential for recovering interpretable latent trajectories and obtaining reliable estimates of dynamic interactions.

To address these challenges, we propose the Continuous Latent Ornstein–Uhlenbeck Dynamics (CLOUD) framework, a unified probabilistic framework that combines a measurement model inspired by IRT and FA with a dynamic model based on multivariate OU processes. Though multivariate OU processes can be viewed as a special case of continuous-time structural equation models \citep{zhou2025dynamic}, here we focus on OU processes for the balance between model flexibility, interpretability, and computational tractability. CLOUD makes three primary methodological contributions. First, we extend the conventional stationary OU process to a time-inhomogeneous OU process model with a covariate-dependent shifting mean function, allowing baseline characteristics and biomarkers to modulate individual disease trajectories, enabling subject-specific progression dynamics within a continuous-time latent process. Second, we develop a structured parameterization of the OU drift matrix that enables stable and scalable modeling of interaction among latent variables in arbitrary-dimensional latent spaces. Third, we establish the theoretical properties of the proposed framework by proving the well-posedness of the time-inhomogeneous OU process, the generality of the drift parameterization, and the identifiability of the complete model.

The remainder of this paper is organized as follows. Section \ref{sec:model} introduces the CLOUD model together with the corresponding theoretical guarantees. Section \ref{sec:simu} evaluates the proposed method through simulation studies. Section \ref{sec:als} demonstrates its practical utility using a real-world longitudinal amyotrophic lateral sclerosis (ALS) dataset. Finally, Section \ref{sec:dis} concludes with a discussion of the main findings, limitations, and directions for future research.

\section{Methods}
\label{sec:model}
\subsection{Model Specification}\label{sec:model_spec}

Given a longitudinal data set of N subjects, our CLOUD model is designed to characterize the longitudinal categorical responses $K$ of each subject using $R$-dimensional latent variables and account for the heterogeneity induced by subject-specific covariates. Let $Y_{ijk}$ represent the $k$th categorical response for subject $i$ at time $t_{ij}$, where $i = 1, \dots, N$, $j = 1, \dots, n_i$ ($n_i$ is the total number of longitudinal records for i), $k = 1, \dots, K$, and let $\boldsymbol\xi_i(t_{i j})=(\xi_{i 1}(t_{ij}), \dots, \xi_{i R}(t_{ij}))^\top$ denote the latent vector of $R$ at time $t_{ij}$. For covariates, we divided them into two groups. We use $\mathbf{x}_{ij}^{(1)}$ to represent static-effect covariates that only affect the intercept of the response $Y_{ijk}$, and $\mathbf{x}_{i}^{(2)}$ to represent dynamic-effect covariates that influence the dynamics of temporal trajectories $\boldsymbol\xi_i(t)$.

The CLOUD model consists of two components: a measurement model that links observed responses $Y_{ijk}$ to latent variables $\boldsymbol\xi_i(t)$ and a dynamic model that characterizes the temporal evolution of the latent variables $\boldsymbol\xi_i(t)$. 
For the measurement model, we follow \citet{Tran2020LatentOM} to employ the IRT model \citep{de2013theory} as follows.
\begin{align}
h\left[\mathbb{P}\left(Y_{i j k} \leqslant m \right)\right] & =\theta_{km}- \Lambda^\top_k \boldsymbol{\xi}_i\left(t_{i j}\right)-\boldsymbol\beta_k^\top \mathbf{x}_{i j}^{(1)}-b_{ik} \label{eq:irt}. 
\end{align}

where $h(\cdot)$ is a link function (typically logit or probit), $m$ is a score with range $m\in [0 ,c_k-2]$, $c_k$ is the number of categories, and $\theta_{km}$ is the threshold parameter to categorize the continuous latent vector value $\boldsymbol\xi_i(t_{ij})$ into the ordinal categories of $Y_{i j k}$.  $\Lambda_k$ is the loading of the latent vector ${\xi}_i\left(t_{i j}\right)$ for the $k$th response. $\boldsymbol\beta_k \in \mathbb{R}^{p}$ is the regression coefficients of $\mathbf{x}_{i j}^{(1)}$ and $b_{ik}$ is the random effect following the normal distribution $\mathcal{N}(0, \sigma_{b k})$ with variance parameter $\sigma_{b k}$.

For the dynamic model, we characterize the dynamics of the $R$ latent variables $\boldsymbol{\xi}_i(t) = (\xi_{i 1}(t), \dots, \xi_{i R}(t))^\top$ over time $t$ using the following time-inhomogeneous OU process. 
\begin{align}
 & \boldsymbol{\xi}_i(t+\Delta t) \mid \boldsymbol{\xi}_i(t) \sim \mathcal{N}\left(\underbrace{\boldsymbol{\mu}_i(t+\Delta t)}_{Part 1}+\underbrace{e^{-\boldsymbol{\Gamma} \Delta t}\left(\boldsymbol{\xi}_i(t)-\boldsymbol{\mu}_i(t)\right)}_{Part 2}, \underbrace{\boldsymbol{\Omega}-e^{-\boldsymbol{\Gamma} \Delta t} \boldsymbol{\Omega} e^{-\boldsymbol{\Gamma}^\top \Delta t}}_{Part 3}\right) \label{eq:ou} \\
 & \boldsymbol\mu_i(t) := \boldsymbol\mu(t, \mathbf{x}_i^{(2)}) = (\boldsymbol{\Phi}\mathbf{x}^{(2)}_i + \boldsymbol\alpha)t \label{eq:shift}
\end{align}
where
\begin{align}
& \boldsymbol{\Omega} \text{ needs to be a symmetric positive definite (SPD) matrix.} \label{req:sta_cov} \\
& \boldsymbol{\Gamma}  \text{ needs to be a positive stable (PS) matrix.}  \label{req:ps} \\
& \boldsymbol\Sigma := \boldsymbol{\Gamma}\boldsymbol{\Omega} + \boldsymbol{\Omega}\boldsymbol{\Gamma}^\top \text{ needs to an SPD matrix.} \label{req:inf_cov} 
\end{align}
The OU process consists of three parts, as presented in Equation~\ref{eq:ou}.
Part 1 captures a subject-specific expected trend $\boldsymbol\mu_i(t)$ for subject $i$. Specifically, taking into consideration the heterogeneous temporal dynamics of subjects with different baseline conditions, we use a linear model on the dynamic-effect covariate $\mathbf{x}^{(2)}_i$ and time $t$ to model this expected trend in Equation~\ref{eq:shift}, with $\boldsymbol{\Phi} \in \mathbb{R}^{R\times q}$ to be the slope parameter and $\boldsymbol{\alpha}$ to be the global intercept.
On top of it, Part 2 is set to model the temporal interdependency between different factors. The influence of a latent factor to its own temporal evolution and that of other factors are modeled through the drift matrix $\boldsymbol\Gamma \in \mathbb{R}^{R \times R}$.
Lastly, the random fluctuation over time of a subject's latent trajectory is considered in the normal distribution variance term in Part 3.
Specifically, $\boldsymbol{\Omega}$ defines the long-term asymptotic covariance of the OU process, while $e^{-\boldsymbol{\Gamma} \Delta t} \boldsymbol{\Omega} e^{-\boldsymbol{\Gamma}^\top \Delta t}$ describes the decay of temporal dependence as the time increases.

Despite the introduced time-inhomogeneous multivariate OU process invalidating the stability theory for standard stationary OU processes discussed in \citet{blackwell2003bayesian}, we show in Lemma~\ref{lemma:conditional_kernel_stability} in Web Appendix \ref{appd:tiou_proof} that the same conditions as listed in Requirements~\ref{req:sta_cov}-\ref{req:inf_cov} can guarantee the long-term stability results of our model.
In particular, we assume that $\Omega$ is positive definite instead of the most general positive semidefinite condition, 
to guarantee that there is no perfect linear dependence between different latent variables and to prevent the latent process from collapsing into a lower-dimensional subspace.

Notably, our CLOUD model offers the advantage over the existing model proposed in~\citet{Tran2020LatentOM}. By modeling the mean trajectory $\boldsymbol{\mu}_i(t)$ as a function of time $t$ and dynamic-effect covariates $\mathbf{x}_{i}^{(2)}$, it captures the heterogeneity in progressive changes influenced by baseline subject characteristics, providing a more realistic characterization of disease progression.

\subsection{Parameter Estimation}\label{sec:param_es}

Let $\boldsymbol\Psi
=
\left\{
    \{\{\theta_{km}\}_{m=0}^{c_k-2},
    \boldsymbol\Lambda_k,
    \boldsymbol \beta_k,
    \boldsymbol\sigma_{bk}\}_{k=1}^K,
    \boldsymbol\Gamma,
    \boldsymbol\Omega,
    \boldsymbol\Phi,
    \boldsymbol\alpha
\right\}$ 
denote the collection of model parameters described in Equations~\ref{eq:irt}-\ref{req:inf_cov}. The likelihood function of the CLOUD model is given in Web Appendix~\ref{appd:likelihood}. Because the likelihood does not admit a closed-form solution, we perform Bayesian inference using the No-U-Turn Sampler (NUTS) implemented in Stan \citep{hoffman2014no,carpenter2017stan} to estimate the parameters in $\boldsymbol\Psi$.

One of the major challenges for the CLOUD model is to properly model the interdependence between different latent variables. Specifically, the matrices $\boldsymbol{\Omega}$ and $\boldsymbol{\Gamma}$ in $\boldsymbol{\Psi}$ need to be properly parametrized to both satisfy the necessity constraints in Requirements \ref{req:sta_cov}-\ref{req:inf_cov} without introducing additional constraints that could bring artifacts to the model fitting, and maintain the computational feasibility for general multidimensional latent spaces. We address this challenge by proposing the following reparameterization.

We construct the OU covariance matrix $\boldsymbol{\Omega}$ via its Cholesky decomposition $\boldsymbol{\Omega} = \mathbf{L}_{\Omega}\mathbf{L}_{\Omega}^{\top}$
where $\mathbf{L}_{\Omega}$ is a lower triangular matrix. Next, we parameterize the drift matrix $\boldsymbol{\Gamma}$ as
\begin{align}
\boldsymbol{\Gamma}
=
(\mathbf{S}+\mathbf{A})\boldsymbol{\Omega}^{-1},
\qquad
\mathbf{S}=\mathbf{S}^{\top}\succ0,
\qquad
\mathbf{A}^{\top}=-\mathbf{A},
\label{eq:gamma_param}
\end{align}
where $\mathbf{S}$ is an SPD matrix, and $\mathbf{A}$ is a skew-symmetric (SS) matrix. We further express $\mathbf{S}$ through its Cholesky decomposition,
$\mathbf{S}=\mathbf{L}_S\mathbf{L}_S^\top$,
where $\mathbf{L}_S$ is a lower triangular matrix, and represent $\mathbf{A}$ by its strictly lower-triangular elements $\mathbf{L}_A$ through $\mathbf{A}=\mathbf{L}_A^{\top}-\mathbf{L}_A$. Consequently, both $\boldsymbol{\Omega}$ and $\boldsymbol{\Gamma}$ are fully parameterized by the triangular matrices
$\{\mathbf{L}_{\Omega},\mathbf{L}_{S},\mathbf{L}_{A}\}$.
As established in Theorem~\ref{thm:omega_metric_drift} (see Theorem~\ref{thm:omega_metric_drift} and its proof in Web Appendix \ref{appd:helm_proof}), this reparameterization is not merely a sufficient construction but also a nonrestrictive one that satisfies the constraints in Requirements \ref{req:sta_cov}--\ref{req:inf_cov}. The triangularization of all reparametrized terms further improves the numerical stability and computational efficiency of Hamiltonian Monte Carlo sampling in Stan. In practice, we additionally use the non-centered parameterization (NCP) \citep{papaspiliopoulos2007general} for Equation~\ref{eq:ou} to address the sampling efficiency issue for MCMC, with the details included in the Web Appendix \ref{appd:ncp}. 

Our proposed parametrization is the first attempt to allow the matrix $\boldsymbol\Gamma$ to be a general PS matrix without dimensional \citep{Tran2020LatentOM,Abbott2024ABJ} or structural \citep{rohlfs2014modeling,mitov2018fast} constraints. This enables the CLOUD model to properly and effectively  capture model latent-space interdependencies across arbitrary dimensions.

\subsection{Identifiability}
\label{sec:iden}
Without constraints, the joint framework will face an identifiability issue due to invariance under invertible transformations of the latent coordinate system of the CLOUD model. Specifically, for any nonsingular matrix $\mathbf P$, the transformation
\begin{align}
    \boldsymbol\xi_i^*(t)=\mathbf P\boldsymbol\xi_i(t),
    \qquad
    \boldsymbol\Lambda^* = \boldsymbol\Lambda\mathbf P^{-1},
    \qquad
    \boldsymbol\Omega^* = \mathbf P\boldsymbol\Omega\mathbf P^\top
\end{align}
leaves the measurement linear predictor in Equation~\ref{eq:irt} unchanged, provided the dynamic parameters are transformed accordingly:
\begin{align}
    \boldsymbol\Gamma^*=\mathbf P\boldsymbol\Gamma\mathbf P^{-1},
    \qquad
    \boldsymbol\Phi^*=\mathbf P\boldsymbol\Phi,
    \qquad
    \boldsymbol\alpha^*=\mathbf P\boldsymbol\alpha .
\end{align}

Because this invariance is rooted in the fact that $\boldsymbol\Lambda^\top_k\boldsymbol\xi_i$ shows up together in the model and we could only obtain a good estimate of the product, to alleviate the freedom coming from this $R\times R$ transformation $P$ and make the CLOUD model fully identifiable, we introduce additional $R^2$ constraints on $\boldsymbol\Lambda$ or $\boldsymbol\xi$. 
First, we require $\boldsymbol\Omega$ to be a correlation matrix and set it to be the covariance matrix of the first observation $\boldsymbol\xi_{i 1}$, which introduces $R$ constraints on the matrix $\Omega$ and fixes the scale of $\boldsymbol\xi_i$. In practice, it is achieved via the normalization below (here we use $\boldsymbol\Omega'$ to represent the original unconstrained matrix)
\begin{align}
    \boldsymbol\Omega = \mathbf{D}^{-1} \boldsymbol\Omega' \mathbf{D}^{-1} \label{eq:scal_norm}
\end{align}
where $\mathbf{D}=\operatorname{diag}\left(\sqrt{\Omega'_{11}}, \sqrt{\Omega'_{22}}, \ldots, \sqrt{\Omega'_{p p}}\right)$ is the diagonal matrix of standard deviations.
Second, we introduce the remaining $R^2-R$ constraints on $\boldsymbol\Lambda$ by imposing an anchor-item orientation constraint. After possibly reordering the items, we assume that there exists a known set
of anchor items $\mathcal A=\{a_1,\ldots,a_R\}$ such that the corresponding $R\times R$ loading block is diagonal with strictly positive
diagonal entries:
\begin{align}
    \boldsymbol\Lambda_{\mathcal A} =\operatorname{diag}\left(\lambda_{a_1 1},\lambda_{a_2 2},\ldots,\lambda_{a_R R}\right),\qquad\lambda_{a_r r}>0,\quad r=1,\ldots,R \label{eq:struct}
\end{align}
and non-anchor rows of $\boldsymbol\Lambda$ are left unrestricted. This fixes the orientation and signs of the latent dimensions, and together they resolve the invariance-transformation issue.

The constraints above together remove the main factor-analytic indeterminacies in the CLOUD model, which is common in factor analysis and dynamic latent variable models \citep{Chandra2023InferringCS,Chen2024DynamicFA,Cai2023DynamicFA,Lee2026ALV}. The complete identifiability also requires additional standard regularity conditions commonly assumed in IRT, and we present the complete formal identifiability statement in Web Appendix \ref{appd:id_proof}.

\section{Simulations}
\label{sec:simu}

We conducted a simulation study to systematically evaluate the performance of the proposed CLOUD framework. Specifically, we assessed: (1) the recovery of covariate-dependent latent shifting mean function by generating subject-specific trajectories under baseline covariate effects; (2) the ability of the structured drift parameterization to recover interactions among latent variables under varying latent dimensionality and dependence structures, including both low-dimensional settings where existing approaches are applicable\citep{Tran2020LatentOM} and higher-dimensional latent space scenarios where existing approaches become computationally or methodologically infeasible without imposing additional structural constraints; and (3) parameter recovery under realistic longitudinal sampling, missingness, and measurement error while satisfying the identifiability conditions established in Section \ref{sec:iden}.

\subsection{Simulation Setup}
We generated synthetic longitudinal data according to Equations~\ref{eq:irt}-\ref{eq:shift} for $N = 600$ individuals with parameter configurations for $\boldsymbol\Phi$, $\boldsymbol\alpha$, $\boldsymbol\Gamma$, and $\boldsymbol\Lambda$ designed as discussed below to evaluate the contributions.
To evaluate the first contribution, we set up $q=2$ dynamic-effect covariates $\mathbf{x}^{(2)}$ and designed parameters $\boldsymbol{\Phi}$ and $\boldsymbol\alpha$ to include both positive and negative progressive effect covariates, yielding individual-specific latent trajectories $\boldsymbol{\xi}(t)$ with non-zero shifting mean functions $\boldsymbol{\mu}(t)$. The same dynamic-effect covariates $\mathbf{x}^{(2)}$ were used across all scenarios, while the dimensions and values of the associated model parameters vary according to the different latent dimensions considered below.

To evaluate the second contribution, we set up different $\boldsymbol\Gamma$ values to represent latent space of varying dimensions and dependence structures within each considered dimension across the following four scenarios (S1–S4). 
For the latent dimensionality, we considered two settings in Equation~\ref{eq:irt}:
a low-dimensional regime with $R=2$ latent variables and $K=7$ observed items: 3 binary and 4 ordinal for scenarios S1 and S3, and a higher-dimensional regime with $R=4$ latent variables and $K=12$ observed items: 5 binary and 7 ordinal for scenarios S2 and S4. 
To investigate different latent dependence structures, we specified drift matrices $\boldsymbol{\Gamma}$ with distinct spectral properties. 
In scenarios S1 (2D) and S2 (4D), $\boldsymbol{\Gamma}$ was asymmetric with complex eigenvalues having positive real parts, producing coupled oscillatory latent dynamics. 
In contrast, scenarios S3 (2D) and S4 (4D) used $\boldsymbol{\Gamma}$ matrices with strictly positive real eigenvalues, yielding monotone mean-reverting trajectories. 
The specific data-generating drift matrices are given below.
For the 2D latent space scenarios:
\begin{align}
     \boldsymbol{\Gamma}_{S1} = 
     \begin{bmatrix} 
     -0.45 & 1.17 \\
     -1.46 & 1.28 
     \end{bmatrix}, 
     \quad 
     \boldsymbol{\Gamma}_{S3} = 
     \begin{bmatrix} 
     0.70 & -0.25 \\
     -0.35 & 0.65 
     \end{bmatrix}
\end{align}
where $\boldsymbol{\Gamma}_{S1}$ yielded complex eigenvalues ($0.415 \pm 0.979\mathrm{i}$) and $\boldsymbol{\Gamma}_{S3}$ yielded real eigenvalues ($0.9719, 0.3781$).
For the 4D latent space scenarios:
\begin{align}
    \boldsymbol{\Gamma}_{S2} = 
    \begin{bmatrix} 
    0.50 & 2.07 & -0.08 & 0.00 \\
    -1.93 & 0.30 & 0.16 & 0.08 \\
    -0.12 & 0.25 & 0.40 & -2.47 \\
    0.00 & 0.12 & 2.43 & 0.90 
    \end{bmatrix}, 
    \quad 
    \boldsymbol{\Gamma}_{S4} = 
    \begin{bmatrix} 
    0.79 & -0.29 & 0.17 & -0.12 \\
    -0.19 & 0.89 & -0.10 & 0.17 \\
    -0.11 & 0.12 & 0.71 & 0.19 \\
    -0.38 & 0.45 & -0.19 & 1.04 
    \end{bmatrix}
\end{align}
where $\boldsymbol{\Gamma}_{S2}$ yielded complex eigenvalues ($0.3751 \pm 2.0252\mathrm{i}, 0.6749 \pm 2.4037\mathrm{i}$) and $\boldsymbol{\Gamma}_{S4}$ yielded real eigenvalues ($1.2885, 0.8588, 0.6081, 0.6745$). 

To ensure strict identifiability of the latent factor structure across all scenarios, we imposed fixed structural zero constraints on the factor loading matrix $\boldsymbol{\Lambda}$ as follows. 
\begin{align}
    \boldsymbol\Lambda^{T}_{S 1, S3} = 
    \begin{bmatrix} 
    \lambda_{1,1} & \lambda_{2,1} & \lambda_{3,1} & 0 & 0 & 0 & 0  \\ 
    0 & 0 & 0 & \lambda_{4,2} & \lambda_{5,2} & \lambda_{6,2} & \lambda_{7,2} 
    \end{bmatrix},
\end{align}
\begin{align}
    \boldsymbol\Lambda^{T}_{S 2, S 4} = 
    \begin{bmatrix} 
    \lambda_{1,1} & \lambda_{2,1} & \lambda_{3,1} & 0 & 0 & 0 & 0 & 0 & 0 & 0 & 0 & 0 \\ 
    0 & 0 & 0 & \lambda_{4,2} & \lambda_{5,2} & \lambda_{6,2} & 0 & 0 & 0 & 0 & 0 & 0 \\ 
    0 & 0 & 0 & 0 & 0 & 0 & \lambda_{7,3} & \lambda_{8,3} & \lambda_{9,3} & 0 & 0 & 0 \\ 
    0 & 0 & 0 & 0 & 0 & 0 & 0 & 0 & 0 & \lambda_{10,4} & \lambda_{11,4} & \lambda_{12,4} \end{bmatrix}.
    \label{eq:loading}
\end{align}
Each observed item was allowed to load on only one latent factor, yielding a simple-structure loading matrix that satisfies the identifiability conditions required in Section \ref{sec:iden}.

We set the weakly informative priors for parameters based on \citet{gelman2013bayesian} (details provided in Web Table~\ref{tab:priors}). The number of repeated measurements for each individual and missing response settings to mimic the sparse and unbalanced sampling in clinical longitudinal studies are following \citet{Tran2020LatentOM}. The details of the complete settings are provided in Web Appendix \ref{appd:simu_supp}). Across different parameter settings, the complete data generation had four scenarios S1-S4, mainly distinguished by $\boldsymbol\Gamma$. The corresponding setting of $\boldsymbol\Phi$, $\boldsymbol\alpha$, $\boldsymbol\Lambda$, and the remaining parameters in $\boldsymbol\Psi$ were designed to be the same across scenarios with the same latent dimension. For each scenario, 100 datasets were generated and fitted in parallel. For each fit, three parallel Markov chains were run with 2,000 total iterations per chain, discarding the first 1000 iterations as warm-up. To navigate the complex posterior geometry, the target average acceptance probability was set to 0.95, and the maximum tree depth was constrained to 12. Model convergence was evaluated strictly, requiring the Gelman-Rubin diagnostic ($\hat{R}$) to be strictly less than 1.10 and confirming an absence of divergent transitions for all retained parameters (details of the retained fits for each scenario and fitted model reported in Web
Table~\ref{tab:simulation-convergence-counts}). Finally, parameter recovery was evaluated across the 100 replicated datasets. Specifically, the coverage probability (CP) was calculated as the empirical proportion of the 100 datasets for which the 95$\%$ posterior credible interval successfully captured the true data-generating parameter value.

\subsection{Simulation Results}

To evaluate the performance of CLOUD incorporating a covariate-dependent latent shifting mean function, we compared CLOUD with two structurally constrained baselines that assume a stationary latent process by fixing $\boldsymbol{\mu}(t) = \mathbf{0}$: the LOU model in \citet{Tran2020LatentOM}, evaluated on the 2D scenarios (S1 and S3), and the StationaryOU model, which is identical to CLOUD except for the stationary assumption evaluated on the 4D scenarios (S2 and S4).
The empirical results (Tables~\ref{tab:s1_comparison_1}-\ref{tab:s2_comparison_1} and Web Tables~\ref{tab:s1_comparison_2}-\ref{tab:s4_comparison_3}) demonstrated the importance of explicitly modeling covariate-dependent latent trajectories. When the true data-generating process contains dynamic covariate effects, enforcing a stationary latent mean function introduces substantial bias and poor interval coverage for both the dynamic and measurement model parameters.
For example, in scenario S1, the LOU estimated the item-level covariate effect $\beta_{1,1}$ with an RB of $-2.021$ and a CP of only $18.3\%$, while the latent drift parameter $\Gamma_{1,1}$ exhibited an RB of $-2.115$ with $0.0\%$ CP. In contrast, CLOUD accurately recovered both parameters, yielding an RB of $0.010$ and a $95.7\%$ CP for $\beta_{1,1}$. 

The impact of misspecifying the stationary mean function became even more pronounced in the higher-dimensional settings. Under scenarios S2 and S4, the StationaryOU model produced substantial bias in both the drift matrix and the measurement model, indicating that unmodeled subject-specific progression was partially absorbed into the latent dependence structure. For example, the cross-process drift element $\Gamma_{3,1}$ in scenario S2 had an RB of $-13.191$, and the threshold parameter $\theta_2$ and the factor loading $\lambda_4$ in scenario S4 exhibited markedly inflated estimation error (MSE= $13.975$ and $3.585$, respectively). By explicitly modeling a covariate-dependent shifting mean function, CLOUD substantially reduced estimation error and restored near-nominal coverage for both parameters.

To evaluate CLOUD's ability to model interactions among latent variables in multidimensional latent space, we compared CLOUD with the DiagOU baseline across all four simulation scenarios. DiagOU enforced independent latent trajectories by restricting both $\boldsymbol{\Gamma}$ and the asymptotic covariance $\boldsymbol{\Omega}$ to be diagonal matrices, thereby excluding all cross-factor interactions. The empirical results (Tables~\ref{tab:s1_comparison_1}-\ref{tab:s2_comparison_1} and Web Tables~\ref{tab:s1_comparison_2}-\ref{tab:s4_comparison_3}) demonstrated that explicitly modeling latent interactions is essential for accurately recovering both the dynamic and measurement model parameters.
As expected, DiagOU was unable to recover the off-diagonal drift matrix elements because the model assumes conditional independence among latent variables. More importantly, this misspecification propagated to the estimation of the drift matrix diagonal entries, indicating that ignoring cross-process interactions also distorted the inferred within-process dynamics. 
In the 2D scenarios, DiagOU exhibited substantial bias in the autoregressive drift parameter $\Gamma_{1,1}$ with RB values of $-2.951$ in S1 and $13.645$ in S3.
This became worse in the higher-dimensional settings. In scenario S2, where the latent dynamics were four-dimensional and oscillatory, DiagOU estimated $\Gamma_{1,1}$ with an RB of $4.635$ and an MSE of $8.780$. In contrast, CLOUD accurately recovered the parameters by allowing a full parameterized drift matrix, yielding an RB of $0.076$, an MSE of $0.074$, and $100.0\%$ CP for $\Gamma_{1,1}$. These results demonstrated that the proposed structured drift parameterization enables reliable inference for interacting latent processes in both low- and high-dimensional settings, addressing a key limitation of existing OU-based longitudinal models.  

Across all four scenarios, CLOUD consistently demonstrated stable computation, adequate effective sample sizes, and accurate recovery of the parameters, indicating that the proposed framework could properly model the dynamics in general multidimensional latent space.

\begin{table}[p]
\centering
\caption{Comparison of simulation results across CLOUD, LOU, and DiagOU frameworks: S1}
\label{tab:s1_comparison_1}

% Same font size as the first table
\footnotesize

% Make numbers in S columns use the same math font as the first table
\sisetup{
  mode = math
}

\setlength{\tabcolsep}{0.35pt}
\renewcommand{\arraystretch}{0.72}
\begin{threeparttable}
\begin{tabular*}{\linewidth}{@{\extracolsep{\fill}}l S[table-format=-1.2] *{3}{S[table-format=-2.3] S[table-format=2.3] S[table-format=3.1] S[table-format=4.1] S[table-format=1.2]}}
\toprule
Parameter & {True} & \multicolumn{5}{c}{CLOUD} & \multicolumn{5}{c}{LOU} & \multicolumn{5}{c}{DiagOU} \\
\cmidrule(lr){3-7}\cmidrule(lr){8-12}\cmidrule(l){13-17}
 &  & {RB} & {MSE} & {CP} & {ESS} & {$\hat{R}$} & {RB} & {MSE} & {CP} & {ESS} & {$\hat{R}$} & {RB} & {MSE} & {CP} & {ESS} & {$\hat{R}$} \\
\midrule
\multicolumn{17}{@{}l}{\textit{$\Gamma$ parameters}} \\
\midrule
$\Gamma_{1,1}$ & -0.45 & -0.085 & 0.016 & 92.4 & 4478.5 & 1.05 & -2.115 & 0.907 & 0.0 & 894.0 & 1.04 & -2.951 & 1.798 & 0.0 & 1314.4 & 1.01 \\
$\Gamma_{1,2}$ & 1.17 & -0.003 & 0.007 & 95.7 & 4783.0 & 1.04 & -0.433 & 0.257 & 0.0 & 652.2 & 1.04 & -1.000 & 1.373 & 0.0 & \multicolumn{1}{c}{--} & \multicolumn{1}{c}{--} \\
$\Gamma_{2,1}$ & -1.46 & -0.004 & 0.016 & 96.7 & 4535.8 & 1.03 & -0.859 & 1.572 & 0.0 & 569.9 & 1.06 & -1.000 & 2.157 & 0.0 & \multicolumn{1}{c}{--} & \multicolumn{1}{c}{--} \\
$\Gamma_{2,2}$ & 1.28 & -0.015 & 0.013 & 97.8 & 4725.7 & 1.03 & -0.845 & 1.172 & 0.0 & 439.2 & 1.03 & -0.314 & 0.166 & 0.0 & 1396.6 & 1.01 \\
\addlinespace[0.35em]
\multicolumn{17}{@{}l}{\textit{$\Lambda$ parameters}} \\
\midrule
$\lambda_{1}$ & 1.2 & 0.043 & 0.019 & 89.1 & 2763.1 & 1.01 & 1.345 & 2.689 & 0.0 & 1123.0 & 1.01 & 0.106 & 0.041 & 80.0 & 2180.7 & 1.00 \\
$\lambda_{2}$ & 4.0 & 0.150 & 1.001 & 97.8 & 728.9 & 1.02 & 1.544 & 41.719 & 1.2 & 375.7 & 1.06 & 0.372 & 3.557 & 84.0 & 616.6 & 1.01 \\
$\lambda_{3}$ & 4.1 & 0.195 & 1.489 & 96.7 & 789.2 & 1.02 & 1.636 & 48.184 & 0.0 & 411.7 & 1.03 & 0.411 & 4.033 & 83.0 & 710.1 & 1.01 \\
$\lambda_{4}$ & 3.1 & 0.021 & 0.044 & 93.5 & 1926.5 & 1.01 & 0.786 & 6.064 & 0.0 & 783.9 & 1.01 & 0.079 & 0.104 & 76.0 & 1484.0 & 1.01 \\
$\lambda_{5}$ & 5.2 & -0.004 & 0.242 & 95.7 & 1026.0 & 1.01 & 0.823 & 19.512 & 0.0 & 374.8 & 1.02 & 0.078 & 0.524 & 97.0 & 761.1 & 1.01 \\
$\lambda_{6}$ & 3.0 & 0.018 & 0.050 & 92.4 & 1921.1 & 1.01 & 0.868 & 6.966 & 0.0 & 771.1 & 1.01 & 0.081 & 0.116 & 79.0 & 1520.7 & 1.01 \\
$\lambda_{7}$ & 1.7 & 0.006 & 0.009 & 92.4 & 2679.0 & 1.00 & 0.811 & 1.926 & 0.0 & 1127.3 & 1.01 & 0.061 & 0.021 & 76.0 & 2291.0 & 1.01 \\
\addlinespace[0.35em]
\multicolumn{17}{@{}l}{\textit{$\boldsymbol\Phi$ parameters}} \\
\midrule
$\boldsymbol\Phi_{1,1}$ & 0.4 & 0.014 & 0.001 & 95.7 & 3971.6 & 1.00 & \multicolumn{1}{c}{--} & \multicolumn{1}{c}{--} & \multicolumn{1}{c}{--} & \multicolumn{1}{c}{--} & \multicolumn{1}{c}{--} & -0.039 & 0.002 & 97.0 & 3428.2 & 1.00 \\
$\boldsymbol\Phi_{1,2}$ & -0.2 & 0.004 & 0.000 & 97.8 & 3879.8 & 1.00 & \multicolumn{1}{c}{--} & \multicolumn{1}{c}{--} & \multicolumn{1}{c}{--} & \multicolumn{1}{c}{--} & \multicolumn{1}{c}{--} & -0.046 & 0.000 & 91.0 & 3268.2 & 1.00 \\
$\boldsymbol\Phi_{2,1}$ & -0.3 & -0.005 & 0.001 & 90.2 & 3161.4 & 1.00 & \multicolumn{1}{c}{--} & \multicolumn{1}{c}{--} & \multicolumn{1}{c}{--} & \multicolumn{1}{c}{--} & \multicolumn{1}{c}{--} & -0.055 & 0.001 & 89.0 & 2354.7 & 1.00 \\
$\boldsymbol\Phi_{2,2}$ & 0.5 & 0.001 & 0.000 & 92.4 & 2470.2 & 1.00 & \multicolumn{1}{c}{--} & \multicolumn{1}{c}{--} & \multicolumn{1}{c}{--} & \multicolumn{1}{c}{--} & \multicolumn{1}{c}{--} & -0.051 & 0.001 & 66.0 & 2141.7 & 1.00 \\
\addlinespace[0.35em]
\multicolumn{17}{@{}l}{\textit{$\boldsymbol\alpha$ parameters}} \\
\midrule
$\alpha_{1}$ & 0.5 & 0.004 & 0.001 & 95.7 & 2736.8 & 1.00 & \multicolumn{1}{c}{--} & \multicolumn{1}{c}{--} & \multicolumn{1}{c}{--} & \multicolumn{1}{c}{--} & \multicolumn{1}{c}{--} & -0.048 & 0.001 & 85.0 & 2320.8 & 1.00 \\
$\alpha_{2}$ & -0.3 & -0.008 & 0.000 & 89.1 & 3182.4 & 1.00 & \multicolumn{1}{c}{--} & \multicolumn{1}{c}{--} & \multicolumn{1}{c}{--} & \multicolumn{1}{c}{--} & \multicolumn{1}{c}{--} & -0.058 & 0.001 & 80.0 & 2257.4 & 1.01 \\
\bottomrule
\end{tabular*}
\begin{tablenotes}
\footnotesize
\item RB: relative bias, defined as
$
\mathrm{RB}
=
M^{-1}\sum_{m=1}^{M}
\frac{\hat{\theta}^{(m)}-\theta_0}{\theta_0},
$
where $m=1,\ldots,M$ indexes the simulation replications, $M$ is the total number of replications, $\hat{\theta}^{(m)}$ is the estimate from replication $m$, and $\theta_0$ is the true parameter value.
\item MSE: mean squared error, defined as
$
\mathrm{MSE}
=
M^{-1}\sum_{m=1}^{M}
\left(\hat{\theta}^{(m)}-\theta_0\right)^2.
$
\item CP: coverage probability, defined as
$
\mathrm{CP}
=
100 \times M^{-1}\sum_{m=1}^{M}
\mathbb{I}\!\left\{\theta_0 \in \mathrm{CI}^{(m)}\right\},
$
where $\mathrm{CI}^{(m)}$ is the credible interval from replication $m$.
\item ESS: effective sample size, measuring the amount of independent information in the posterior draws after accounting for autocorrelation.
\item $\hat{R}$: Gelman--Rubin diagnostic, measuring convergence across Markov chains; values close to 1 indicate good mixing and convergence.
\item A dash indicates that RB is undefined because the true parameter value is zero.
\end{tablenotes}
\end{threeparttable}
\end{table}

\begin{table}[p]
\centering
\caption{Comparison of simulation results across CLOUD, StationaryOU, and DiagOU frameworks: S2}
\label{tab:s2_comparison_1}

% Same font size as the first table
\footnotesize

% Make numbers in S columns use the same math font as the first table
\sisetup{
  mode = math
}

\setlength{\tabcolsep}{0.35pt}
\renewcommand{\arraystretch}{0.72}

\begin{threeparttable}
\begin{tabular*}{\linewidth}{@{\extracolsep{\fill}}l S[table-format=-1.3] *{3}{S[table-format=-2.3] S[table-format=3.3] S[table-format=3.1] S[table-format=4.1] S[table-format=1.2]}}
\toprule
Parameter & {True} & \multicolumn{5}{c}{CLOUD} & \multicolumn{5}{c}{StationaryOU} & \multicolumn{5}{c}{DiagOU} \\
\cmidrule(lr){3-7}\cmidrule(lr){8-12}\cmidrule(l){13-17}
 &  & {RB} & {MSE} & {CP} & {ESS} & {$\hat{R}$} & {RB} & {MSE} & {CP} & {ESS} & {$\hat{R}$} & {RB} & {MSE} & {CP} & {ESS} & {$\hat{R}$} \\
\midrule
\multicolumn{17}{@{}l}{\textit{$\boldsymbol\Phi$ parameters}} \\
\midrule
$\Phi_{1,1}$ & 0.4 & 0.063 & 0.003 & 100.0 & 1303.4 & 1.01 & \multicolumn{1}{c}{--} & \multicolumn{1}{c}{--} & \multicolumn{1}{c}{--} & \multicolumn{1}{c}{--} & \multicolumn{1}{c}{--} & 0.063 & 0.003 & 100.0 & 1303.4 & 1.01 \\
$\Phi_{1,2}$ & -0.2 & 0.086 & 0.002 & 100.0 & 1239.2 & 1.01 & \multicolumn{1}{c}{--} & \multicolumn{1}{c}{--} & \multicolumn{1}{c}{--} & \multicolumn{1}{c}{--} & \multicolumn{1}{c}{--} & 0.086 & 0.002 & 100.0 & 1239.2 & 1.01 \\
$\Phi_{2,1}$ & -0.3 & 0.181 & 0.005 & 85.7 & 1175.2 & 1.01 & \multicolumn{1}{c}{--} & \multicolumn{1}{c}{--} & \multicolumn{1}{c}{--} & \multicolumn{1}{c}{--} & \multicolumn{1}{c}{--} & 0.181 & 0.005 & 85.7 & 1175.2 & 1.01 \\
$\Phi_{2,2}$ & 0.5 & 0.084 & 0.003 & 85.7 & 1773.1 & 1.01 & \multicolumn{1}{c}{--} & \multicolumn{1}{c}{--} & \multicolumn{1}{c}{--} & \multicolumn{1}{c}{--} & \multicolumn{1}{c}{--} & 0.084 & 0.003 & 85.7 & 1773.1 & 1.01 \\
$\Phi_{3,1}$ & 0.2 & -0.039 & 0.001 & 100.0 & 1405.8 & 1.00 & \multicolumn{1}{c}{--} & \multicolumn{1}{c}{--} & \multicolumn{1}{c}{--} & \multicolumn{1}{c}{--} & \multicolumn{1}{c}{--} & -0.039 & 0.001 & 100.0 & 1405.8 & 1.00 \\
$\Phi_{3,2}$ & 0.1 & -0.036 & 0.021 & 90.5 & 1581.5 & 1.00 & \multicolumn{1}{c}{--} & \multicolumn{1}{c}{--} & \multicolumn{1}{c}{--} & \multicolumn{1}{c}{--} & \multicolumn{1}{c}{--} & -0.036 & 0.021 & 90.5 & 1581.5 & 1.00 \\
$\Phi_{4,1}$ & -0.1 & 0.156 & 0.001 & 100.0 & 1285.4 & 1.00 & \multicolumn{1}{c}{--} & \multicolumn{1}{c}{--} & \multicolumn{1}{c}{--} & \multicolumn{1}{c}{--} & \multicolumn{1}{c}{--} & 0.156 & 0.001 & 100.0 & 1285.4 & 1.00 \\
$\Phi_{4,2}$ & -0.3 & 0.007 & 0.023 & 95.2 & 1085.9 & 1.01 & \multicolumn{1}{c}{--} & \multicolumn{1}{c}{--} & \multicolumn{1}{c}{--} & \multicolumn{1}{c}{--} & \multicolumn{1}{c}{--} & 0.007 & 0.023 & 95.2 & 1085.9 & 1.01 \\
\addlinespace[0.25em]
\multicolumn{17}{@{}l}{\textit{$\boldsymbol\Gamma$ parameters}} \\
\midrule
$\Gamma_{1,1}$ & 0.5 & 0.076 & 0.074 & 100.0 & 2296.3 & 1.01 & 0.341 & 0.062 & 100.0 & 402.0 & 1.01 & 4.635 & 8.780 & 98.0 & 206.9 & 1.07 \\
$\Gamma_{1,2}$ & 2.07 & -0.006 & 0.093 & 100.0 & 2240.9 & 1.01 & -0.668 & 1.923 & 0.0 & 351.0 & 1.01 & -1.000 & 4.293 & 0.0 & 3600.0 & \multicolumn{1}{c}{--} \\
$\Gamma_{1,3}$ & -0.08 & 0.013 & 0.056 & 100.0 & 2228.3 & 1.01 & 12.458 & 1.030 & 0.0 & 307.6 & 1.03 & -1.000 & 0.006 & 0.0 & 3600.0 & \multicolumn{1}{c}{--} \\
$\Gamma_{1,4}$ & 0.0 & \multicolumn{1}{c}{--} & 0.077 & 100.0 & 2186.4 & 1.01 & \multicolumn{1}{c}{--} & 0.085 & 90.9 & 246.5 & 1.03 & \multicolumn{1}{c}{--} & 0.000 & 100.0 & 3600.0 & \multicolumn{1}{c}{--} \\
$\Gamma_{2,1}$ & -1.93 & 0.034 & 0.082 & 95.2 & 2236.8 & 1.02 & -0.701 & 1.891 & 0.0 & 257.5 & 1.01 & -1.000 & 3.727 & 0.0 & 3600.0 & \multicolumn{1}{c}{--} \\
$\Gamma_{2,2}$ & 0.3 & 0.043 & 0.076 & 100.0 & 2224.9 & 1.02 & -1.289 & 0.433 & 0.0 & 235.5 & 1.01 & 0.144 & 0.502 & 88.9 & 236.4 & 1.08 \\
$\Gamma_{2,3}$ & 0.16 & -0.015 & 0.042 & 100.0 & 2261.6 & 1.01 & -0.575 & 0.055 & 100.0 & 293.6 & 1.09 & -1.000 & 0.026 & 0.0 & 3600.0 & \multicolumn{1}{c}{--} \\
$\Gamma_{2,4}$ & 0.08 & 0.021 & 0.084 & 100.0 & 2349.9 & 1.02 & -3.199 & 0.094 & 81.8 & 326.7 & 1.05 & -1.000 & 0.007 & 0.0 & 3600.0 & \multicolumn{1}{c}{--} \\
$\Gamma_{3,1}$ & -0.12 & -0.013 & 0.094 & 100.0 & 2190.4 & 1.01 & -13.191 & 2.853 & 9.1 & 260.3 & 1.03 & -1.000 & 0.016 & 0.0 & 3600.0 & \multicolumn{1}{c}{--} \\
$\Gamma_{3,2}$ & 0.25 & -0.015 & 0.051 & 100.0 & 2231.2 & 1.02 & 5.596 & 2.116 & 0.0 & 246.5 & 1.02 & -1.000 & 0.063 & 0.0 & 3600.0 & \multicolumn{1}{c}{--} \\
$\Gamma_{3,3}$ & 0.4 & 0.038 & 0.048 & 100.0 & 1465.0 & 1.01 & 0.271 & 0.063 & 100.0 & 269.5 & 1.02 & 24.307 & 149.880 & 0.0 & 1269.7 & 1.01 \\
$\Gamma_{3,4}$ & -2.47 & 0.021 & 0.014 & 100.0 & 1452.6 & 1.02 & -0.215 & 0.326 & 50.0 & 244.5 & 1.01 & -1.000 & 6.097 & 0.0 & 3600.0 & \multicolumn{1}{c}{--} \\
$\Gamma_{4,1}$ & 0.0 & \multicolumn{1}{c}{--} & 0.013 & 100.0 & 2145.7 & 1.02 & \multicolumn{1}{c}{--} & 0.848 & 0.0 & 346.1 & 1.02 & \multicolumn{1}{c}{--} & 0.000 & 100.0 & 3600.0 & \multicolumn{1}{c}{--} \\
$\Gamma_{4,2}$ & 0.12 & -0.094 & 0.041 & 100.0 & 2246.5 & 1.02 & 2.402 & 0.137 & 54.5 & 312.5 & 1.02 & -1.000 & 0.015 & 0.0 & 3600.0 & \multicolumn{1}{c}{--} \\
$\Gamma_{4,3}$ & 2.43 & 0.007 & 0.024 & 95.2 & 2441.7 & 1.01 & -0.340 & 0.683 & 0.0 & 335.5 & 1.02 & -1.000 & 5.905 & 0.0 & 3600.0 & \multicolumn{1}{c}{--} \\
$\Gamma_{4,4}$ & 0.9 & 0.087 & 0.041 & 100.0 & 2462.0 & 1.01 & 0.196 & 0.068 & 100.0 & 340.5 & 1.01 & 18.863 & 105.939 & 1.0 & 447.3 & 1.05 \\
\addlinespace[0.25em]
\multicolumn{17}{@{}l}{\textit{$\boldsymbol\alpha$ parameters}} \\
\midrule
$\alpha_{1}$ & 0.5 & 0.123 & 0.009 & 100.0 & 1645.9 & 1.03 & \multicolumn{1}{c}{--} & \multicolumn{1}{c}{--} & \multicolumn{1}{c}{--} & \multicolumn{1}{c}{--} & \multicolumn{1}{c}{--} & 0.180 & 0.013 & 91.0 & 284.6 & 1.04 \\
$\alpha_{2}$ & -0.3 & 0.056 & 0.002 & 91.7 & 1285.3 & 1.01 & \multicolumn{1}{c}{--} & \multicolumn{1}{c}{--} & \multicolumn{1}{c}{--} & \multicolumn{1}{c}{--} & \multicolumn{1}{c}{--} & -0.189 & 0.004 & 42.0 & 572.4 & 1.02 \\
$\alpha_{3}$ & 0.2 & -0.022 & 0.001 & 100.0 & 1763.9 & 1.00 & \multicolumn{1}{c}{--} & \multicolumn{1}{c}{--} & \multicolumn{1}{c}{--} & \multicolumn{1}{c}{--} & \multicolumn{1}{c}{--} & 0.378 & 0.006 & 9.0 & 1013.5 & 1.01 \\
$\alpha_{4}$ & -0.4 & -0.051 & 0.001 & 100.0 & 1220.9 & 1.01 & \multicolumn{1}{c}{--} & \multicolumn{1}{c}{--} & \multicolumn{1}{c}{--} & \multicolumn{1}{c}{--} & \multicolumn{1}{c}{--} & -0.140 & 0.004 & 38.0 & 724.2 & 1.02 \\
\addlinespace[0.25em]
\multicolumn{17}{@{}l}{\textit{$\boldsymbol\Lambda$ parameters}} \\
\midrule
$\lambda_{1}$ & 0.8 & -0.009 & 0.005 & 100.0 & 1785.3 & 1.01 & 0.984 & 0.668 & 0.0 & 898.0 & 1.00 & 0.109 & 0.019 & 97.0 & 531.6 & 1.02 \\
$\lambda_{2}$ & 1.2 & -0.134 & 0.030 & 90.5 & 1766.3 & 1.01 & 0.585 & 0.497 & 0.0 & 779.0 & 1.00 & -0.017 & 0.021 & 99.0 & 429.5 & 1.03 \\
$\lambda_{3}$ & 1.1 & -0.126 & 0.024 & 95.2 & 1840.8 & 1.01 & 0.810 & 0.843 & 0.0 & 667.0 & 1.01 & 0.011 & 0.016 & 97.0 & 458.5 & 1.02 \\
$\lambda_{4}$ & 0.9 & -0.032 & 0.007 & 95.2 & 1764.8 & 1.01 & 1.036 & 0.887 & 0.0 & 817.5 & 1.00 & 0.524 & 0.257 & 18.2 & 583.8 & 1.02 \\
$\lambda_{5}$ & 1.4 & -0.200 & 0.081 & 89.0 & 1850.0 & 1.01 & 0.442 & 0.398 & 0.0 & 510.0 & 1.00 & 0.240 & 0.166 & 70.7 & 449.5 & 1.03 \\
$\lambda_{6}$ & 1.0 & -0.114 & 0.016 & 85.7 & 1575.1 & 1.01 & 0.922 & 0.850 & 0.0 & 558.5 & 1.01 & 0.477 & 0.258 & 17.2 & 365.3 & 1.02 \\
$\lambda_{7}$ & 0.7 & 0.066 & 0.010 & 85.7 & 1130.8 & 1.00 & 0.199 & 0.062 & 50.0 & 973.5 & 1.00 & 0.076 & 0.012 & 91.9 & 1051.2 & 1.01 \\
$\lambda_{8}$ & 1.1 & -0.011 & 0.005 & 100.0 & 1764.7 & 1.02 & 0.175 & 0.061 & 50.0 & 498.0 & 1.00 & 0.057 & 0.016 & 94.9 & 502.4 & 1.03 \\
$\lambda_{9}$ & 0.9 & 0.052 & 0.007 & 95.2 & 1900.5 & 1.01 & 0.268 & 0.058 & 0.0 & 624.5 & 1.00 & 0.111 & 0.019 & 85.9 & 635.6 & 1.01 \\
$\lambda_{10}$ & 1.2 & -0.031 & 0.006 & 95.2 & 1687.2 & 1.01 & 0.689 & 0.684 & 0.0 & 398.5 & 1.01 & 0.149 & 0.043 & 68.7 & 511.4 & 1.02 \\
$\lambda_{11}$ & 0.8 & 0.023 & 0.004 & 95.2 & 1044.2 & 1.01 & 0.705 & 0.324 & 0.0 & 765.0 & 1.01 & 0.195 & 0.031 & 53.5 & 907.0 & 1.01 \\
$\lambda_{12}$ & 1.0 & 0.000 & 0.004 & 95.2 & 1815.8 & 1.01 & 0.570 & 0.330 & 0.0 & 699.0 & 1.00 & 0.177 & 0.042 & 56.6 & 647.9 & 1.02 \\
\bottomrule
\end{tabular*}
\begin{tablenotes}
\footnotesize
\item RB: relative bias, defined as
$
\mathrm{RB}
=
M^{-1}\sum_{m=1}^{M}
\frac{\hat{\theta}^{(m)}-\theta_0}{\theta_0},
$
where $m=1,\ldots,M$ indexes the simulation replications, $M$ is the total number of replications, $\hat{\theta}^{(m)}$ is the estimate from replication $m$, and $\theta_0$ is the true parameter value.
\item MSE: mean squared error, defined as
$
\mathrm{MSE}
=
M^{-1}\sum_{m=1}^{M}
\left(\hat{\theta}^{(m)}-\theta_0\right)^2.
$
\item CP: coverage probability, defined as
$
\mathrm{CP}
=
100 \times M^{-1}\sum_{m=1}^{M}
\mathbb{I}\!\left\{\theta_0 \in \mathrm{CI}^{(m)}\right\},
$
where $\mathrm{CI}^{(m)}$ is the credible interval from replication $m$.
\item ESS: effective sample size, measuring the amount of independent information in the posterior draws after accounting for autocorrelation.
\item $\hat{R}$: Gelman--Rubin diagnostic, measuring convergence across Markov chains; values close to 1 indicate good mixing and convergence.
\item A dash indicates that RB is undefined because the true parameter value is zero.
\end{tablenotes}
\end{threeparttable}
\end{table}

\section{Real-world Application}
\label{sec:als}
%
%
%
% The proposed CLOUD framework is designed as a general methodology for modeling irregular longitudinal categorical data and is not specific to any particular disease. 
Following \citet{Tran2020LatentOM}, we use ALS as a representative application to illustrate the practical utility of the proposed framework. ALS provides an informative case study because it exhibits many of the methodological challenges that motivate our work, including heterogeneous disease progression, irregular follow-up schedules, and complex interactions among multiple functional domains. Although we focus on ALS here, the proposed framework is broadly applicable to other longitudinal biomedical studies with similar data structures.

\subsection{Longitudinal ALS Study}
ALS, 
%also known as motor neuron disease or Lou Gehrig's disease, 
is a rapidly progressive and fatal neurodegenerative disorder. The disease is characterized by the gradual degeneration and death of both upper and lower motor neurons, which disrupt the critical signaling pathways between the brain and voluntary muscles
%. As these neurons deteriorate, 
and cause patients to experience progressive muscle weakness that spreads across neurological regions, eventually compromising voluntary movement and respiratory function \citep{araki2021amyotrophic}.

The clinical manifestations of ALS are notably heterogeneous, leading to diverse progression dynamics. 
%First, the disease onset can occur in various anatomical domains, including bulbar, lumbar, and cervical regions. While the disease may originate focally, it typically follows a pattern of contiguous anatomical spread, progressing from the initial site to adjacent neuroanatomical regions \citep{araki2021amyotrophic, katerelos2024cognitive}. This spatial pattern of onset and spread fundamentally shapes the clinical trajectory, with 
First, the disease aggressiveness varies significantly by the initial site of onset. Bulbar-onset ALS represents a highly aggressive phenotype, in which early upper airway dysfunction rapidly compounds thoracic decline, whereas lumbar-onset disease often follows a more protracted clinical course due to its anatomical distance from respiratory motor centers \citep{chio2009prognostic, keon2021destination}. Besides, baseline physiological reserve further modulates progression, particularly through respiratory and nutritional status. Lower baseline forced vital capacity (FVC) reflects compromised respiratory capacity and is consistently associated with more rapid disease progression and shorter tracheostomy-free survival
%. Respiratory-status reviews identify lower FVC at diagnosis as one of the most robust prognostic indicators in ALS 
\citep{chio2009prognostic, daghlas2021relative}. Similarly, baseline body mass index (BMI) reflects nutritional and metabolic reserve, with preserved 
%or moderately elevated 
BMI linked to longer survival \citep{Dardiotis2018BodyMI}.

ALS progression reflects interconnected degeneration across functional motor networks, producing cumulative decline in bulbar, fine motor, gross motor, and respiratory function \citep{ravits2009als, fujimura2011onset}. These domains interact functionally: bulbar weakness impairs airway protection and secretion clearance, respiratory weakness further reduces cough effectiveness, and limb and axial weakness limits mobility, transfers, and the ability to compensate for deficits in other domains. Bulbar involvement causes dysarthria and dysphagia; cervical motor neuron loss reduces hand dexterity and strength; lumbar and axial involvement impairs gait and ambulation; and thoracic and diaphragmatic degeneration leads to hypoventilation and eventual ventilatory failure \citep{araki2021amyotrophic, Yunusova2019ClinicalMO, Niedermeyer2019RespiratoryFI}.

\subsection{Experiment set up}
Clinically, the Revised ALS Functional Rating Scale (ALSFRS-R), a 12-item ordinal instrument, is the most widely used measure of functional impairment in ALS \citep{Rooney2016WhatDT, atassi2014pro} (see Web Table \ref{tab:alsfrs-r} for additional details). It is revised from the original ALS Functional Rating Scale (ALSFRS) to include more observation items and cover more underlying functional domains, which necessitates methods that could handle higher-dimensional latent space modeling compared with the method in \citet{Tran2020LatentOM} applied to ALSFRS data. It includes multivariate ordinal structure, well-recognized heterogeneity and interdependence in ALS progression and the revision, which provides a representative setting for demonstrating the joint modeling capabilities of the CLOUD framework.

We analyzed longitudinal ALSFRS-R data from the Pooled Resource Open-Access ALS Clinical Trials (PRO-ACT) database \citep{atassi2014pro}, treating the 12 ALSFRS-R items as the longitudinal observation $\mathbf{Y}$ with $K = 12$.
The PRO-ACT database provided harmonized, fully anonymized longitudinal ALSFRS-R measurements along with demographic and clinical information, including age, sex, treatment, disease-onset site, FVC, and BMI for more than 13,000 ALS patients enrolled in phase II and III clinical trials.  
Following \citet{Tran2020LatentOM}, baseline age, sex, and treatment assignment (active vs. placebo) were included as static-effect covariates $\mathbf{x}^{(1)}$ in the measurement model component to account for residual symptomatic variation. Building upon their framework, CLOUD additionally accommodated covariates that directly influence disease progression through the latent dynamic model. Motivated by the clinical evidence discussed above \citep{chio2009prognostic, daghlas2021relative, Dardiotis2018BodyMI,keon2021destination}, we therefore incorporated disease-onset site (bulbar vs. limb onset), baseline FVC, and baseline BMI as the dynamic-effect covariates $\mathbf{x}^{(2)}$, allowing these clinically important prognostic indicators to modify the latent disease trajectories over time rather than only the measurement residuals. This distinction illustrates a key advantage of CLOUD over existing continuous-time latent variable models, which generally do not allow patient-specific covariates to directly drive latent disease evolution.

To characterize the temporal evolution of the disease, the 12 ALSFRS-R items $\mathbf{Y}$ were grouped into $R=4$ clinically meaningful functional domains, represented by the continuous latent process $\boldsymbol\xi$. The relationship between $\mathbf{Y}$ and $\boldsymbol\xi$ was specified through a sparse factor loading matrix $\Lambda$ with the same structure as in Equation~\ref{eq:loading} in the simulation.

To ensure stable parameter estimation, we constructed a complete-case cohort by retaining patients with complete baseline covariate information and excluding observations with extreme covariate values (see Web Appendix \ref{appd:data_prep} for the selection criteria). The resulting analysis dataset comprised $N=657$ participants. All continuous covariates were standardized to mean 0 and standard deviation 1 prior to model fitting. Additional characteristics of the study cohort are summarized in Web Table \ref{tab:alsfrsr_summary}.

\subsection{Overall Model Fitting Performance}
\label{sec:overall_fitting}
Before evaluating the model fitting performance, we first verified our model assumption that observations represent the latent factors through our loading matrix $\boldsymbol\Lambda$ structure. We evaluated the extent to which the inferred latent factors were supported by the observed ALSFRS-R items using posterior item-factor correlations ($\rho_{kr}$) and the measures of local item dependence ($D_k$).
The posterior correlations (Web Table~\ref{tab:als_item_function_correlations}) demonstrated strong convergent validity; each ALSFRS-R item was most strongly associated with its prespecified latent function domain (e.g., bulbar items ranged from $0.83$ to $0.92$; the fine motor and gross motor items reached correlations as high as $0.96$). These results indicated that the proposed measurement model successfully recovered the intended latent functional structure. Furthermore, the local item dependence (Web Table~\ref{tab:als_local_dependence}) showed that most ALSFRS-R items exhibited relatively small $D_k$ values (e.g. Climbing stairs with $D_k = 0.04$) suggesting that the latent factors accounted for most of the observed item variation and that little residual dependence remained after conditioning on the latent disease processes. These combined results justified our assumption on the structure of the loading matrix $\boldsymbol\Lambda$.
 
We further evaluated the performance of the CLOUD model from three complementary perspectives: population-level prediction, posterior predictive calibration, and individual-level trajectory prediction.
At the population level, CLOUD accurately reproduced the observed disease progression patterns. As illustrated in Web Figure~\ref{fig:combined_longitudinal_overview}, the model-predicted aggregated mean trajectories closely followed the empirical average trajectories across the four functional domains throughout a 20-month follow-up period. 
Posterior predictive checks (PPCs) (Web Figure~\ref{fig:ppc_distribution} and Web Table~\ref{tab:cloud_ppc_domain}) further indicated close agreement between the observed and model-generated data.
On the 0--12 domain scales, the differences between predicted and observed mean scores were small: the model underestimated the bulbar mean by 0.148 points, and the gross motor mean by 0.314 points, while overestimating the fine motor and respiratory means by 0.089 and 0.052 points, respectively. These discrepancies represented between 0.43\% and 2.61\% of the full domain-score range. Distributional discrepancies were also modest, with total variation distances ranging from 0.023 for the respiratory domain to 0.057 for the gross motor domain, Wasserstein distances ranging from 0.052 to 0.314 score points, and histogram probability RMSE values ranging from 0.0048 to 0.0107. 
To further assess model calibration, we computed Posterior Predictive P-Values (PPP) (Web Table~\ref{tab:ppp_values}) \citep{gelman2013bayesian}.
The item-specific PPP values ($\mathrm{PPP}_k$) ranged from $0.389$ to $0.586$, with an overall PPP of $0.454$. All values were close to the optimal value of 0.5, indicating good agreement between the observed and model-replicated data. 

Finally, we evaluated CLOUD's ability to recover individual disease trajectories. The row-normalized confusion matrices for patients with at least five follow-up visits (Figure~\ref{fig:pop_validation_heatmap}) exhibited strong diagonal dominance. Across all four domains, the majority of predictions fall within a +/- 1 point margin of error, with the percentage ranging from 55\% to 99\% in bulbar domain, from 95\% to 100\% in fine motor domain, from 74\% to 97\% in gross motor domain and from 50\% to 99\% in respiratory domain, indicating accurate prediction of individual ordinal response categories over time. Representative subject-specific trajectory plots (Web Figure~\ref{fig:faceted_individual_fits}) further showed that the inferred latent processes representing disease progression could recover consistent trajectories with the observed longitudinal measurements.
\enlargethispage{-\baselineskip}
\subsection{Baseline Covariate Effects on Latent Factors}\label{sec:covariate_effects}

To account for population heterogeneity, the framework incorporated patient-specific baseline characteristics. Empirical stratifications (Web Figure~\ref{fig:empirical_covariates}) highlighted distinct phenotypic trajectories: Bulbar onset patients presented with steeper localized decline, while higher baseline FVC globally preserved function across domains.

The model quantified these effects within its transition equations. The equations in Table~\ref{tab:transition_models} and Figure~\ref{fig:covariate_forest_plot} confirmed that Bulbar onset exerted a substantial negative effect on both the baseline latent state and its temporal evolution ($[-0.392(0.100) - 0.075(0.036)t] \times \mathrm{Onset}$). This corroborates longitudinal analyses showing that bulbar-onset disease deteriorates earlier and faster in the bulbar subscore than spinal-onset disease \citep{Rooney2016WhatDT}. 

In contrast, Baseline FVC functions as a systemic protective factor, yielded positive intercept and time-interaction coefficients across all domains, peaking in the respiratory domain. This was clinically coherent, as FVC was 
%a standard respiratory measure 
known to robustly predict overall disease progression \citep{daghlas2021relative}. Baseline BMI emerged as a weaker independent linear predictor; its independent impact on the velocity of decline was marginal when fully adjusting for FVC, onset site, and cross-domain spread, though higher baseline BMI remained associated with better preservation of the Gross Motor domain \citep{Dardiotis2018BodyMI}.

\subsection{Interdependency Between Latent Factors}\label{sec:interdependency}

A defining hallmark of ALS is the progressive anatomical spread of motor neuron degeneration. We analyzed these cross-domain interdependencies both empirically and through the transition dynamics of the CLOUD model.
Empirical data (Web Figure~\ref{fig:empirical_interaction_velocity}) revealed a clear accelerating trend: severe impairment in one domain was associated with accelerated rates of decline for another. The posterior correlations (Web Table~\ref{tab:als_item_function_correlations}) also showed moderate cross-domain correlations reflecting clinically recognized interactions among neurological systems \citep{Niedermeyer2019RespiratoryFI,araki2021amyotrophic}. For example, respiratory items showed moderate correlations ($0.39$ to $0.46$) with the bulbar latent factor, consistent with the influence of bulbar muscle dysfunction on respiratory mechanics \citep{Niedermeyer2019RespiratoryFI}. 

The CLOUD model captured these interactions through a dynamic system summarized in Table~\ref{tab:transition_models}. The diagonal autoregressive elements confirmed strong state persistence for the fine motor, gross motor, and bulbar domains (coefficients $\ge 0.800$), whereas the respiratory domain displayed lower persistence ($0.485$, $\mathrm{SE}=0.055$), reflecting its typically steeper late-stage decline.  Crucially, the off-diagonal elements ($\boldsymbol{\Gamma}$ matrix) revealed statistically meaningful directed interactions as shown in Figure~\ref{fig:interaction_forest_plot} and reflected future changes through $\exp(-\boldsymbol{\Gamma})$, where current Bulbar state exerted a positive driving effect on future respiratory state ($0.112$, $\mathrm{SE}=0.058$). This dynamic association aligned with clinical observations that severe bulbar dysfunction reduced the tolerance and effectiveness of noninvasive ventilation \citep{Niedermeyer2019RespiratoryFI,Sancho2023HowTI}. Current fine motor function significantly impacted both future bulbar ($-0.056$, $\mathrm{SE}=0.036$) and gross motor decline ($-0.070$, $\mathrm{SE}=0.031$). The observed asymmetry between fine-motor and gross-motor coupling was consistent with the known heterogeneity of limb involvement in ALS \citep{swinnen2024clinical}. By jointly modeling these domains, the framework quantitatively captured how localized neurodegeneration systematically cascaded across physiological regions.

\begin{table}[htbp]
\centering
\caption{Estimated one-month-transition prediction models by functional domain
and model component.}
\label{tab:transition_models}

\footnotesize
\setlength{\tabcolsep}{4pt}
\renewcommand{\arraystretch}{1.12}

\begin{tabularx}{\linewidth}
{@{}p{0.18\linewidth}Xp{0.24\linewidth}@{}}
\toprule
Predicted outcome
& Contribution to the prediction
& Model component \\
\midrule

% ======================== Bulbar ========================

\(\widehat{\mathrm{bulbar}}_{t+1}={}\)
&
\(\begin{aligned}[t]
 &0.813(0.034)\,\mathrm{bulbar}_t
 -0.056(0.036)\,\mathrm{fine\_motor}_t \\
 &{}-0.104(0.039)\,\mathrm{gross\_motor}_t
 +0.084(0.050)\,\mathrm{respiratory}_t
\end{aligned}\)
&
Lagged latent factor domains
\\

&
\(\begin{aligned}[t]
 &{}+[-0.392(0.100)-0.075(0.036)t]\,\mathrm{Onset} \\
 &{}+[0.123(0.040)+0.021(0.017)t]\,\mathrm{FVC} \\
 &{}+[-0.066(0.049)-0.007(0.013)t]\,\mathrm{BMI}
\end{aligned}\)
&
Dynamic-effect covariates
\\

&
\(-0.742(0.047)-0.226(0.052)t\)
&
Baseline trend
\\

\midrule

% ====================== Fine motor ======================

\(\widehat{\mathrm{fine\_motor}}_{t+1}={}\)
&
\(\begin{aligned}[t]
 &0.037(0.044)\,\mathrm{bulbar}_t
 +0.800(0.021)\,\mathrm{fine\_motor}_t \\
 &{}+0.024(0.036)\,\mathrm{gross\_motor}_t
 -0.072(0.049)\,\mathrm{respiratory}_t
\end{aligned}\)
&
Lagged latent factor domains
\\

&
\(\begin{aligned}[t]
 &{}+[0.021(0.108)+0.030(0.040)t]\,\mathrm{Onset} \\
 &{}+[0.142(0.040)+0.043(0.016)t]\,\mathrm{FVC} \\
 &{}+[0.028(0.047)+0.002(0.012)t]\,\mathrm{BMI}
\end{aligned}\)
&
Dynamic-effect covariates
\\

&
\(-1.148(0.052)-0.256(0.041)t\)
&
Baseline trend
\\

\midrule

% ====================== Gross motor =====================

\(\widehat{\mathrm{gross\_motor}}_{t+1}={}\)
&
\(\begin{aligned}[t]
 &-0.008(0.042)\,\mathrm{bulbar}_t
 -0.070(0.031)\,\mathrm{fine\_motor}_t \\
 &{}+0.859(0.025)\,\mathrm{gross\_motor}_t
 -0.049(0.047)\,\mathrm{respiratory}_t
\end{aligned}\)
&
Lagged latent factor domains
\\

&
\(\begin{aligned}[t]
 &{}+[0.113(0.107)+0.023(0.037)t]\,\mathrm{Onset} \\
 &{}+[0.176(0.039)+0.052(0.015)t]\,\mathrm{FVC} \\
 &{}+[-0.057(0.045)-0.012(0.012)t]\,\mathrm{BMI}
\end{aligned}\)
&
Dynamic-effect covariates
\\

&
\(-1.147(0.052)-0.305(0.040)t\)
&
Baseline trend
\\

\midrule

% ====================== Respiratory =====================

\(\widehat{\mathrm{respiratory}}_{t+1}={}\)
&
\(\begin{aligned}[t]
 &0.112(0.058)\,\mathrm{bulbar}_t
 -0.026(0.043)\,\mathrm{fine\_motor}_t \\
 &{}+0.089(0.049)\,\mathrm{gross\_motor}_t
 +0.485(0.055)\,\mathrm{respiratory}_t
\end{aligned}\)
&
Lagged latent factor domains
\\

&
\(\begin{aligned}[t]
 &{}+[0.214(0.161)+0.145(0.091)t]\,\mathrm{Onset} \\
 &{}+[0.335(0.066)+0.147(0.038)t]\,\mathrm{FVC} \\
 &{}+[-0.111(0.078)-0.044(0.039)t]\,\mathrm{BMI}
\end{aligned}\)
&
Dynamic-effect covariates
\\

&
\(-1.153(0.080)-0.440(0.078)t\)
&
Baseline trend
\\

\bottomrule
\end{tabularx}

\vspace{2pt}
\parbox{0.98\linewidth}{\footnotesize
\textit{Note:} Within each outcome, the three contributions are summed from
top to bottom to obtain the predicted score at time \(t+1\). Values in
parentheses are standard errors.
}
\end{table}

\begin{figure}[htbp]
    \centering
    % A 12x12 inch square plot typically requires full page width in a two-column format to maintain readability of the matrix text
    \includegraphics[width=\textwidth]{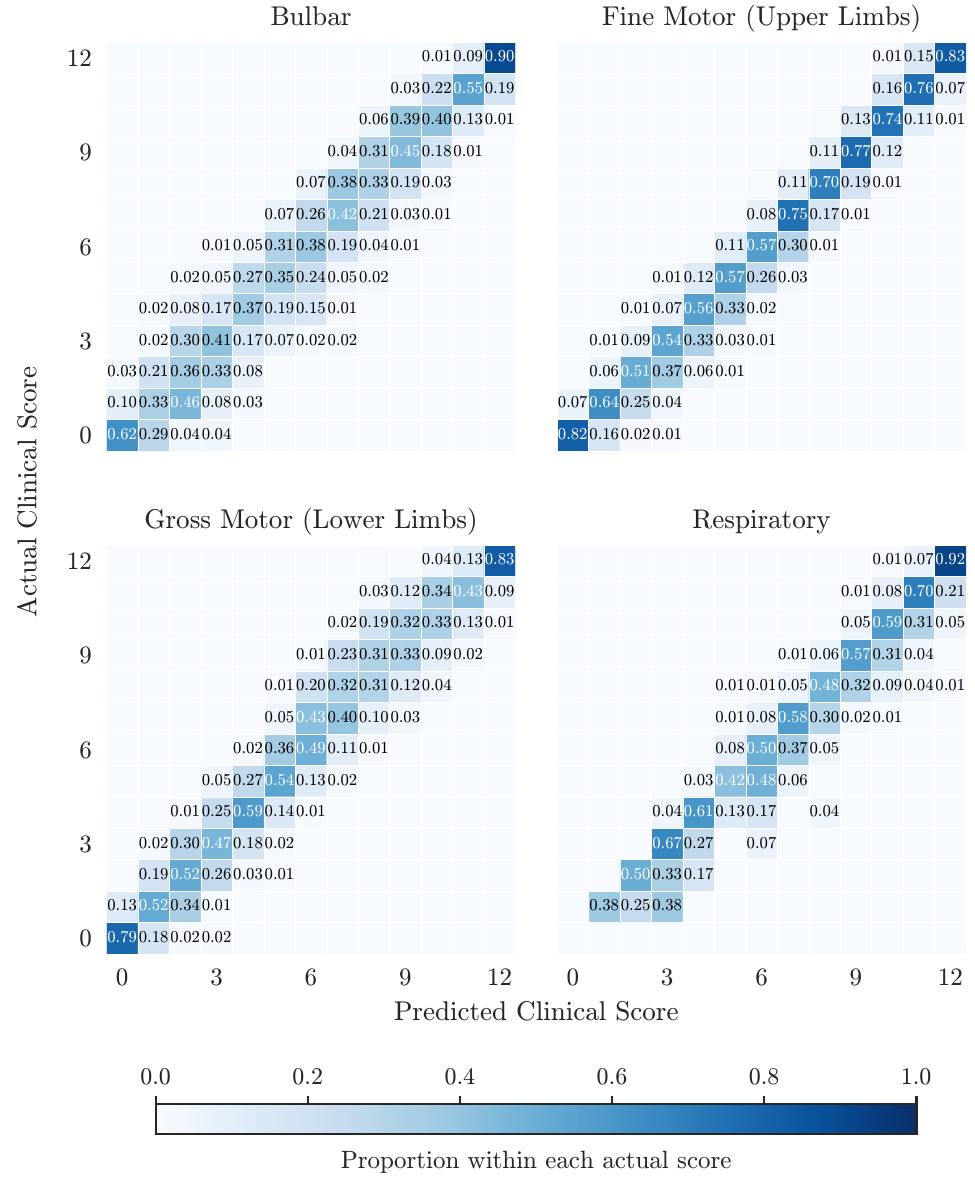}
    \caption{\textbf{Population-Level Predictive Validation.} Row-normalized confusion matrices comparing actual versus predicted clinical scores across the four functional domains: Bulbar, Fine Motor, Gross Motor, and Respiratory. The evaluation is restricted to a cohort of long-term patients ($\ge 5$ longitudinal visits) to assess the model's capacity to capture extended disease progression trajectories. Color intensity represents the proportion of predictions falling into a specific predicted score bin given the actual score, with darker shading along the diagonal indicating high predictive accuracy.}
    \label{fig:pop_validation_heatmap}
\end{figure}

\begin{figure}[htbp]
    \centering
    \includegraphics[width=\textwidth]{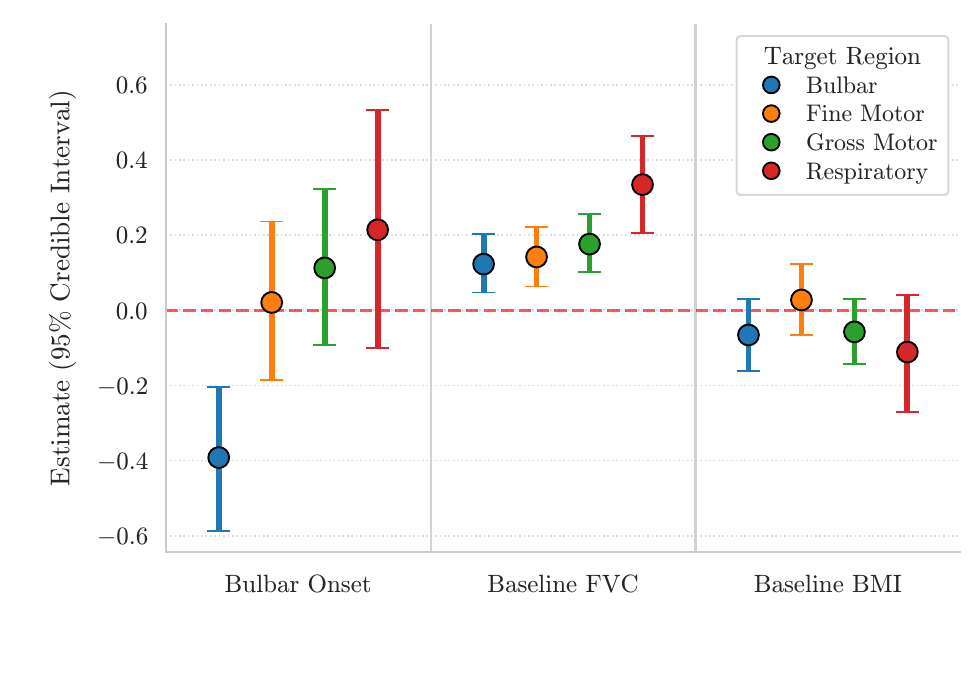}
    \caption{\textbf{Direct Covariate Effects on Latent Domain Trajectories.} Forest plot displaying the estimated direct effects ($\boldsymbol{\Phi}$ parameters) of key baseline covariates (Bulbar Onset, Baseline FVC, and Baseline BMI) on the longitudinal progression of the four clinical domains. The x-axis groups the estimates by the specific baseline covariate, while the colors denote the targeted functional region being influenced. Points represent the posterior mean estimates derived from the CLOUD model, with vertical lines spanning the 95\% credible intervals. Estimates where the 95\% credible interval does not overlap the red dashed zero-reference line indicate statistically meaningful associations between the baseline patient characteristic and the rate of decline in that specific physiological region.}
    \label{fig:covariate_forest_plot}
\end{figure}

\begin{figure}[htbp]
    \centering
    \includegraphics[width=\textwidth]{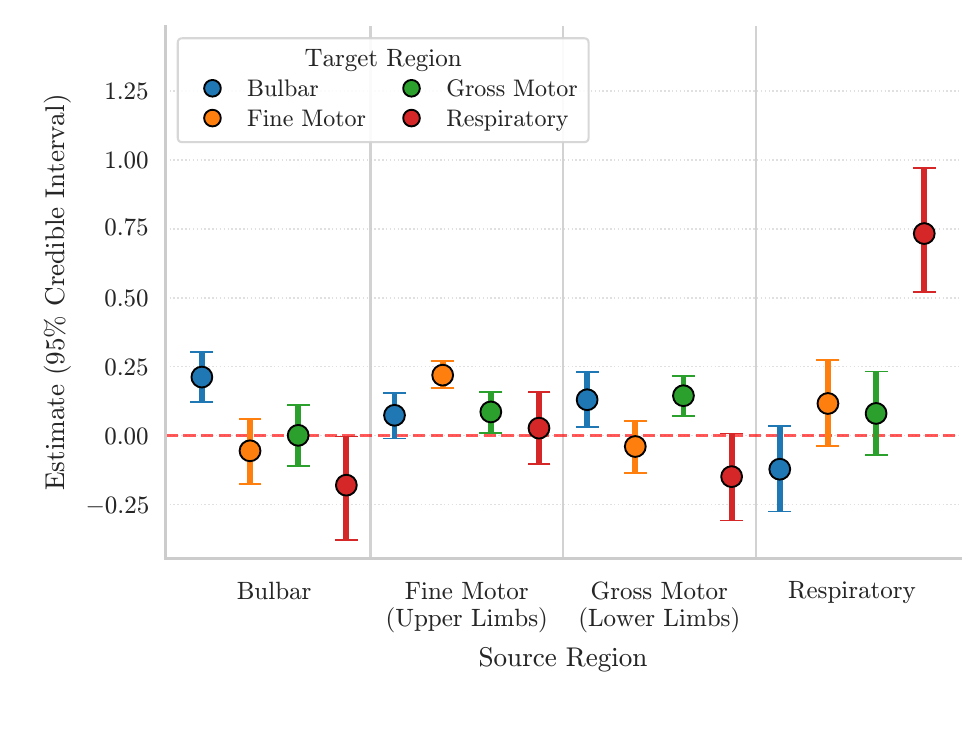}
    \caption{\textbf{Latent Cross-Domain Interactions.} Forest plot illustrating the estimated directed coupling effects ($\boldsymbol{\Gamma}$ matrix) among the four functional domains. The x-axis denotes the 'Source Region' driving the interaction, while the colors indicate the corresponding 'Target Region' being affected. Points represent the posterior mean estimates of the interaction parameters from the CLOUD model, with vertical lines spanning the 95\% credible intervals. Estimates where the 95\% credible interval does not overlap the red dashed zero-reference line indicate statistically meaningful inter-domain dependencies, revealing how functional impairment in one physiological region systematically influences the rate of decline in another.}
    \label{fig:interaction_forest_plot}
\end{figure}

\section{Discussion}
\label{sec:dis}
In this paper, we proposed CLOUD, a continuous-time latent variable framework for multivariate longitudinal categorical data observed at irregular time points. 
CLOUD extends the existing stationary OU-based latent variable models \citep{Tran2020LatentOM,Abbott2024ABJ} by simultaneously accommodating heterogeneous disease progression through introducing a time-inhomogeneous OU process with a covariate-dependent latent shifting mean function and modeling interactions among latent variables in arbitrary-dimensional latent spaces via a flexible parameterization of the drift matrix. These methodological developments substantially broaden the applicability of continuous-time latent variable models while preserving their interpretability and theoretical tractability. 

Three methodological contributions are made in this work. First, a covariate-dependent, time-inhomogeneous latent OU process enables heterogeneous disease progression modeling. Existing continuous-time latent variable models have largely relied on stationary OU processes, requiring all individuals to evolve toward a common latent equilibrium or relying on ad hoc adjustments when applied to heterogeneous longitudinal data. By allowing the latent mean to evolve as a function of baseline covariates, CLOUD directly models subject-specific disease progression within the latent dynamic system. This extension substantially increases the realism and interpretability of continuous-time latent process models, particularly for biomedical applications in which progression rates differ systematically across patients. Second, a scalable parameterization of the latent drift matrix enables unrestricted modeling of latent interactions in arbitrary dimensions. Previous OU-based latent space models have generally been limited to relatively low-dimensional spaces or have imposed structural constraints, such as diagonal drift matrices, to maintain computational feasibility. Our reparameterization removes this restriction while preserving the stability required by the OU process. Consequently, CLOUD allows investigators to study complex cross-domain temporal dependence in high-dimensional latent systems without sacrificing computational tractability or theoretical validity. Last, an equally important contribution is the theoretical foundation established for the proposed framework. We derived sufficient conditions that guarantee well-posed continuous-time latent dynamics and established parameter identifiability under the proposed model specification. These theoretical results provide formal justification for the proposed reparameterization and distinguish CLOUD from heuristic model extensions that modify stationary OU processes without corresponding theoretical guarantees.

Our simulation studies demonstrate that each methodological contribution provides measurable inferential benefits. Explicitly modeling a covariate-dependent latent shifting mean function substantially improves estimation when progression is heterogeneous, whereas modeling unrestricted latent interactions becomes increasingly important as the latent dimension grows. Beyond that, the ALS analysis illustrates how these methodological advances translate into practical scientific inference. The proposed framework recovered clinically meaningful heterogeneous progression patterns and dynamic interactions among functional domains \citep{Czapliski2005ForcedVC, reich2013body, Rooney2016WhatDT} while providing improved predictive performance relative to competing approaches and existing work (Details provided in Web Appendix \ref{appd:baseline_comparison}). Although ALS served as a motivating example, the methodology is broadly applicable to longitudinal studies involving irregular observations, multivariate categorical outcomes, and interacting latent processes. More broadly, CLOUD expands the scope of continuous-time latent variable modeling. Continuous-time models offer important advantages for irregular longitudinal studies because they naturally account for unequally spaced observations while providing mechanistic interpretations of temporal evolution. By removing two major methodological limitations (stationary latent dynamics and low-dimensional interaction modeling), CLOUD substantially broadens the range of scientific questions that can be addressed within the continuous-time latent modeling framework. 

Several limitations were worth noting. First, the current formulation assumed a linear shifting mean function. Although this assumption provides interpretability and analytical tractability, a more complex formulation might be needed when underlying dynamics are strongly nonlinear, nonstationary, or stage-dependent. 
Second, posterior estimation remains computationally demanding because repeated evaluation of matrix exponentials and the continuous Lyapunov equation is required during Bayesian sampling. While the proposed non-centered parameterization improves computational efficiency, further advances in scalable Bayesian computation or approximation inference would facilitate larger scale applications.

Taken together, CLOUD provides a general statistical framework for continuous-time latent process modeling of multivariate longitudinal categorical data. 
The proposed methodology is applicable to a broad range of biomedical, behavioral, and social science studies characterized by irregular observation schedules, heterogeneous progression, and interacting latent processes. We hope that the theoretical developments introduced here will facilitate broader adoption of continuous-time latent variable models in modern longitudinal research.

\section{Supplementary materials}

Supplementary materials is available at Biometrics online. Web Appendices, Tables and Figures in sections \ref{sec:model}, \ref{sec:simu}, \ref{sec:als} are available with this paper at the Biometrics website on Oxford Academic. The simulation code, generated data and intermediate results are available at the \hyperlink{https://github.com/xiaoqinghuanglab/CLOUD}{github repository} (https://github.com/xiaoqinghuanglab/CLOUD). The supplementary materials include numerous results, some of which are explicily referenced in the main text where relevant. While others are not, all figure and table captions are self-contained.

\section{Funding}
This study is supported by an award to X.H from the Ralph W. and Grace M. Showalter Research Trust and the Indiana University School of Medicine. This research is also supported by the Indiana University Seed Grant to X.H. This research is also supported by the National Institutes of Health R35GM147241 and R01AG098161 to Y.W. and the team. This research is also supported in part by Lilly Endowment, Inc., through its High-Performance Computing support for the Indiana University Pervasive Technology Institute.

Role of the Funder/Sponsor: The funding organizations had no role in the design and conduct of the study; collection, management, analysis, and interpretation of the data; preparation, review, or approval of the manuscript and decision to submit it for publication.

\section{Conflict of interest}
None declared.

\section{Data availability}
Data are available via the \hyperlink{https://ncri1.partners.org/PROACT}{Pro-Act database} (https://ncri1.partners.org/PROACT).

\clearpage
\bibliographystyle{plainnat}
\bibliography{ref}

\clearpage
\appendix

\begin{center}
  {\Large\bfseries Supplementary Materials}\par
  \vspace{1.5em}
  {\large for}\par
  \vspace{1.5em}
  {\Large\bfseries The Continuous Latent Ornstein--Uhlenbeck Dynamics Framework:}\par
  \vspace{0.5em}
  {\Large\bfseries A Scalable Latent Process Model for Multivariate Longitudinal Categorical Data}\par
  \vspace{2em}
  {\large Zhennan Wu\textsuperscript{1}, Yijie Wang\textsuperscript{1}, and Xiaoqing Huang\textsuperscript{2}}\par
  \vspace{1.5em}
  {\small
  \textsuperscript{1}Department of Computer Science, Luddy School of Informatics, Computing, and Engineering,\\
  Indiana University Bloomington, Bloomington, Indiana, USA\\[0.5em]
  \textsuperscript{2}Department of Biostatistics and Health Data Science,\\
  Indiana University School of Medicine, Indianapolis, Indiana, USA}
\end{center}

\vspace{2em}

\setcounter{section}{0}
\setcounter{equation}{0}
\setcounter{figure}{0}
\setcounter{table}{0}

\renewcommand{\thesection}{\Alph{section}}
\renewcommand{\theequation}{\thesection.\arabic{equation}}
\renewcommand{\thefigure}{\thesection.\arabic{figure}}
\renewcommand{\thetable}{\thesection.\arabic{table}}
\renewcommand{\theHsection}{supp.\Alph{section}}
\renewcommand{\theHequation}{supp.\Alph{section}.\arabic{equation}}
\renewcommand{\theHfigure}{supp.\Alph{section}.\arabic{figure}}
\renewcommand{\theHtable}{supp.\Alph{section}.\arabic{table}}

\newcommand{\suppsection}[2]{%
  \refstepcounter{section}%
  \section*{Supplementary Appendix \thesection\quad #1}%
  \addcontentsline{toc}{section}{Supplementary Appendix \thesection: #1}%
  \label{#2}%
  \setcounter{equation}{0}%
  \setcounter{figure}{0}%
  \setcounter{table}{0}%
}

\suppsection{The pullback stability of the time-inhomogeneous OU process}{appd:tiou_proof}
The dynamic model in Equation~\eqref{eq:ou} specifies the latent process through
its Gaussian transition kernel. We therefore take the conditional law in
Equation~\eqref{eq:ou} as the primitive object and verify that it defines a
coherent, stable, time-inhomogeneous Gaussian Markov process.

Fix subject \(i\) and set
\begin{align}
    \mathbf v_i
    &:=
    \boldsymbol{\Phi}\mathbf{x}^{(2)}_i
    +
    \boldsymbol{\alpha},
    &
    \boldsymbol{\mu}_i(t)
    &:=
    \boldsymbol{\mu}(t,\mathbf{x}^{(2)}_i)
    =
    \mathbf v_i t .
\end{align}
For \(h>0\), define
\begin{align}
    \mathbf E_h
    :=
    e^{-\boldsymbol{\Gamma}h},
    \qquad
    \mathbf Q(h)
    :=
    \boldsymbol{\Omega}
    -
    \mathbf E_h\boldsymbol{\Omega}\mathbf E_h^{\top} .
\end{align}
Then, for any \(s<t\), the transition kernel is
\begin{align}
    P^{(i)}_{s,t}(\mathbf x,\cdot)
    =
    \mathcal N
    \left(
        \boldsymbol{\mu}_i(t)
        +
        \mathbf E_{t-s}
        \left[\mathbf x-\boldsymbol{\mu}_i(s)\right],
        \mathbf Q(t-s)
    \right).
    \label{eq:app_transition_kernel}
\end{align}

\begin{lem}[Stability of the covariate-dependent moving-mean OU transition kernel]
\label{lemma:conditional_kernel_stability}
Suppose that \(\boldsymbol{\Gamma}\) is positive stable,
\(\boldsymbol{\Omega}=\boldsymbol{\Omega}^{\top}\succ0\), and
\begin{align}
    \boldsymbol{\Sigma}
    :=
    \boldsymbol{\Gamma}\boldsymbol{\Omega}
    +
    \boldsymbol{\Omega}\boldsymbol{\Gamma}^{\top}
    \succ0 .
\end{align}
Then the kernels \(P^{(i)}_{s,t}\) in
Equation~\eqref{eq:app_transition_kernel} have the following properties.

\begin{enumerate}
    \item For every \(h>0\),
    \begin{align}
        \mathbf Q(h)
        =
        \int_0^h
        e^{-\boldsymbol{\Gamma}\tau}
        \boldsymbol{\Sigma}
        e^{-\boldsymbol{\Gamma}^{\top}\tau}
        \,\mathrm d\tau .
        \label{eq:app_q_integral}
    \end{align}
    Hence \(\mathbf Q(h)\succ0\), so the transition distribution is
    nondegenerate. Moreover,
    \begin{align}
        \mathbf Q(h)
        =
        \boldsymbol{\Sigma}h
        +
        o(h),
        \qquad
        h\downarrow0,
    \end{align}
    and \(\boldsymbol{\Sigma}\) is the infinitesimal covariance matrix.

    \item The kernels satisfy the Chapman--Kolmogorov equation:
    for every \(s<u<t\),
    \begin{align}
        P^{(i)}_{s,t}
        =
        P^{(i)}_{s,u}P^{(i)}_{u,t} .
    \end{align}
    Thus Equation~\eqref{eq:app_transition_kernel} defines a coherent
    time-inhomogeneous Gaussian Markov process.

    \item The Gaussian family
    \begin{align}
        \pi_{i,t}
        =
        \mathcal N
        \left(
            \boldsymbol{\mu}_i(t),
            \boldsymbol{\Omega}
        \right)
    \end{align}
    is an evolution system of measures; that is,
    \begin{align}
        \pi_{i,s}P^{(i)}_{s,t}
        =
        \pi_{i,t},
        \qquad
        s<t .
    \end{align}

    \item The centered process is stable around the moving mean. Specifically,
    if \(X_s\) is the initial latent state and
    \begin{align}
        e^{-\boldsymbol{\Gamma}(t-s)}
        \left[X_s-\boldsymbol{\mu}_i(s)\right]
        \Rightarrow
        \mathbf 0,
        \qquad
        t-s\to\infty,
        \label{eq:app_forward_condition}
    \end{align}
    then
    \begin{align}
        \boldsymbol{\xi}_i(t)-\boldsymbol{\mu}_i(t)
        \Rightarrow
        \mathcal N(\mathbf 0,\boldsymbol{\Omega}),
        \qquad
        t-s\to\infty .
    \end{align}
    In particular, this condition holds for any fixed starting time and any
    proper fixed initial distribution.

    Equivalently, in the pullback sense, for any fixed terminal time \(t\),
    if a family of initial laws \(\{\nu_s:s<t\}\), with \(X_s\sim\nu_s\),
    satisfies
    \begin{align}
        e^{-\boldsymbol{\Gamma}(t-s)}
        \left[X_s-\boldsymbol{\mu}_i(s)\right]
        \Rightarrow
        \mathbf 0,
        \qquad
        s\to-\infty,
    \end{align}
    then
    \begin{align}
        \nu_sP^{(i)}_{s,t}
        \Rightarrow
        \pi_{i,t},
        \qquad
        s\to-\infty .
    \end{align}
\end{enumerate}
\end{lem}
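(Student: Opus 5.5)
The plan is to reduce everything to two algebraic facts about \(\mathbf E_h=e^{-\boldsymbol\Gamma h}\): the semigroup identity \(\mathbf E_{h_1}\mathbf E_{h_2}=\mathbf E_{h_1+h_2}\), and the derivative identity
\begin{align}
\frac{\mathrm d}{\mathrm d\tau}\left(e^{-\boldsymbol\Gamma\tau}\boldsymbol\Omega e^{-\boldsymbol\Gamma^\top\tau}\right)=-e^{-\boldsymbol\Gamma\tau}\boldsymbol\Sigma e^{-\boldsymbol\Gamma^\top\tau}.
\end{align}

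For part 1, integrate this identity from \(0\) to \(h\). This gives exactly the representation \(\mathbf Q(h)=\boldsymbol\Omega-\mathbf E_h\boldsymbol\Omega\mathbf E_h^\top\). Positive definiteness follows because the integrand is congruent to \(\boldsymbol\Sigma\succ0\) through the invertible \(e^{-\boldsymbol\Gamma\tau}\). Hence \(v^\top\mathbf Q(h)v\) is the integral of a continuous, strictly positive function for \(v\neq0\). The expansion \(\mathbf Q(h)=\boldsymbol\Sigma h+o(h)\) follows from continuity of the integrand at \(\tau=0\). Positive stability of \(\boldsymbol\Gamma\) is not needed here.

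For part 2, compose the Gaussian kernels directly. Under \(P_{s,u}\) followed by \(P_{u,t}\), the state is affine-Gaussian, so the result is Gaussian.

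\emph{Mean.} It equals \(\boldsymbol\mu_i(t)+\mathbf E_{t-u}[\boldsymbol\mu_i(u)+\mathbf E_{u-s}(\mathbf x-\boldsymbol\mu_i(s))-\boldsymbol\mu_i(u)]\). The \(\boldsymbol\mu_i(u)\) terms cancel, and the semigroup identity reduces this to the mean of \(P_{s,t}\). The specific linear form of \(\boldsymbol\mu_i\) plays no role; any deterministic mean function works.

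\emph{Covariance.} It equals \(\mathbf E_{t-u}\mathbf Q(u-s)\mathbf E_{t-u}^\top+\mathbf Q(t-u)\). Expanding, this telescopes to \(\boldsymbol\Omega-\mathbf E_{t-s}\boldsymbol\Omega\mathbf E_{t-s}^\top=\mathbf Q(t-s)\).

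For part 3, the same computation applies with \(\mathbf x\sim\mathcal N(\boldsymbol\mu_i(s),\boldsymbol\Omega)\). The mean is \(\boldsymbol\mu_i(t)\), since the centered term has mean zero. The covariance is \(\mathbf E\boldsymbol\Omega\mathbf E^\top+\boldsymbol\Omega-\mathbf E\boldsymbol\Omega\mathbf E^\top=\boldsymbol\Omega\).

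For part 4, write the centered state as a sum of independent pieces,
\begin{align}
\boldsymbol\xi_i(t)-\boldsymbol\mu_i(t)=\mathbf E_{t-s}[X_s-\boldsymbol\mu_i(s)]+Z,
\end{align}
where \(Z\sim\mathcal N(\mathbf 0,\mathbf Q(t-s))\) is independent of \(X_s\). Positive stability gives \(\|\mathbf E_h\|\le Ce^{-\delta h}\) for some \(\delta>0\), taking \(\delta\) below the smallest real part of the eigenvalues of \(\boldsymbol\Gamma\). Hence \(\mathbf Q(h)\to\boldsymbol\Omega\), and \(Z\Rightarrow\mathcal N(\mathbf 0,\boldsymbol\Omega)\) by convergence of characteristic functions. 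By hypothesis the first summand tends to \(\mathbf 0\) in distribution. Slutsky's lemma, or a characteristic-function product argument that exploits independence, then yields the limit. This is the one place positive stability enters.

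The claim for a fixed start with a proper initial law: \(X_s-\boldsymbol\mu_i(s)\) is then a fixed tight random vector. Multiplying it by \(\mathbf E_{t-s}\to0\) sends it to zero in probability, so the hypothesis holds.

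The pullback statement is the same decomposition with \(t\) fixed and \(s\to-\infty\). Here \(\nu_sP_{s,t}\) is the law of \(\boldsymbol\mu_i(t)+\mathbf E_{t-s}[X_s-\boldsymbol\mu_i(s)]+Z_s\), where \(Z_s\sim\mathcal N(\mathbf 0,\mathbf Q(t-s))\) tends to \(\mathcal N(\mathbf 0,\boldsymbol\Omega)\), which gives convergence to \(\pi_{i,t}\).

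The main obstacle is bookkeeping rather than depth. The hypothesis is only convergence in distribution of a product involving the \(s\)-dependent vector \(X_s\), so the argument must use the independence of \(Z\) and the characteristic-function factorization rather than any moment bounds on \(X_s\). I would handle this by writing the characteristic function of the sum as the product of the two characteristic functions and passing to the limit in each factor.
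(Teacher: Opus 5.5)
Your proposal is correct and follows the paper's proof essentially step for step. The steps match: differentiating $e^{-\boldsymbol\Gamma\tau}\boldsymbol\Omega e^{-\boldsymbol\Gamma^\top\tau}$ and integrating for part 1, composing means and telescoping covariances for parts 2--3, and the decomposition $\mathbf E_{t-s}[X_s-\boldsymbol\mu_i(s)]+Z$ with $Z$ independent, together with $\mathbf Q(h)\to\boldsymbol\Omega$ and Slutsky, for part 4. One small correction to your closing remark: the independence and characteristic-function factorization are not actually required. The first summand converges to a constant, so Slutsky's lemma applies directly, and that is exactly how the paper concludes.
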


\begin{proof}
The proof follows the idea in \citet{geissert2009asymptotic,zhou2025dynamic}. Fix \(i\). Define
\begin{align}
    \mathbf R(\tau)
    =
    e^{-\boldsymbol{\Gamma}\tau}
    \boldsymbol{\Omega}
    e^{-\boldsymbol{\Gamma}^{\top}\tau},
    \qquad
    \tau\ge0 .
\end{align}
Since \(\boldsymbol{\Gamma}\) commutes with
\(e^{-\boldsymbol{\Gamma}\tau}\), differentiation gives
\begin{align}
    \mathbf R'(\tau)
    &=-
    e^{-\boldsymbol{\Gamma}\tau}
    \left(
        \boldsymbol{\Gamma}\boldsymbol{\Omega}
        +
        \boldsymbol{\Omega}\boldsymbol{\Gamma}^{\top}
    \right)
    e^{-\boldsymbol{\Gamma}^{\top}\tau}  \\
    &=-
    e^{-\boldsymbol{\Gamma}\tau}
    \boldsymbol{\Sigma}
    e^{-\boldsymbol{\Gamma}^{\top}\tau} .
\end{align}
Integrating from \(0\) to \(h\) yields
Equation~\eqref{eq:app_q_integral}. Because
\(\boldsymbol{\Sigma}\succ0\) and
\(e^{-\boldsymbol{\Gamma}\tau}\) is nonsingular, the integrand is positive definite for every \(\tau\ge0\). Therefore \(\mathbf Q(h)\succ0\) for
all \(h>0\). Continuity of the integrand at \(\tau=0\) gives
\(\mathbf Q(h)/h\to\boldsymbol{\Sigma}\), proving the infinitesimal covariance
claim.

We next verify Chapman--Kolmogorov. Let \(s<u<t\),
\(h_1=u-s\), and \(h_2=t-u\). Starting from
\(\boldsymbol{\xi}_i(s)=\mathbf x\), the composed transition has mean
\begin{align}
    \boldsymbol{\mu}_i(t)
    +
    \mathbf E_{h_2}
    \mathbf E_{h_1}
    \left[\mathbf x-\boldsymbol{\mu}_i(s)\right]
    =
    \boldsymbol{\mu}_i(t)
    +
    \mathbf E_{h_1+h_2}
    \left[\mathbf x-\boldsymbol{\mu}_i(s)\right],
\end{align}
which is the direct-transition mean. Its covariance is
\begin{align}
    \mathbf E_{h_2}\mathbf Q(h_1)\mathbf E_{h_2}^{\top}
    +
    \mathbf Q(h_2)
    =
    \mathbf Q(h_1+h_2),
\end{align}
by the definition of \(\mathbf Q\). The composed and direct Gaussian
transitions therefore have the same mean and covariance, proving
\(P^{(i)}_{s,t}=P^{(i)}_{s,u}P^{(i)}_{u,t}\).

Now suppose \(\boldsymbol{\xi}_i(s)\sim\pi_{i,s}\). Gaussianity is preserved
under the affine Gaussian transition. The propagated mean is
\begin{align}
    \boldsymbol{\mu}_i(t)
    +
    \mathbf E_{t-s}
    \left[
        \mathbb E\{\boldsymbol{\xi}_i(s)\}
        -
        \boldsymbol{\mu}_i(s)
    \right]
    =
    \boldsymbol{\mu}_i(t),
\end{align}
while the propagated covariance is
\begin{align}
    \mathbf E_{t-s}\boldsymbol{\Omega}\mathbf E_{t-s}^{\top}
    +
    \mathbf Q(t-s)
    =
    \boldsymbol{\Omega}.
\end{align}
Thus \(\pi_{i,s}P^{(i)}_{s,t}=\pi_{i,t}\).

It remains to prove stability. Positive stability of \(\boldsymbol{\Gamma}\)
implies \(\mathbf E_h\to\mathbf0\) as \(h\to\infty\), and hence
\(\mathbf Q(h)\to\boldsymbol{\Omega}\). Under the transition kernel, with
\(h=t-s\),
\begin{align}
    \boldsymbol{\xi}_i(t)-\boldsymbol{\mu}_i(t)
    \stackrel{d}{=}
    \mathbf E_h
    \left[X_s-\boldsymbol{\mu}_i(s)\right]
    +
    \boldsymbol{\varepsilon}_{s,t},
    \qquad
    \boldsymbol{\varepsilon}_{s,t}
    \sim
    \mathcal N(\mathbf0,\mathbf Q(h)),
\end{align}
with \(\boldsymbol{\varepsilon}_{s,t}\) independent of \(X_s\). Since
\(\boldsymbol{\varepsilon}_{s,t}\Rightarrow
\mathcal N(\mathbf0,\boldsymbol{\Omega})\) and the first term converges to
\(\mathbf0\) under condition~\eqref{eq:app_forward_condition}, Slutsky's
theorem gives
\begin{align}
    \boldsymbol{\xi}_i(t)-\boldsymbol{\mu}_i(t)
    \Rightarrow
    \mathcal N(\mathbf0,\boldsymbol{\Omega}).
\end{align}
For fixed starting time, \(\mathbf E_{t-s}\to\mathbf0\), so the condition is
automatic for any proper fixed initial law. The pullback statement follows by
the same argument with fixed terminal time \(t\) and \(s\to-\infty\). This
completes the proof.
\end{proof}

\suppsection{Likelihood}{appd:likelihood}
Here we use the shorthand
\begin{align}
    \boldsymbol{\xi}_{ij}
    &:=
    \boldsymbol{\xi}_i(t_{ij}), \\
    \mathbf b_i
    &:=
    (b_{i1},\ldots,b_{iK})^\top, \\
    \mathbf Y_{ij}
    &:=
    (Y_{ij1},\ldots,Y_{ijK})^\top, \\
    \boldsymbol{\xi}_{i,1:n_i}
    &:=
    \left(
        \boldsymbol{\xi}_{i1}^\top,
        \ldots,
        \boldsymbol{\xi}_{in_i}^\top
    \right)^\top, \\
    \mathbf Y_i
    &:=
    \left(
        \mathbf Y_{i1}^\top,
        \ldots,
        \mathbf Y_{in_i}^\top
    \right)^\top .
\end{align}
Let
\begin{align}
    \boldsymbol{\Psi}
    =
    \left\{
        \boldsymbol{\theta},
        \boldsymbol{\Lambda},
        \boldsymbol{\beta},
        \boldsymbol{\Omega},
        \boldsymbol\Gamma,
        \boldsymbol{\sigma}_b,
        \boldsymbol{\Phi},
        \boldsymbol{\alpha}
    \right\}
    \label{eq:parameter_set}
\end{align}
denote the set of population-level model parameters.
For the finite-dimensional model evaluated at the observed measurement times, we take the initial latent state to follow
\begin{align}
    \boldsymbol{\xi}_{i1}
    \mid
    \mathbf x_i^{(2)}
    \sim
    \mathcal N
    \left(
        \boldsymbol{\mu}_i(t_{i1}),
        \boldsymbol{\Omega}
    \right).
    \label{eq:initial_latent_state}
\end{align}

The conditional likelihood of the observed responses for subject \(i\), given
the random effects \(\mathbf b_i\) and the latent states
\(\boldsymbol{\xi}_{i,1:n_i}\), is
\begin{align}
    \mathbb L_{\boldsymbol{\Psi}}
    \left(
        \mathbf y_i
        \mid
        \mathbf b_i,
        \boldsymbol{\xi}_{i,1:n_i}
    \right)
    =
    \prod_{j=1}^{n_i}
    \prod_{k=1}^{K}
    \prod_{m=0}^{c_k-1}
    \left[
        p_{ijkm}
        \left(
            \boldsymbol{\Psi}
        \right)
    \right]^{
        I(y_{ijk}=m)
    },
    \label{eq:cond_likelihood}
\end{align}
where
\begin{align}
    p_{ijkm}
    \left(
        \boldsymbol{\Psi}
    \right)
    =
    \mathbb P_{\boldsymbol{\Psi}}
    \left(
        Y_{ijk}=m
        \mid
        \boldsymbol{\xi}_{ij},
        b_{ik},
        \mathbf x_{ij}^{(1)}
    \right).
\end{align}

From the cumulative IRT model in Equation~\eqref{eq:irt},
\begin{align}
h
\left[
\mathbb P_{\boldsymbol{\Psi}}
\left(
Y_{ijk}\leq m
\mid
\boldsymbol{\xi}_{ij},
b_{ik},
\mathbf x_{ij}^{(1)}
\right)
\right]
=
\theta_{km}
-
\boldsymbol{\Lambda}_k^\top
\boldsymbol{\xi}_{ij}
-
\boldsymbol{\beta}_k^\top
\mathbf x_{ij}^{(1)}
-
b_{ik},
\qquad
m=0,\ldots,c_k-2.
\end{align}
Let
\begin{align}
    \eta_{ijkm}
    =
    \theta_{km}
    -
    \boldsymbol{\Lambda}_k^\top
    \boldsymbol{\xi}_{ij}
    -
    \boldsymbol{\beta}_k^\top
    \mathbf x_{ij}^{(1)}
    -
    b_{ik},
    \qquad
    m=0,\ldots,c_k-2.
\end{align}
Then
\begin{align}
    F_{ijk}(m)
    =
    h^{-1}(\eta_{ijkm}),
    \qquad
    m=0,\ldots,c_k-2,
\end{align}
with boundary conventions
\begin{align}
    F_{ijk}(-1)=0,
    \qquad
    F_{ijk}(c_k-1)=1.
\end{align}
Therefore, for \(m=0,\ldots,c_k-1\),
\begin{align}
    p_{ijkm}
    \left(
        \boldsymbol{\Psi}
    \right)
    =
    F_{ijk}(m)
    -
    F_{ijk}(m-1).
    \label{eq:category_prob}
\end{align}
Equivalently,
\begin{align}
    p_{ijk0}
    &=
    h^{-1}(\eta_{ijk0}), \\
    p_{ijkm}
    &=
    h^{-1}(\eta_{ijkm})
    -
    h^{-1}(\eta_{ijk,m-1}),
    \qquad
    m=1,\ldots,c_k-2, \\
    p_{ijk,c_k-1}
    &=
    1
    -
    h^{-1}(\eta_{ijk,c_k-2}).
\end{align}
This boundary convention avoids introducing an artificial cutpoint
\(\theta_{k,c_k-1}\).

Let \(\mathbf y=\{\mathbf y_i:i=1,\ldots,N\}\). The marginal likelihood of the
multivariate longitudinal responses is
\begin{align}
    \mathbb L_{\boldsymbol{\Psi}}(\mathbf y)
    &=
    \prod_{i=1}^{N}
    \mathbb L_{\boldsymbol{\Psi}}(\mathbf y_i) \notag \\
    &=
    \prod_{i=1}^{N}
    \int
    \int
    \mathbb L_{\boldsymbol{\Psi}}
    \left(
        \mathbf y_i
        \mid
        \mathbf b_i,
        \mathbf x_i^{(2)},
        \boldsymbol{\xi}_{i,1:n_i}
    \right)
    p_{\boldsymbol{\Psi}}(\mathbf b_i)
    p_{\boldsymbol{\Psi}}
    \left(
        \boldsymbol{\xi}_{i,1:n_i}
        \mid
        \mathbf x_i^{(2)}
    \right)
    \,
    \mathrm d\mathbf b_i
    \,
    \mathrm d\boldsymbol{\xi}_{i,1:n_i}.
    \label{eq:marginal_likelihood}
\end{align}

The random-effect density is
\begin{align}
    p_{\boldsymbol{\Psi}}(\mathbf b_i)
    =
    \prod_{k=1}^{K}
    p(b_{ik}\mid\sigma_{bk}),
\end{align}
where each factor is the normal density specified in the main text.

The latent trajectory density is induced by the initial latent density and the
continuous-time OU transition density:
\begin{align}
    p_{\boldsymbol{\Psi}}
    \left(
        \boldsymbol{\xi}_{i,1:n_i}
        \mid
        \mathbf x_i^{(2)}
    \right)
    =
    p_{\boldsymbol{\Psi}}
    \left(
        \boldsymbol{\xi}_{i1}
        \mid
        \mathbf x_i^{(2)}
    \right)
    \prod_{j=2}^{n_i}
    p_{\boldsymbol{\Psi}}
    \left(
        \boldsymbol{\xi}_{ij}
        \mid
        \boldsymbol{\xi}_{i,j-1},
        \mathbf x_i^{(2)}
    \right).
    \label{eq:latent_path_density}
\end{align}
Using Equation~\eqref{eq:initial_latent_state},
\begin{align}
    p_{\boldsymbol{\Psi}}
    \left(
        \boldsymbol{\xi}_{i1}
        \mid
        \mathbf x_i^{(2)}
    \right)
    =
    \varphi_R
    \left(
        \boldsymbol{\xi}_{i1};
        \boldsymbol{\mu}_i(t_{i1}),
        \boldsymbol{\Omega}
    \right),
    \label{eq:initial_density}
\end{align}
where \(\varphi_R(\cdot;\mathbf m,\mathbf Q)\) denotes the
\(R\)-variate normal density with mean \(\mathbf m\) and covariance
\(\mathbf Q\).

For \(j=2,\ldots,n_i\), let
\begin{align}
    \Delta t_{ij}
    =
    t_{ij}-t_{i,j-1}.
\end{align}
The OU transition density implied by Equation~\eqref{eq:ou} is
\begin{align}
    p_{\boldsymbol{\Psi}}
    \left(
        \boldsymbol{\xi}_{ij}
        \mid
        \boldsymbol{\xi}_{i,j-1},
        \mathbf x_i^{(1)}
    \right)
    =
    \varphi_R
    \left(
        \boldsymbol{\xi}_{ij};
        \mathbf m_{ij},
        \mathbf Q_{ij}
    \right),
    \label{eq:latent_transition_density}
\end{align}
where
\begin{align}
    \mathbf m_{ij}
    =
    \boldsymbol{\mu}_i(t_{ij})
    +
    e^{-\boldsymbol{\Gamma}\Delta t_{ij}}
    \left[
        \boldsymbol{\xi}_{i,j-1}
        -
        \boldsymbol{\mu}_i(t_{i,j-1})
    \right],
    \label{eq:latent_transition_mean}
\end{align}
and
\begin{align}
    \mathbf Q_{ij}
    =
    \boldsymbol{\Omega}
    -
    e^{-\boldsymbol{\Gamma}\Delta t_{ij}}
    \boldsymbol{\Omega}
    e^{-\boldsymbol{\Gamma}^{\top}\Delta t_{ij}} .
    \label{eq:latent_transition_cov}
\end{align}

The marginal likelihood in Equation~\eqref{eq:marginal_likelihood} has no
closed form because it requires integration over both the \(K\)-dimensional
subject-specific random effects and the \(Rn_i\)-dimensional latent trajectory
for each subject. We therefore perform posterior inference using MCMC.

Let
\begin{align}
    \boldsymbol{\Xi}
    =
    \{\boldsymbol{\xi}_{i,1:n_i}:i=1,\ldots,N\},
    \qquad
    \mathbf b
    =
    \{\mathbf b_i:i=1,\ldots,N\}.
\end{align}
Let \(\mu_{\theta}\), \(\sigma_{\theta}\), and \(\sigma_{\Lambda}\) denote
hyperparameters for the cutpoint and loading priors. The joint posterior is
proportional to
\begin{align}
& p
\left(
    \boldsymbol{\Psi},
    \boldsymbol{\Xi},
    \mathbf b,
    \mu_{\theta},
    \sigma_{\theta},
    \sigma_{\Lambda}
    \mid
    \mathbf y
\right) \notag \\
&\quad \propto
\prod_{i=1}^{N}
\mathbb L_{\boldsymbol{\Psi}}
\left(
    \mathbf y_i
    \mid
    \mathbf b_i,
    \mathbf x_i^{(1)}
    \boldsymbol{\xi}_{i,1:n_i}
\right) \notag \\
&\qquad \times
\prod_{i=1}^{N}
\left[
    p_{\boldsymbol{\Psi}}
    \left(
        \boldsymbol{\xi}_{i1}
        \mid
        \mathbf x_i^{(2)}
    \right)
    \prod_{j=2}^{n_i}
    p_{\boldsymbol{\Psi}}
    \left(
        \boldsymbol{\xi}_{ij}
        \mid
        \boldsymbol{\xi}_{i,j-1},
        \mathbf x_i^{(2)}
    \right)
\right] \notag \\
&\qquad \times
\prod_{i=1}^{N}
\prod_{k=1}^{K}
p(b_{ik}\mid\sigma_{bk}) \notag \\
&\qquad \times
\prod_{k=1}^{K}
\prod_{m=0}^{c_k-2}
p(\theta_{km}\mid\mu_{\theta},\sigma_{\theta})
\,
I
\left(
    \theta_{k0}
    <
    \theta_{k1}
    <
    \cdots
    <
    \theta_{k,c_k-2}
\right) \notag \\
&\qquad \times
\prod_{k=1}^{K}
p(\boldsymbol{\Lambda}_k\mid\sigma_{\Lambda})
\prod_{k=1}^{K}
p(\boldsymbol{\beta}_k) \notag \\
&\qquad \times
p(\boldsymbol{\Omega})
p(\boldsymbol\Gamma)
\prod_{k=1}^{K}
p(\sigma_{bk}) \notag \\
&\qquad \times
p(\boldsymbol{\Phi})
p(\boldsymbol{\alpha})
p(\mu_{\theta})
p(\sigma_{\theta})
p(\sigma_{\Lambda}) .
\label{eq:joint_posterior}
\end{align}

When the triangular implementation described in the Parameter Estimation subsection
, the prior factor
\[
    p(\boldsymbol{\Omega})p(\boldsymbol\Gamma)
\]
in Equation~\eqref{eq:joint_posterior} is replaced by
\[
    p(\mathbf L_{\Omega})p(\mathbf L_S)p(\mathbf L_A),
\]
with \(\boldsymbol{\Omega}\) and
\(\boldsymbol{\Gamma}\) understood as the deterministic transformations defined
in the main text.
\clearpage

\suppsection{Proof of the stability and spectral completeness of the drift decomposition}{appd:helm_proof}
\subsection{Lyapunov Stability Theorem}
First, we provide the complete Lyapunov stability theorem, which will be used in the following proof. 
\begin{lem}[Lyapunov Stability Theorem for Matrix Equations)]
    \label{lemma:lyapunov_stability}
    Let $\mathbf{A} \in \mathbb{R}^{n \times n}$ be a real square matrix, and let $\mathbf{Q} \in \mathbb{R}^{n \times n}$ be a symmetric positive semi-definite (PSD) matrix.If $\mathbf{A}$ is a Hurwitz matrix (meaning every eigenvalue of $\mathbf{A}$ has a strictly negative real part), then the continuous-time Lyapunov equation:
    \begin{align}
        \mathbf{A} \mathbf{P} + \mathbf{P} \mathbf{A}^\top = -\mathbf{Q}
    \end{align}
    has a unique solution $\mathbf{P} \in \mathbb{R}^{n \times n}$.Furthermore, this unique solution $\mathbf{P}$ is guaranteed to be symmetric and positive semi-definite (PSD). It can be explicitly expressed via the integral:
    \begin{align}
        \mathbf{P} = \int_0^\infty e^{\mathbf{A} t} \mathbf{Q} e^{\mathbf{A}^\top t} \mathrm{d}t
    \end{align}
    (Note: If $\mathbf{Q}$ is strictly positive definite rather than just semi-definite, then $\mathbf{P}$ will also be strictly positive definite.)
\end{lem}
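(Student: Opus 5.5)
The plan is to prove the standard Lyapunov theorem directly from the integral representation, in three steps: convergence and the equation, uniqueness, and definiteness.

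\textbf{Step 1: the integral converges and solves the equation.} Because $\mathbf A$ is Hurwitz, write $-\delta<0$ for the largest real part among its eigenvalues. From the Jordan form, for any $0<\epsilon<\delta$ there is a constant $C$ with
\begin{align}
\|e^{\mathbf A t}\|\le C e^{-(\delta-\epsilon)t}, \qquad t\ge 0 .
\end{align}
Hence the integrand $e^{\mathbf A t}\mathbf Q e^{\mathbf A^\top t}$ is bounded in norm by $C^2\|\mathbf Q\| e^{-2(\delta-\epsilon)t}$, and $\mathbf P=\int_0^\infty e^{\mathbf A t}\mathbf Q e^{\mathbf A^\top t}\,\mathrm dt$ converges absolutely. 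It is symmetric because each integrand is symmetric.

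To verify the equation, I will reuse the differentiation idea from the proof of Lemma~\ref{lemma:conditional_kernel_stability}. Set $\mathbf R(t)=e^{\mathbf A t}\mathbf Q e^{\mathbf A^\top t}$. Then $\mathbf R'(t)=\mathbf A\mathbf R(t)+\mathbf R(t)\mathbf A^\top$. Integrating over $[0,T]$ and letting $T\to\infty$ gives
\begin{align}
\mathbf A\mathbf P+\mathbf P\mathbf A^\top=\lim_{T\to\infty}\mathbf R(T)-\mathbf R(0)=-\mathbf Q,
\end{align}
since $\mathbf R(T)\to\mathbf 0$ by the exponential bound.

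\textbf{Step 2: uniqueness.} I consider the linear map $\mathcal L(\mathbf X)=\mathbf A\mathbf X+\mathbf X\mathbf A^\top$ on $\mathbb R^{n\times n}$. Its eigenvalues are the sums $\lambda_i+\lambda_j$ of eigenvalues of $\mathbf A$; this follows by viewing $\mathcal L$ as the Kronecker sum $\mathbf I\otimes\mathbf A+\mathbf A\otimes\mathbf I$ acting on $\operatorname{vec}(\mathbf X)$. Every such sum has strictly negative real part, so $\mathcal L$ is injective and the solution is unique.

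If I prefer to avoid Kronecker sums, a direct argument works. Suppose $\mathcal L(\mathbf X)=\mathbf 0$ and set $\mathbf Z(t)=e^{\mathbf A t}\mathbf X e^{\mathbf A^\top t}$. Then $\mathbf Z'(t)=e^{\mathbf A t}\mathcal L(\mathbf X)e^{\mathbf A^\top t}=\mathbf 0$, so $\mathbf Z$ is constant. Hence $\mathbf X=\mathbf Z(0)=\lim_{t\to\infty}\mathbf Z(t)=\mathbf 0$. I will use this second argument, since it relies only on the same exponential decay as Step 1.

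\textbf{Step 3: definiteness.} For any vector $v$,
\begin{align}
v^\top\mathbf P v=\int_0^\infty (e^{\mathbf A^\top t}v)^\top\mathbf Q\,(e^{\mathbf A^\top t}v)\,\mathrm dt\ge 0,
\end{align}
so $\mathbf P$ is positive semi-definite. If $\mathbf Q\succ0$ and $v\neq0$, then $e^{\mathbf A^\top t}v\neq0$ for every $t$ because the matrix exponential is nonsingular. The integrand is then continuous and strictly positive, so $v^\top\mathbf P v>0$ and $\mathbf P$ is positive definite.

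\textbf{Main obstacle.} The only genuinely technical point is the exponential decay bound on $\|e^{\mathbf A t}\|$ for a non-normal Hurwitz $\mathbf A$. Polynomial factors from nontrivial Jordan blocks are absorbed by the small $\epsilon$, so the rate $\delta-\epsilon$ rather than $\delta$ is used. Every other step follows from this bound.
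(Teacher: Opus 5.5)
Your proof is correct and complete. The exponential bound gives absolute convergence of the integral and $\mathbf R(T)\to\mathbf 0$. The constancy of $\mathbf Z(t)$ gives uniqueness over all of $\mathbb R^{n\times n}$, not just among symmetric matrices, which is what the statement asks for. The quadratic-form identity handles both the semidefinite and the strictly definite cases.

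The paper gives no proof of this lemma; it says the proof is omitted because the result is well established. So there is no competing route to compare against. What you wrote is the standard argument, and it is the infinite-horizon version of two arguments the paper does use for $\mathbf Q(h)$ in Lemma~\ref{lemma:conditional_kernel_stability} and Theorem~\ref{thm:omega_metric_drift}: differentiating $\mathbf R(\tau)$, and positivity of the integrand.

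Two of your steps are exactly what part~1 of Theorem~\ref{thm:omega_metric_drift} relies on. There the lemma is applied with $\mathbf A=-\boldsymbol\Gamma_0$ and right-hand side $2\widetilde{\mathbf S}\succ0$ to produce $\widetilde{\boldsymbol\Omega}\succ0$.
\begin{itemize}
\item Your Step~1 constructs the solution from the integral. This supplies existence for an arbitrary right-hand side, which the paper's own $\mathbf Q(h)$ computation cannot do because it starts from an already given $\boldsymbol\Omega$.
\item The strict case of your Step~3 supplies the positive definiteness of $\widetilde{\boldsymbol\Omega}$.
\end{itemize}
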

To deploy the theorem to our CLOUD Model
\begin{itemize}
    \item The drift matrix $\boldsymbol{\Gamma}$ is defined as positive stable (its eigenvalues have strictly positive real parts). Therefore, the matrix $-\boldsymbol{\Gamma}$ is Hurwitz (its eigenvalues have strictly negative real parts). We set $\mathbf{A} = -\boldsymbol{\Gamma}$.
    \item We explicitly construct $\boldsymbol{\Sigma}$ to be symmetric and PSD (Requirement \ref{req:inf_cov}). We set $\mathbf{Q} = \boldsymbol{\Sigma}$.
    \item The Matrix $P$ is our stationary covariance matrix, $\boldsymbol{\Omega}$.
\end{itemize}
We omit the proof for the theorem since this is a well-established result.

\subsection{The drift matrix parametrization theorem}
\begin{thrm}
\label{thm:omega_metric_drift}
For the time-inhomogeneous multivariate Ornstein–Uhlenbeck process defined in
Equation \ref{eq:ou}, the following statements hold.
\begin{enumerate}
   \item Every positive stable real matrix
   \(\boldsymbol{\Gamma}_0\in\mathbb{R}^{R\times R}\) can be written as
   \begin{align}
       \boldsymbol{\Gamma}_0 = (\widetilde{\mathbf{S}}+\widetilde{\mathbf{A}}) \widetilde{\boldsymbol{\Omega}}^{-1}
   \end{align}
   for some
   \begin{align}
       \widetilde{\boldsymbol{\Omega}} = \widetilde{\boldsymbol{\Omega}}^{\top} \succ0,
       \qquad
       \widetilde{\mathbf{S}} = \widetilde{\mathbf{S}}^{\top} \succ0,
       \qquad
       \widetilde{\mathbf{A}}^{\top} = -\widetilde{\mathbf{A}}.
   \end{align}
   \item Conversely, for any
   \begin{align}
       \boldsymbol{\Omega} = \boldsymbol{\Omega}^{\top} \succ0,
       \qquad
       \mathbf{S} = \mathbf{S}^{\top} \succ0,
       \qquad
       \mathbf{A}^{\top} = -\mathbf{A},
   \end{align}
  if we define $\boldsymbol\Gamma$ through Equation \ref{eq:gamma_param}, the Requirements \ref{req:ps}--\ref{req:inf_cov} are satisfied. 
   Moreover, for every \(\Delta t>0\), the transition covariance matrix
   \begin{align}
       \mathbf Q(\Delta t) := \boldsymbol{\Omega} - e^{-\boldsymbol{\Gamma}\Delta t} \boldsymbol{\Omega} e^{-\boldsymbol{\Gamma}^{\top}\Delta t}
   \end{align}
   is symmetric positive definite, so the Gaussian transition distribution in Equation~\ref{eq:ou} is well-defined. 
\end{enumerate}
\end{thrm}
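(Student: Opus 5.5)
The plan is to prove the two statements separately. The first is a constructive existence claim, for which the Lyapunov lemma (Lemma~\ref{lemma:lyapunov_stability}) supplies $\widetilde{\boldsymbol\Omega}$. The second is a direct algebraic verification, followed by an eigenvalue argument and an appeal to Lemma~\ref{lemma:conditional_kernel_stability}.

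\emph{Statement 1.} Let $\boldsymbol\Gamma_0$ be positive stable, so $-\boldsymbol\Gamma_0$ is Hurwitz. I will apply Lemma~\ref{lemma:lyapunov_stability} with $\mathbf A=-\boldsymbol\Gamma_0$ and $\mathbf Q=\mathbf I$; any fixed SPD right-hand side would also work. This gives a unique solution
\[
\widetilde{\boldsymbol\Omega}=\int_0^\infty e^{-\boldsymbol\Gamma_0 t}e^{-\boldsymbol\Gamma_0^\top t}\,\mathrm dt,
\]
which is symmetric and strictly positive definite because $\mathbf Q\succ0$. It satisfies $\boldsymbol\Gamma_0\widetilde{\boldsymbol\Omega}+\widetilde{\boldsymbol\Omega}\boldsymbol\Gamma_0^\top=\mathbf I$. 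Next set $\mathbf M:=\boldsymbol\Gamma_0\widetilde{\boldsymbol\Omega}$ and split it into symmetric and skew parts:
\[
\widetilde{\mathbf S}=\tfrac12(\mathbf M+\mathbf M^\top), \qquad \widetilde{\mathbf A}=\tfrac12(\mathbf M-\mathbf M^\top).
\]
The Lyapunov identity says exactly that $\mathbf M+\mathbf M^\top=\mathbf I$, so $\widetilde{\mathbf S}=\tfrac12\mathbf I\succ0$. The matrix $\widetilde{\mathbf A}$ is skew-symmetric by construction. Since $\widetilde{\mathbf S}+\widetilde{\mathbf A}=\mathbf M$ and $\widetilde{\boldsymbol\Omega}$ is invertible, we recover $\boldsymbol\Gamma_0=(\widetilde{\mathbf S}+\widetilde{\mathbf A})\widetilde{\boldsymbol\Omega}^{-1}$.

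\emph{Statement 2.} Now take $\boldsymbol\Gamma=(\mathbf S+\mathbf A)\boldsymbol\Omega^{-1}$ with $\boldsymbol\Omega\succ0$, $\mathbf S\succ0$ and $\mathbf A$ skew-symmetric.

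Requirement~\ref{req:inf_cov} follows by direct computation. We have $\boldsymbol\Gamma\boldsymbol\Omega=\mathbf S+\mathbf A$, and since $\boldsymbol\Omega$ is symmetric, $\boldsymbol\Omega\boldsymbol\Gamma^\top=(\boldsymbol\Gamma\boldsymbol\Omega)^\top=\mathbf S-\mathbf A$. Therefore $\boldsymbol\Sigma=2\mathbf S\succ0$.

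For Requirement~\ref{req:ps}, I will argue through eigenvalues. Let $\boldsymbol\Gamma\mathbf x=\lambda\mathbf x$ with $\mathbf x\in\mathbb C^R$ nonzero, and put $\mathbf y=\boldsymbol\Omega^{-1}\mathbf x\neq0$. Then $(\mathbf S+\mathbf A)\mathbf y=\lambda\boldsymbol\Omega\mathbf y$. Left-multiplying by $\mathbf y^*$ gives
\[
\mathbf y^*\mathbf S\mathbf y+\mathbf y^*\mathbf A\mathbf y=\lambda\,\mathbf y^*\boldsymbol\Omega\mathbf y .
\]
The terms behave as follows.
\begin{itemize}
\item $\mathbf y^*\mathbf S\mathbf y$ and $\mathbf y^*\boldsymbol\Omega\mathbf y$ are real and strictly positive, since $\mathbf S$ and $\boldsymbol\Omega$ are real SPD.
\item $\mathbf y^*\mathbf A\mathbf y$ is purely imaginary, since $\mathbf A$ is real skew-symmetric and hence $\mathbf A^*=-\mathbf A$.
\end{itemize}
Taking real parts yields $\operatorname{Re}\lambda=\mathbf y^*\mathbf S\mathbf y/\mathbf y^*\boldsymbol\Omega\mathbf y>0$, so $\boldsymbol\Gamma$ is positive stable. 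This is the step needing the most care: one must work with complex eigenvectors and use the correct conjugate-transpose facts, rather than assume a real eigenvector.

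Finally, with Requirements~\ref{req:sta_cov}--\ref{req:inf_cov} now all in hand, part 1 of Lemma~\ref{lemma:conditional_kernel_stability} applies directly. It gives
\[
\mathbf Q(\Delta t)=\int_0^{\Delta t}e^{-\boldsymbol\Gamma\tau}\boldsymbol\Sigma e^{-\boldsymbol\Gamma^\top\tau}\,\mathrm d\tau\succ0 .
\]
Symmetry of $\mathbf Q(\Delta t)$ is immediate from its definition. Hence the Gaussian transition in Equation~\ref{eq:ou} is well-defined for every $\Delta t>0$.
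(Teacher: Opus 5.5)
Your proposal is correct and follows the paper's proof essentially step for step: a Lyapunov solve followed by a symmetric/skew split for part 1 (you fix $\widetilde{\mathbf S}=\tfrac12\mathbf I$, where the paper allows any SPD $\widetilde{\mathbf S}$), and for part 2 the identity $\boldsymbol\Sigma=2\mathbf S$, a Rayleigh-quotient eigenvalue argument, and the integral representation of $\mathbf Q(\Delta t)$. The only differences are cosmetic: you use the generalized quotient $\mathbf y^*\mathbf S\mathbf y/\mathbf y^*\boldsymbol\Omega\mathbf y$ instead of the paper's similarity transform $\boldsymbol\Omega^{-1/2}\boldsymbol\Gamma\boldsymbol\Omega^{1/2}$, and you cite Lemma~\ref{lemma:conditional_kernel_stability} for $\mathbf Q(\Delta t)\succ0$ rather than re-deriving it.
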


\begin{proof}
(\textbf{Completeness of the parameterization.}) Let
\begin{align}
    \boldsymbol{\Gamma}_0
    \in
    \mathbb{R}^{R\times R}
\end{align}
be any positive stable matrix. Choose any symmetric positive definite matrix
\begin{align}
    \widetilde{\mathbf{S}}
    =
    \widetilde{\mathbf{S}}^{\top}
    \succ0 .
\end{align}
Since \(\boldsymbol{\Gamma}_0\) is positive stable, the continuous-time
Lyapunov theorem above implies that the Lyapunov equation
\begin{align}
    \boldsymbol{\Gamma}_0
    \widetilde{\boldsymbol{\Omega}}
    +
    \widetilde{\boldsymbol{\Omega}}
    \boldsymbol{\Gamma}_0^{\top}
    =
    2\widetilde{\mathbf{S}}
    \label{eq:app_lyap_tilde}
\end{align}
has a unique symmetric positive definite solution
\begin{align}
    \widetilde{\boldsymbol{\Omega}}
    =
    \widetilde{\boldsymbol{\Omega}}^{\top}
    \succ0 .
\end{align}
Define
\begin{align}
    \widetilde{\mathbf{A}}
    =
    \boldsymbol{\Gamma}_0
    \widetilde{\boldsymbol{\Omega}}
    -
    \widetilde{\mathbf{S}} .
\end{align}
Using Equation~\eqref{eq:app_lyap_tilde}, we have
\begin{align}
    \widetilde{\boldsymbol{\Omega}}
    \boldsymbol{\Gamma}_0^{\top}
    =
    2\widetilde{\mathbf{S}}
    -
    \boldsymbol{\Gamma}_0
    \widetilde{\boldsymbol{\Omega}} .
\end{align}
Therefore,
\begin{align}
    \widetilde{\mathbf{A}}^{\top}
    &=
    \widetilde{\boldsymbol{\Omega}}
    \boldsymbol{\Gamma}_0^{\top}
    -
    \widetilde{\mathbf{S}} \\
    &=
    \left(
        2\widetilde{\mathbf{S}}
        -
        \boldsymbol{\Gamma}_0
        \widetilde{\boldsymbol{\Omega}}
    \right)
    -
    \widetilde{\mathbf{S}} \\
    &=
    \widetilde{\mathbf{S}}
    -
    \boldsymbol{\Gamma}_0
    \widetilde{\boldsymbol{\Omega}} \\
    &=
    -
    \widetilde{\mathbf{A}} .
\end{align}
Thus \(\widetilde{\mathbf{A}}\) is skew-symmetric. Moreover,
\begin{align}
    \widetilde{\mathbf{S}}
    +
    \widetilde{\mathbf{A}}
    =
    \boldsymbol{\Gamma}_0
    \widetilde{\boldsymbol{\Omega}} .
\end{align}
Multiplying both sides on the right by
\(\widetilde{\boldsymbol{\Omega}}^{-1}\), we obtain
\begin{align}
    \boldsymbol{\Gamma}_0
    =
    \left(
        \widetilde{\mathbf{S}}
        +
        \widetilde{\mathbf{A}}
    \right)
    \widetilde{\boldsymbol{\Omega}}^{-1}.
\end{align}
Hence every positive stable drift matrix admits the proposed
\(\boldsymbol{\Omega}\)-metric decomposition. This proves that the
parameterization does not impose additional restrictions on the class of
positive stable drift matrices.

(\textbf{Forward validity of the proposed parameterization.})
Now suppose
\begin{align}
    \boldsymbol{\Omega}
    =
    \boldsymbol{\Omega}^{\top}
    \succ0,
    \qquad
    \mathbf{S}
    =
    \mathbf{S}^{\top}
    \succ0,
    \qquad
    \mathbf{A}^{\top}
    =
    -
    \mathbf{A},
\end{align}
and define
\begin{align}
    \boldsymbol{\Gamma}
    =
    (
        \mathbf{S}
        +
        \mathbf{A}
    )
    \boldsymbol{\Omega}^{-1}.
\end{align}
Since \(\boldsymbol{\Omega}\succ0\), Requirement~\ref{req:sta_cov} is
satisfied directly.

Next,
\begin{align}
    \boldsymbol{\Gamma}
    \boldsymbol{\Omega}
    =
    \mathbf{S}
    +
    \mathbf{A}.
\end{align}
Because \(\boldsymbol{\Omega}\) is symmetric,
\begin{align}
    \boldsymbol{\Omega}
    \boldsymbol{\Gamma}^{\top}
    =
    \left(
        \boldsymbol{\Gamma}
        \boldsymbol{\Omega}
    \right)^{\top}
    =
    \left(
        \mathbf{S}
        +
        \mathbf{A}
    \right)^{\top}
    =
    \mathbf{S}
    -
    \mathbf{A}.
\end{align}
Therefore,
\begin{align}
    \boldsymbol{\Gamma}
    \boldsymbol{\Omega}
    +
    \boldsymbol{\Omega}
    \boldsymbol{\Gamma}^{\top}
    =
    2\mathbf{S}.
\end{align}
Thus the infinitesimal covariance matrix implied by the parameterization is
\begin{align}
    \boldsymbol{\Sigma}
    :=
    \boldsymbol{\Gamma}
    \boldsymbol{\Omega}
    +
    \boldsymbol{\Omega}
    \boldsymbol{\Gamma}^{\top}
    =
    2\mathbf{S}.
\end{align}
Since \(\mathbf{S}\succ0\), we have
\begin{align}
    \boldsymbol{\Sigma}
    =
    \boldsymbol{\Sigma}^{\top}
    \succ0.
\end{align}
Hence Requirement~\ref{req:inf_cov} is satisfied.

It remains to show that \(\boldsymbol{\Gamma}\) is positive stable. Let
\(\boldsymbol{\Omega}^{1/2}\) be the symmetric positive definite square root
of \(\boldsymbol{\Omega}\), and define
\begin{align}
    \mathbf{B}
    =
    \boldsymbol{\Omega}^{-1/2}
    \boldsymbol{\Gamma}
    \boldsymbol{\Omega}^{1/2}.
\end{align}
Then \(\mathbf{B}\) is similar to \(\boldsymbol{\Gamma}\), so they have the
same eigenvalues. Using the definition of \(\boldsymbol{\Gamma}\),
\begin{align}
    \mathbf{B}
    =
    \boldsymbol{\Omega}^{-1/2}
    (
        \mathbf{S}
        +
        \mathbf{A}
    )
    \boldsymbol{\Omega}^{-1/2}.
\end{align}
Write
\begin{align}
    \mathbf{B}
    =
    \mathbf{H}
    +
    \mathbf{K},
\end{align}
where
\begin{align}
    \mathbf{H}
    =
    \boldsymbol{\Omega}^{-1/2}
    \mathbf{S}
    \boldsymbol{\Omega}^{-1/2},
    \qquad
    \mathbf{K}
    =
    \boldsymbol{\Omega}^{-1/2}
    \mathbf{A}
    \boldsymbol{\Omega}^{-1/2}.
\end{align}
Because \(\mathbf{S}\succ0\) and \(\boldsymbol{\Omega}^{-1/2}\) is
nonsingular,
\begin{align}
    \mathbf{H}
    =
    \mathbf{H}^{\top}
    \succ0.
\end{align}
Because \(\mathbf{A}^{\top}=-\mathbf{A}\), we have
\begin{align}
    \mathbf{K}^{\top}
    =
    -
    \mathbf{K}.
\end{align}

Let \(\lambda\) be any eigenvalue of \(\mathbf{B}\), with corresponding
nonzero complex eigenvector \(\mathbf{u}\). Then
\begin{align}
    \mathbf{B}\mathbf{u}
    =
    \lambda\mathbf{u}.
\end{align}
Therefore,
\begin{align}
    \lambda
    =
    \frac{
        \mathbf{u}^{*}
        \mathbf{B}
        \mathbf{u}
    }{
        \mathbf{u}^{*}
        \mathbf{u}
    }
    =
    \frac{
        \mathbf{u}^{*}
        \mathbf{H}
        \mathbf{u}
    }{
        \mathbf{u}^{*}
        \mathbf{u}
    }
    +
    \frac{
        \mathbf{u}^{*}
        \mathbf{K}
        \mathbf{u}
    }{
        \mathbf{u}^{*}
        \mathbf{u}
    } .
\end{align}
The first term is real and strictly positive because \(\mathbf{H}\succ0\).
The second term is purely imaginary because \(\mathbf{K}\) is
skew-symmetric. Hence
\begin{align}
    \operatorname{Re}(\lambda)
    =
    \frac{
        \mathbf{u}^{*}
        \mathbf{H}
        \mathbf{u}
    }{
        \mathbf{u}^{*}
        \mathbf{u}
    }
    >
    0 .
\end{align}
Thus every eigenvalue of \(\mathbf{B}\) has positive real part. Since
\(\mathbf{B}\) and \(\boldsymbol{\Gamma}\) are similar,
\(\boldsymbol{\Gamma}\) is positive stable. Hence
Requirement~\ref{req:ps} is satisfied.

(\textbf{Validity of the transition covariance.})
Finally, we verify that the covariance matrix used in the conditional
transition distribution in Equation~\ref{eq:ou} is positive definite for
every \(\Delta t>0\). Let
\begin{align}
    h
    =
    \Delta t,
    \qquad
    \mathbf{E}_h
    =
    e^{-\boldsymbol{\Gamma}h},
\end{align}
and define
\begin{align}
    \mathbf Q(h)
    =
    \boldsymbol{\Omega}
    -
    \mathbf{E}_h
    \boldsymbol{\Omega}
    \mathbf{E}_h^{\top}.
\end{align}
We show that \(\mathbf Q(h)\succ0\). For \(\tau\ge0\), define
\begin{align}
    \mathbf R(\tau)
    =
    e^{-\boldsymbol{\Gamma}\tau}
    \boldsymbol{\Omega}
    e^{-\boldsymbol{\Gamma}^{\top}\tau}.
\end{align}
Differentiating gives
\begin{align}
    \frac{\mathrm d}{\mathrm d\tau}
    \mathbf R(\tau)
    &=
    -
    e^{-\boldsymbol{\Gamma}\tau}
    \left(
        \boldsymbol{\Gamma}
        \boldsymbol{\Omega}
        +
        \boldsymbol{\Omega}
        \boldsymbol{\Gamma}^{\top}
    \right)
    e^{-\boldsymbol{\Gamma}^{\top}\tau} \\
    &=
    -
    e^{-\boldsymbol{\Gamma}\tau}
    \boldsymbol{\Sigma}
    e^{-\boldsymbol{\Gamma}^{\top}\tau}.
\end{align}
Therefore,
\begin{align}
    \mathbf Q(h)
    &=
    \boldsymbol{\Omega}
    -
    e^{-\boldsymbol{\Gamma}h}
    \boldsymbol{\Omega}
    e^{-\boldsymbol{\Gamma}^{\top}h} \\
    &=
    \mathbf R(0)
    -
    \mathbf R(h) \\
    &=
    \int_0^h
    e^{-\boldsymbol{\Gamma}\tau}
    \boldsymbol{\Sigma}
    e^{-\boldsymbol{\Gamma}^{\top}\tau}
    \,\mathrm d\tau .
\end{align}
Since \(\boldsymbol{\Sigma}\succ0\) and
\(e^{-\boldsymbol{\Gamma}\tau}\) is nonsingular for every \(\tau\ge0\),
the integrand
\begin{align}
    e^{-\boldsymbol{\Gamma}\tau}
    \boldsymbol{\Sigma}
    e^{-\boldsymbol{\Gamma}^{\top}\tau}
\end{align}
is symmetric positive definite for every \(\tau\ge0\). Hence
\begin{align}
    \mathbf Q(h)\succ0
    \qquad
    \text{for every } h>0.
\end{align}
Thus the Gaussian transition kernel in Equation~\ref{eq:ou} is
well-defined and nondegenerate.

Moreover, because
\begin{align}
    \mathbf Q(h)
    =
    \int_0^h
    e^{-\boldsymbol{\Gamma}\tau}
    \boldsymbol{\Sigma}
    e^{-\boldsymbol{\Gamma}^{\top}\tau}
    \,\mathrm d\tau,
\end{align}
we have
\begin{align}
    \frac{\mathbf Q(h)}{h}
    \to
    \boldsymbol{\Sigma}
    \qquad
    \text{as } h\downarrow0.
\end{align}
This confirms that
\(\boldsymbol{\Sigma}
=
\boldsymbol{\Gamma}\boldsymbol{\Omega}
+
\boldsymbol{\Omega}\boldsymbol{\Gamma}^{\top}\)
is the infinitesimal covariance associated with the transition kernel.

Finally, since \(\boldsymbol{\Gamma}\) is positive stable,
\begin{align}
    e^{-\boldsymbol{\Gamma}h}
    \to
    \mathbf 0
    \qquad
    \text{as } h\to\infty.
\end{align}
Therefore,
\begin{align}
    \mathbf Q(h)
    =
    \boldsymbol{\Omega}
    -
    e^{-\boldsymbol{\Gamma}h}
    \boldsymbol{\Omega}
    e^{-\boldsymbol{\Gamma}^{\top}h}
    \to
    \boldsymbol{\Omega}
    \qquad
    \text{as } h\to\infty.
\end{align}
Thus \(\boldsymbol{\Omega}\) is the limiting covariance of the centered
transition distribution. Combining these results, the proposed
parameterization satisfies Requirements~\ref{req:ps}--\ref{req:inf_cov}
and yields a valid nondegenerate transition covariance for the
time-inhomogeneous OU transition kernel. This completes the proof.
\end{proof}
\clearpage

\suppsection{Non-centered Parametrization}{appd:ncp}
The non-centered parameterization (NCP) decouples the prior sampling space
from the dynamic OU parameters by sampling independent standard normal
innovations and deterministically mapping them into the target latent space.
Specifically, instead of directly sampling the latent states
\(\boldsymbol{\xi}_{ij}=\boldsymbol{\xi}_i(t_{ij})\), we sample raw
innovations
\begin{align}
    \boldsymbol{\eta}_{ij}
    \sim
    \mathcal N(\mathbf 0,\mathbf I_R),
    \qquad
    i=1,\ldots,N,
    \quad
    j=1,\ldots,n_i,
\end{align}
and construct the latent trajectory through the continuous-time OU transition
equations.

For subject \(i\), define
\begin{align}
    \boldsymbol{\xi}_{ij}
    &:=
    \boldsymbol{\xi}_i(t_{ij}), \\
    \boldsymbol{\mu}_i(t)
    &:=
    \boldsymbol{\mu}(t,\mathbf x_i^{(1)})
    =
    \left(
        \boldsymbol{\Phi}\mathbf x_i^{(1)}
        +
        \boldsymbol{\alpha}
    \right)t .
\end{align}
Under the proposed dimension-agnostic parameterization,
\begin{align}
    \boldsymbol{\Omega}
    =
    \mathbf L_{\Omega}\mathbf L_{\Omega}^{\top},
    \qquad
    \boldsymbol{\Gamma}
    =
    (\mathbf S+\mathbf A)\boldsymbol{\Omega}^{-1},
    \qquad
    \mathbf S=\mathbf S^{\top}\succ0,
    \qquad
    \mathbf A^{\top}=-\mathbf A .
\end{align}
Thus the OU covariance \(\boldsymbol{\Omega}\) is modeled directly through
its Cholesky factor \(\mathbf L_{\Omega}\), while the Lyapunov relation
\begin{align}
    \boldsymbol{\Gamma}\boldsymbol{\Omega}
    +
    \boldsymbol{\Omega}\boldsymbol{\Gamma}^{\top}
    =
    \boldsymbol{\Sigma}
\end{align}
is satisfied by construction, with the induced infinitesimal covariance
\begin{align}
    \boldsymbol{\Sigma}
    =
    2\mathbf S .
\end{align}

For numerical stability in automatic differentiation, the Cholesky factor
used in computation may be evaluated with a small diagonal jitter:
\begin{align}
    \mathbf L_{\Omega}^{(\epsilon)}
    =
    \operatorname{Chol}
    \left(
        \boldsymbol{\Omega}
        +
        \epsilon\mathbf I_R
    \right),
\end{align}
where \(\epsilon>0\) is a small constant, such as \(10^{-5}\). In the exact
model, this corresponds to taking \(\epsilon=0\).

The initial latent state is generated from the evolution measure at the first
observation time:
\begin{align}
    \boldsymbol{\xi}_{i1}
    =
    \boldsymbol{\mu}_i(t_{i1})
    +
    \mathbf L_{\Omega}^{(\epsilon)}
    \boldsymbol{\eta}_{i1}.
    \label{eq:ncp_initial}
\end{align}
Equivalently,
\begin{align}
    \boldsymbol{\xi}_{i1}
    \mid
    \mathbf x_i^{(1)}
    \sim
    \mathcal N
    \left(
        \boldsymbol{\mu}_i(t_{i1}),
        \boldsymbol{\Omega}
    \right)
\end{align}
when \(\epsilon=0\).

For each subsequent visit \(j=2,\ldots,n_i\), define the elapsed time
\begin{align}
    \Delta t_{ij}
    =
    t_{ij}
    -
    t_{i,j-1},
\end{align}
and define the continuous-time transition matrix
\begin{align}
    \mathbf F_{ij}
    =
    \exp
    \left(
        -
        \boldsymbol{\Gamma}
        \Delta t_{ij}
    \right).
    \label{eq:ncp_transition_matrix}
\end{align}
We use \(\mathbf F_{ij}\), rather than \(\boldsymbol{\Phi}(\Delta t_{ij})\),
to avoid conflict with the dynamic-covariate slope matrix
\(\boldsymbol{\Phi}\).

The OU transition covariance is
\begin{align}
    \mathbf Q_{ij}
    =
    \boldsymbol{\Omega}
    -
    \mathbf F_{ij}
    \boldsymbol{\Omega}
    \mathbf F_{ij}^{\top}.
    \label{eq:ncp_transition_cov}
\end{align}
For stable numerical implementation, we compute
\begin{align}
    \mathbf L_{Q,ij}^{(\epsilon)}
    =
    \operatorname{Chol}
    \left[
        \frac{1}{2}
        \left(
            \mathbf Q_{ij}
            +
            \mathbf Q_{ij}^{\top}
        \right)
        +
        \epsilon\mathbf I_R
    \right],
    \label{eq:ncp_transition_chol}
\end{align}
where the symmetrization removes small floating-point asymmetry and the jitter
ensures positive definiteness in finite-precision computation.

The conditional mean of the latent state at time \(t_{ij}\), given the
previous latent state at time \(t_{i,j-1}\), is
\begin{align}
    \mathbf m_{ij}
    &:=
    \mathbb E
    \left[
        \boldsymbol{\xi}_{ij}
        \mid
        \boldsymbol{\xi}_{i,j-1}
    \right] \notag \\
    &=
    \boldsymbol{\mu}_i(t_{ij})
    +
    \mathbf F_{ij}
    \left[
        \boldsymbol{\xi}_{i,j-1}
        -
        \boldsymbol{\mu}_i(t_{i,j-1})
    \right].
    \label{eq:ncp_transition_mean}
\end{align}
Thus the latent state at visit \(j\) is constructed as
\begin{align}
    \boldsymbol{\xi}_{ij}
    =
    \mathbf m_{ij}
    +
    \mathbf L_{Q,ij}^{(\epsilon)}
    \boldsymbol{\eta}_{ij},
    \qquad
    j=2,\ldots,n_i .
    \label{eq:ncp_transition_state}
\end{align}
In the exact model, with \(\epsilon=0\), this construction implies
\begin{align}
    \boldsymbol{\xi}_{ij}
    \mid
    \boldsymbol{\xi}_{i,j-1},
    \mathbf x_i^{(1)}
    \sim
    \mathcal N
    \left(
        \boldsymbol{\mu}_i(t_{ij})
        +
        e^{-\boldsymbol{\Gamma}\Delta t_{ij}}
        \left[
            \boldsymbol{\xi}_{i,j-1}
            -
            \boldsymbol{\mu}_i(t_{i,j-1})
        \right],
        \boldsymbol{\Omega}
        -
        e^{-\boldsymbol{\Gamma}\Delta t_{ij}}
        \boldsymbol{\Omega}
        e^{-\boldsymbol{\Gamma}^{\top}\Delta t_{ij}}
    \right),
\end{align}
which matches the continuous-time transition distribution in
Equation~\eqref{eq:ou}.

Equivalently, the full non-centered latent trajectory for subject \(i\) is
generated recursively as
\begin{align}
    \boldsymbol{\xi}_{i1}
    &=
    \boldsymbol{\mu}_i(t_{i1})
    +
    \mathbf L_{\Omega}^{(\epsilon)}
    \boldsymbol{\eta}_{i1}, \\
    \boldsymbol{\xi}_{ij}
    &=
    \boldsymbol{\mu}_i(t_{ij})
    +
    \mathbf F_{ij}
    \left[
        \boldsymbol{\xi}_{i,j-1}
        -
        \boldsymbol{\mu}_i(t_{i,j-1})
    \right]
    +
    \mathbf L_{Q,ij}^{(\epsilon)}
    \boldsymbol{\eta}_{ij},
    \qquad
    j=2,\ldots,n_i .
\end{align}

This NCP moves the stochastic sampling step to an isotropic Gaussian space,
\[
    \boldsymbol{\eta}_{ij}
    \sim
    \mathcal N(\mathbf 0,\mathbf I_R),
\]
while all dependence on
\(\boldsymbol{\Gamma}\), \(\boldsymbol{\Omega}\), \(\mathbf S\), \(\mathbf A\),
\(\boldsymbol{\Phi}\), and \(\boldsymbol{\alpha}\) enters through deterministic
transformations. This separation reduces the posterior dependence between
latent states and dynamic parameters, thereby mitigating the hierarchical
funnel geometry that can arise in the centered parameterization. In practice,
this improves the numerical behavior of the No-U-Turn Sampler by stabilizing
Hamiltonian energy transitions and increasing the effective sample size of
the latent trajectory and continuous-time dynamic parameters \citep{papaspiliopoulos2007general}.
\clearpage

\suppsection{The identifiability of the CLOUD framework}{appd:id_proof}
\begin{thrm}[Identifiability of the CLOUD framework under the Gaussian OU transition kernel]
\label{thm:identifiability}
Consider the CLOUD model defined by the conditional ordinal measurement model
\begin{align}
h\!\left[
\mathbb P
\left(
Y_{ijk}\le m
\mid
\boldsymbol\xi_i(t_{ij}), b_{ik}, \mathbf x^{(1)}_{ij}
\right)
\right]
=
\theta_{km}
-
\boldsymbol\Lambda_k^\top \boldsymbol\xi_i(t_{ij})
-
\boldsymbol\beta_k^\top \mathbf x^{(1)}_{ij}
-
b_{ik},
\label{eq:irt_ident}
\end{align}
for \(i=1,\ldots,N\), \(j=1,\ldots,n_i\), \(k=1,\ldots,K\), and
\(m=0,\ldots,c_k-2\). Here \(h\) is a known link function,
\(\boldsymbol\Lambda_k\in\mathbb R^R\) is the loading vector for item \(k\),
and
\begin{align}
    \mathbf B_{\beta}
    =
    \begin{bmatrix}
        \boldsymbol\beta_1^\top \\
        \vdots \\
        \boldsymbol\beta_K^\top
    \end{bmatrix}
    \in\mathbb R^{K\times p}.
\end{align}
Let
\begin{align}
    \mathbf b_i
    =
    (b_{i1},\ldots,b_{iK})^\top
    \sim
    \mathcal N(\mathbf 0,\boldsymbol\Sigma_b),
\end{align}
independently of the latent trajectory.

The latent trajectory follows the covariate-dependent moving-mean Gaussian
OU transition kernel
\begin{align}
\boldsymbol\xi_i(t)
\mid
\boldsymbol\xi_i(s),\mathbf x_i^{(2)}
\sim
\mathcal N
\left(
    \boldsymbol\mu_i(t)
    +
    e^{-\boldsymbol\Gamma(t-s)}
    \left[
        \boldsymbol\xi_i(s)
        -
        \boldsymbol\mu_i(s)
    \right],
    \mathbf Q(t-s)
\right),
\qquad s<t,
\label{eq:ident_transition}
\end{align}
where
\begin{align}
    \mathbf Q(h)
    &:=
    \boldsymbol\Omega
    -
    e^{-\boldsymbol\Gamma h}
    \boldsymbol\Omega
    e^{-\boldsymbol\Gamma^\top h},
    \qquad h>0,
    \label{eq:ident_transition_cov}
\end{align}
and
\begin{align}
    \boldsymbol\mu_i(t)
    =
    \boldsymbol\mu(t,\mathbf x_i^{(1)})
    =
    \mathbf v_i t,
    \qquad
    \mathbf v_i
    =
    \boldsymbol\Phi \mathbf x_i^{(2)}
    +
    \boldsymbol\alpha .
    \label{eq:ident_mu}
\end{align}
Equivalently, for the observed measurement times,
\begin{align}
\boldsymbol\xi_i(t_{i,j+1})
\mid
\boldsymbol\xi_i(t_{ij}),\mathbf x_i^{(2)}
\sim
\mathcal N
\left(
    \boldsymbol\mu_i(t_{i,j+1})
    +
    e^{-\boldsymbol\Gamma\Delta_{ij}}
    \left[
        \boldsymbol\xi_i(t_{ij})
        -
        \boldsymbol\mu_i(t_{ij})
    \right],
    \mathbf Q(\Delta_{ij})
\right),
\label{eq:ident_observed_transition}
\end{align}
where
\begin{align}
    \Delta_{ij}
    =
    t_{i,j+1}-t_{ij}.
\end{align}

Assume the following conditions hold.

\begin{enumerate}
    \item \textbf{Latent scale and initialization.}
    The covariance matrix of the centered latent process is symmetric
    positive definite and normalized by
    \begin{align}
        \boldsymbol\Omega
        =
        \boldsymbol\Omega^\top
        \succ0,
        \qquad
        \operatorname{diag}(\boldsymbol\Omega)
        =
        \mathbf 1_R .
        \label{eq:ident_omega_scale}
    \end{align}
    Thus, \(\boldsymbol\Omega\) is a latent correlation matrix. The first
    observed latent state is initialized from the corresponding moving
    Gaussian law:
    \begin{align}
        \boldsymbol\xi_i(t_{i1})
        \mid
        \mathbf x_i^{(2)}
        \sim
        \mathcal N
        \left(
            \boldsymbol\mu_i(t_{i1}),
            \boldsymbol\Omega
        \right).
        \label{eq:init_ident}
    \end{align}
    Equivalently, defining
    \begin{align}
        \mathbf Z_i(t)
        =
        \boldsymbol\xi_i(t)
        -
        \boldsymbol\mu_i(t),
    \end{align}
    we have
    \begin{align}
        \mathbf Z_i(t_{i1})
        \sim
        \mathcal N(\mathbf 0,\boldsymbol\Omega).
    \end{align}
    The initial centered state, the subject-specific random effects
    \(\mathbf b_i\), and the Gaussian transition innovations associated with
    Equation~\eqref{eq:ident_transition} are mutually independent.

    \item \textbf{Anchor orientation of the loading matrix.}
    There exists a known set of \(R\) anchor items
    \begin{align}
        \mathcal A
        =
        \{a_1,\ldots,a_R\}
    \end{align}
    such that the corresponding \(R\times R\) anchor block of
    \(\boldsymbol\Lambda\) is diagonal with strictly positive diagonal
    entries:
    \begin{align}
        \boldsymbol\Lambda_{\mathcal A}
        =
        \operatorname{diag}
        \left(
            \lambda_{a_1 1},
            \ldots,
            \lambda_{a_R R}
        \right),
        \qquad
        \lambda_{a_r r}>0,
        \quad
        r=1,\ldots,R .
        \label{eq:anchor_ident}
    \end{align}
    All non-anchor rows of \(\boldsymbol\Lambda\) are unrestricted. In
    particular, \(\boldsymbol\Lambda\) has full column rank.

    \item \textbf{Diagonal random-effect covariance.}
    The covariance matrix of the subject-specific random effects is diagonal:
    \begin{align}
        \boldsymbol\Sigma_b
        =
        \operatorname{diag}
        \left(
            \sigma_{b1},\ldots,\sigma_{bK}
        \right),
        \qquad
        \sigma_{bk}>0,
        \quad
        k=1,\ldots,K .
        \label{eq:diag_ident}
    \end{align}

    \item \textbf{Ordinal measurement-model identifiability.}
    The ordinal measurement component satisfies the following regularity
    conditions.
    \begin{enumerate}
        \item The link function \(h\) is known and strictly monotone.

        \item For each item \(k\), the cutpoints are finite and strictly
        ordered:
        \begin{align}
            \theta_{k0}
            <
            \theta_{k1}
            <
            \cdots
            <
            \theta_{k,c_k-2}.
        \end{align}

        \item All response categories used for identification have nonzero
        probability over the support of the observed covariates.

        \item The measurement covariate design has full column rank after
        applying the chosen reference coding or centering convention. In
        particular, there is no unrestricted item-specific measurement
        intercept in \(\mathbf x^{(1)}_{ij}\) that can be absorbed into the
        cutpoints.

        \item Under the anchor loading pattern in
        Equation~\eqref{eq:anchor_ident} and the diagonal random-effect
        covariance in Equation~\eqref{eq:diag_ident}, the ordinal factor
        measurement model satisfies the usual rank and nondegeneracy
        conditions ensuring identification of the cutpoints
        \(\{\theta_{km}\}\), the deterministic latent predictor locations
        \begin{align}
            \boldsymbol\ell_{ij}
            &:=
            -
            \mathbf B_{\beta}\mathbf x^{(1)}_{ij}
            -
            \boldsymbol\Lambda
            \left(
                \boldsymbol\Phi\mathbf x_i^{(2)}
                +
                \boldsymbol\alpha
            \right)t_{ij},
            \label{eq:ell_ident}
        \end{align}
        the same-time latent predictor covariance component
        \begin{align}
            \mathbf C_0
            :=
            \boldsymbol\Lambda
            \boldsymbol\Omega
            \boldsymbol\Lambda^\top,
            \label{eq:c0_ident}
        \end{align}
        and the diagonal random-effect covariance
        \(\boldsymbol\Sigma_b\).

        \item Let
        \begin{align}
            \mathcal T
            =
            \left\{
                t_{ij'}-t_{ij}:
                1\le j<j'\le n_i,\ i=1,\ldots,N
            \right\}
        \end{align}
        denote the set of observed positive time lags. The repeated
        measurement design identifies the cross-time latent predictor
        covariance functions
        \begin{align}
            \mathbf C(\tau)
            +
            \boldsymbol\Sigma_b
            =
            \boldsymbol\Lambda
            \boldsymbol\Omega
            e^{-\boldsymbol\Gamma^\top\tau}
            \boldsymbol\Lambda^\top
            +
            \boldsymbol\Sigma_b,
            \qquad
            \tau\in\mathcal T .
            \label{eq:cross_cov_ident}
        \end{align}
    \end{enumerate}

    \item \textbf{Stable and non-aliased latent dynamics.}
    The drift matrix \(\boldsymbol\Gamma\) is positive stable:
    \begin{align}
        \Re(\lambda)>0
        \qquad
        \text{for all }
        \lambda\in\operatorname{spec}(\boldsymbol\Gamma).
        \label{eq:ps_ident}
    \end{align}
    Moreover,
    \begin{align}
        \boldsymbol\Sigma
        :=
        \boldsymbol\Gamma\boldsymbol\Omega
        +
        \boldsymbol\Omega\boldsymbol\Gamma^\top
        \succ0 .
        \label{eq:sigma_ident}
    \end{align}
    Consequently, for every \(h>0\),
    \begin{align}
        \mathbf Q(h)
        =
        \boldsymbol\Omega
        -
        e^{-\boldsymbol\Gamma h}
        \boldsymbol\Omega
        e^{-\boldsymbol\Gamma^\top h}
        \succ0,
    \end{align}
    so the Gaussian transition kernel in
    Equation~\eqref{eq:ident_transition} is nondegenerate.

    The observed time-lag design is non-aliased for the continuous-time
    transition semigroup. That is, if another positive-stable matrix
    \(\boldsymbol\Gamma^*\) satisfies
    \begin{align}
        e^{-\boldsymbol\Gamma \tau}
        =
        e^{-\boldsymbol\Gamma^* \tau}
    \end{align}
    for every observed lag \(\tau\in\mathcal T\), then
    \begin{align}
        \boldsymbol\Gamma^*
        =
        \boldsymbol\Gamma .
        \label{eq:nonalias_ident}
    \end{align}

    \item \textbf{Covariate-design identifiability.}
    The measurement and dynamic covariate designs are such that the map
    \begin{align}
    (
        \mathbf B_{\beta},
        \boldsymbol\Phi,
        \boldsymbol\alpha
    )
    \mapsto
    \left\{
        -
        \mathbf B_{\beta}\mathbf x^{(1)}_{ij}
        -
        \boldsymbol\Lambda
        \left(
            \boldsymbol\Phi\mathbf x_i^{(2)}
            +
            \boldsymbol\alpha
        \right)t_{ij}
    \right\}_{i,j}
    \end{align}
    is injective. Equivalently, if
    \begin{align}
        \Delta\mathbf B_{\beta}\mathbf x^{(1)}_{ij}
        +
        \boldsymbol\Lambda
        \left(
            \Delta\boldsymbol\Phi\,\mathbf x_i^{(2)}
            +
            \Delta\boldsymbol\alpha
        \right)t_{ij}
        =
        \mathbf 0
        \qquad
        \text{for all } i,j,
        \label{eq:cov_design_ident}
    \end{align}
    then
    \begin{align}
        \Delta\mathbf B_{\beta}
        =
        \mathbf 0,
        \qquad
        \Delta\boldsymbol\Phi
        =
        \mathbf 0,
        \qquad
        \Delta\boldsymbol\alpha
        =
        \mathbf 0 .
    \end{align}
\end{enumerate}

Then the parameter set
\begin{align}
\boldsymbol\Psi
=
\left\{
    \{\theta_{km}\},
    \boldsymbol\Lambda,
    \mathbf B_{\beta},
    \boldsymbol\Sigma_b,
    \boldsymbol\Gamma,
    \boldsymbol\Omega,
    \boldsymbol\Phi,
    \boldsymbol\alpha
\right\}
\end{align}
is identifiable from the joint distribution of the observable process
\(\{\mathbf Y_{ij}:i=1,\ldots,N,\ j=1,\ldots,n_i\}\). That is, if another
parameter set
\(\boldsymbol\Psi^*\) satisfying the same structural restrictions induces
the same joint distribution of the observed data for the given covariate and
time design, then
\begin{align}
    \boldsymbol\Psi^*
    =
    \boldsymbol\Psi .
\end{align}

Furthermore, if the implemented parameterization
\begin{align}
    \boldsymbol\Gamma
    =
    (\mathbf S+\mathbf A)\boldsymbol\Omega^{-1},
    \qquad
    \mathbf S=\mathbf S^\top\succ0,
    \qquad
    \mathbf A^\top=-\mathbf A
    \label{eq:ident_gamma_param}
\end{align}
is used, then \(\mathbf S\) and \(\mathbf A\) are also identifiable once
\(\boldsymbol\Gamma\) and \(\boldsymbol\Omega\) are identified.
\end{thrm}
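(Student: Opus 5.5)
The proof will peel the model apart layer by layer, using condition~4 to reduce everything to statements about second-order and location quantities of the latent linear predictor. Suppose $\boldsymbol\Psi$ and $\boldsymbol\Psi^*$, both satisfying the structural restrictions, induce the same joint law of $\{\mathbf Y_{ij}\}$.

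\textbf{Measurement layer.} By condition~4(e), the ordinal measurement model identifies the following objects:
\begin{align}
\theta^*_{km}=\theta_{km},\qquad \boldsymbol\ell^*_{ij}=\boldsymbol\ell_{ij},\qquad \boldsymbol\Lambda^*\boldsymbol\Omega^*\boldsymbol\Lambda^{*\top}=\boldsymbol\Lambda\boldsymbol\Omega\boldsymbol\Lambda^\top,\qquad \boldsymbol\Sigma_b^*=\boldsymbol\Sigma_b .
\end{align}
The cross-time quantities in \eqref{eq:cross_cov_ident} are identified by condition~4(f). Since $\boldsymbol\Sigma_b$ is already identified, this gives $\boldsymbol\Lambda^*\boldsymbol\Omega^* e^{-\boldsymbol\Gamma^{*\top}\tau}\boldsymbol\Lambda^{*\top}=\boldsymbol\Lambda\boldsymbol\Omega e^{-\boldsymbol\Gamma^\top\tau}\boldsymbol\Lambda^\top$ for all $\tau\in\mathcal T$. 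These latent-predictor covariances are the correct objects to match for the following reason. By Lemma~\ref{lemma:conditional_kernel_stability}(3) and the initialization \eqref{eq:init_ident}, $\mathbf Z_i(t_{ij})\sim\mathcal N(\mathbf 0,\boldsymbol\Omega)$ for every $j$. Moreover, $\operatorname{Cov}(\mathbf Z_i(t+\tau),\mathbf Z_i(t))=e^{-\boldsymbol\Gamma\tau}\boldsymbol\Omega$, by iterating the transition mean and using Chapman--Kolmogorov (Lemma~\ref{lemma:conditional_kernel_stability}(2)).

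\textbf{Resolving the rotation.} From $\mathbf C_0$ I will extract $\boldsymbol\Lambda$ and $\boldsymbol\Omega$ using the anchor block. Restricting $\mathbf C_0$ to the rows and columns in $\mathcal A$ gives
\begin{align}
\mathbf D\boldsymbol\Omega\mathbf D=\mathbf D^*\boldsymbol\Omega^*\mathbf D^*,\qquad \mathbf D=\boldsymbol\Lambda_{\mathcal A}.
\end{align}
Both $\boldsymbol\Omega$ and $\boldsymbol\Omega^*$ have unit diagonal, so comparing diagonals yields $\lambda_{a_rr}^2=\lambda^{*2}_{a_rr}$. Positivity then gives $\mathbf D=\mathbf D^*$, and hence $\boldsymbol\Omega=\boldsymbol\Omega^*$. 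Next, the off-anchor rows: the $(k,\mathcal A)$ block of $\mathbf C_0$ equals $\boldsymbol\Lambda_k^\top\boldsymbol\Omega\mathbf D$. Since $\boldsymbol\Omega\mathbf D$ is invertible, this determines $\boldsymbol\Lambda_k$, so $\boldsymbol\Lambda^*=\boldsymbol\Lambda$.

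\textbf{Dynamic and covariate layers.} With $\boldsymbol\Lambda$ of full column rank and known, take its left inverse $\boldsymbol\Lambda^+$. Applying it on both sides of the cross-covariance identity gives $\boldsymbol\Omega e^{-\boldsymbol\Gamma^\top\tau}=\boldsymbol\Omega e^{-\boldsymbol\Gamma^{*\top}\tau}$. Cancelling $\boldsymbol\Omega\succ0$ and transposing, $e^{-\boldsymbol\Gamma\tau}=e^{-\boldsymbol\Gamma^*\tau}$ for all $\tau\in\mathcal T$. The non-aliasing assumption \eqref{eq:nonalias_ident} then gives $\boldsymbol\Gamma^*=\boldsymbol\Gamma$. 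For the location parameters, $\boldsymbol\ell^*_{ij}=\boldsymbol\ell_{ij}$ with $\boldsymbol\Lambda$ now common yields exactly the homogeneous relation \eqref{eq:cov_design_ident} for the differences. Condition~6 therefore forces $\mathbf B_\beta^*=\mathbf B_\beta$, $\boldsymbol\Phi^*=\boldsymbol\Phi$ and $\boldsymbol\alpha^*=\boldsymbol\alpha$.

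\textbf{Implemented parameterization.} Finally, for the implemented parameterization \eqref{eq:ident_gamma_param}, we have $\mathbf S+\mathbf A=\boldsymbol\Gamma\boldsymbol\Omega$, which is now known. Taking symmetric and skew parts gives
\begin{align}
\mathbf S=\tfrac12(\boldsymbol\Gamma\boldsymbol\Omega+\boldsymbol\Omega\boldsymbol\Gamma^\top),\qquad \mathbf A=\tfrac12(\boldsymbol\Gamma\boldsymbol\Omega-\boldsymbol\Omega\boldsymbol\Gamma^\top),
\end{align}
which is the computation in the proof of Theorem~\ref{thm:omega_metric_drift}. The Cholesky factors are then unique, because their diagonals are positive.

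\textbf{Main obstacle.} The step I expect to be delicate is the first one: justifying that the observed ordinal law delivers exactly the moments $\boldsymbol\ell_{ij}$, $\mathbf C_0$ and $\mathbf C(\tau)+\boldsymbol\Sigma_b$. The random effects enter every time point with the same value, so they contaminate cross-time covariances; the same-time piece also contains $\boldsymbol\Sigma_b$. Condition~4 is stated precisely to absorb this, so the proof will lean on it rather than re-derive polychoric-type identification. I will only verify that the model's implied moments have the stated forms, using the latent stationarity around the moving mean. This holds because the transitions \eqref{eq:ident_transition} preserve $\mathcal N(\boldsymbol\mu_i(t),\boldsymbol\Omega)$ and the innovations are independent of $\mathbf b_i$.
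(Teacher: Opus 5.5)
Your proposal is correct and follows essentially the same route as the paper's proof. Both arguments identify the measurement-layer quantities via condition~4, extract $\mathbf D$, $\boldsymbol\Omega$ and the rows of $\boldsymbol\Lambda$ from the anchor blocks of $\mathbf C_0$, and recover $e^{-\boldsymbol\Gamma^\top\tau}$ by a left inverse before applying non-aliasing, covariate injectivity, and the symmetric/skew split of $\boldsymbol\Gamma\boldsymbol\Omega$. The only difference is cosmetic: you cite Lemma~\ref{lemma:conditional_kernel_stability}(2)--(3) for the stationarity of $\mathbf Z_i(t_{ij})$ and the lag covariance, whereas the paper re-derives these, together with $\mathbf Q(h)\succ0$, inline by induction.
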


\begin{proof}
We prove the result by showing that each component of
\(\boldsymbol\Psi\) is uniquely determined by the joint distribution of the
observed ordinal responses under the stated restrictions.

For subject \(i\), define
\begin{align}
    \mathbf v_i
    =
    \boldsymbol\Phi\mathbf x_i^{(2)}
    +
    \boldsymbol\alpha,
    \qquad
    \boldsymbol\mu_i(t)
    =
    \mathbf v_i t,
    \qquad
    \mathbf Z_i(t)
    =
    \boldsymbol\xi_i(t)
    -
    \boldsymbol\mu_i(t).
\end{align}
For \(h>0\), write
\begin{align}
    \mathbf E_h
    =
    e^{-\boldsymbol\Gamma h},
    \qquad
    \mathbf Q(h)
    =
    \boldsymbol\Omega
    -
    \mathbf E_h
    \boldsymbol\Omega
    \mathbf E_h^\top .
\end{align}
Under the Gaussian transition kernel in
Equation~\eqref{eq:ident_transition}, for any \(s<t\),
\begin{align}
    \mathbf Z_i(t)
    \mid
    \mathbf Z_i(s)
    \sim
    \mathcal N
    \left(
        e^{-\boldsymbol\Gamma(t-s)}
        \mathbf Z_i(s),
        \mathbf Q(t-s)
    \right).
    \label{eq:centered_transition_ident}
\end{align}
Thus the centered latent process is governed directly by the transition
semigroup \(e^{-\boldsymbol\Gamma(t-s)}\) and the transition covariance
\(\mathbf Q(t-s)\). No stochastic differential equation representation is
needed.

We first verify that the transition covariance is nondegenerate under the
stated requirements. Let
\begin{align}
    \boldsymbol\Sigma
    =
    \boldsymbol\Gamma\boldsymbol\Omega
    +
    \boldsymbol\Omega\boldsymbol\Gamma^\top .
\end{align}
For \(\tau\ge0\), define
\begin{align}
    \mathbf R(\tau)
    =
    e^{-\boldsymbol\Gamma\tau}
    \boldsymbol\Omega
    e^{-\boldsymbol\Gamma^\top\tau}.
\end{align}
Then
\begin{align}
    \frac{\mathrm d}{\mathrm d\tau}\mathbf R(\tau)
    &=
    -
    e^{-\boldsymbol\Gamma\tau}
    \left(
        \boldsymbol\Gamma\boldsymbol\Omega
        +
        \boldsymbol\Omega\boldsymbol\Gamma^\top
    \right)
    e^{-\boldsymbol\Gamma^\top\tau} \\
    &=
    -
    e^{-\boldsymbol\Gamma\tau}
    \boldsymbol\Sigma
    e^{-\boldsymbol\Gamma^\top\tau}.
\end{align}
Therefore, for every \(h>0\),
\begin{align}
    \mathbf Q(h)
    &=
    \boldsymbol\Omega
    -
    e^{-\boldsymbol\Gamma h}
    \boldsymbol\Omega
    e^{-\boldsymbol\Gamma^\top h} \\
    &=
    \mathbf R(0)-\mathbf R(h) \\
    &=
    \int_0^h
    e^{-\boldsymbol\Gamma\tau}
    \boldsymbol\Sigma
    e^{-\boldsymbol\Gamma^\top\tau}
    \,\mathrm d\tau .
    \label{eq:q_integral_ident}
\end{align}
Because \(\boldsymbol\Sigma\succ0\) and
\(e^{-\boldsymbol\Gamma\tau}\) is nonsingular for every \(\tau\ge0\), the
integrand in Equation~\eqref{eq:q_integral_ident} is symmetric positive
definite for every \(\tau\ge0\). Hence
\begin{align}
    \mathbf Q(h)\succ0
    \qquad
    \text{for every } h>0 .
\end{align}

Next we derive the marginal and cross-time covariance structure of the
centered latent process. By the initialization condition,
\begin{align}
    \mathbf Z_i(t_{i1})
    \sim
    \mathcal N(\mathbf 0,\boldsymbol\Omega).
\end{align}
If \(\mathbf Z_i(s)\sim\mathcal N(\mathbf 0,\boldsymbol\Omega)\), then by
Equation~\eqref{eq:centered_transition_ident},
\begin{align}
    \operatorname{Var}\{\mathbf Z_i(t)\}
    &=
    e^{-\boldsymbol\Gamma(t-s)}
    \boldsymbol\Omega
    e^{-\boldsymbol\Gamma^\top(t-s)}
    +
    \mathbf Q(t-s) \\
    &=
    e^{-\boldsymbol\Gamma(t-s)}
    \boldsymbol\Omega
    e^{-\boldsymbol\Gamma^\top(t-s)}
    +
    \boldsymbol\Omega
    -
    e^{-\boldsymbol\Gamma(t-s)}
    \boldsymbol\Omega
    e^{-\boldsymbol\Gamma^\top(t-s)} \\
    &=
    \boldsymbol\Omega .
\end{align}
Therefore, by induction over the observed measurement times,
\begin{align}
    \mathbf Z_i(t_{ij})
    \sim
    \mathcal N(\mathbf 0,\boldsymbol\Omega)
    \qquad
    \text{for all } i,j .
    \label{eq:stationary_centered_ident}
\end{align}

For any \(s<t\), Equation~\eqref{eq:centered_transition_ident} implies the
Gaussian representation
\begin{align}
    \mathbf Z_i(t)
    =
    e^{-\boldsymbol\Gamma(t-s)}
    \mathbf Z_i(s)
    +
    \boldsymbol\varepsilon_{s,t},
    \label{eq:centered_representation_ident}
\end{align}
where
\begin{align}
    \boldsymbol\varepsilon_{s,t}
    \sim
    \mathcal N(\mathbf 0,\mathbf Q(t-s))
\end{align}
is independent of \(\mathbf Z_i(s)\). Hence, using
Equation~\eqref{eq:stationary_centered_ident},
\begin{align}
    \operatorname{Cov}
    \left\{
        \mathbf Z_i(s),
        \mathbf Z_i(t)
    \right\}
    &=
    \operatorname{Cov}
    \left\{
        \mathbf Z_i(s),
        e^{-\boldsymbol\Gamma(t-s)}
        \mathbf Z_i(s)
    \right\} \\
    &=
    \boldsymbol\Omega
    e^{-\boldsymbol\Gamma^\top(t-s)} .
    \label{eq:centered_cross_cov_ident}
\end{align}

We now use the measurement-model identifiability conditions. By assumption,
the ordinal measurement component identifies the cutpoints
\(\{\theta_{km}\}\), the deterministic latent predictor locations
\(\boldsymbol\ell_{ij}\), the same-time latent predictor covariance
component
\begin{align}
    \mathbf C_0
    =
    \boldsymbol\Lambda
    \boldsymbol\Omega
    \boldsymbol\Lambda^\top,
\end{align}
and the diagonal random-effect covariance \(\boldsymbol\Sigma_b\). It also
identifies, for every observed lag \(\tau\in\mathcal T\), the cross-time
covariance component
\begin{align}
    \mathbf C(\tau)
    =
    \boldsymbol\Lambda
    \boldsymbol\Omega
    e^{-\boldsymbol\Gamma^\top\tau}
    \boldsymbol\Lambda^\top,
    \label{eq:c_tau_ident}
\end{align}
because \(\boldsymbol\Sigma_b\) has already been identified and can be
subtracted from
\(\mathbf C(\tau)+\boldsymbol\Sigma_b\).

It remains to show that
\(\mathbf C_0=\boldsymbol\Lambda\boldsymbol\Omega\boldsymbol\Lambda^\top\),
together with the anchor and scale restrictions, identifies
\(\boldsymbol\Lambda\) and \(\boldsymbol\Omega\) separately. Let
\begin{align}
    \mathbf D
    =
    \boldsymbol\Lambda_{\mathcal A}
    =
    \operatorname{diag}
    \left(
        d_1,\ldots,d_R
    \right),
    \qquad
    d_r=\lambda_{a_r r}>0 .
\end{align}
The anchor-anchor block of \(\mathbf C_0\) satisfies
\begin{align}
    \mathbf C_{0,\mathcal A\mathcal A}
    =
    \mathbf D
    \boldsymbol\Omega
    \mathbf D .
    \label{eq:c_anchor_ident}
\end{align}
Since \(\operatorname{diag}(\boldsymbol\Omega)=\mathbf 1_R\), the diagonal
entries of Equation~\eqref{eq:c_anchor_ident} give
\begin{align}
    \left(\mathbf C_{0,\mathcal A\mathcal A}\right)_{rr}
    =
    d_r^2 .
\end{align}
Because \(d_r>0\), each anchor loading is identified as
\begin{align}
    d_r
    =
    \sqrt{
        \left(\mathbf C_{0,\mathcal A\mathcal A}\right)_{rr}
    },
    \qquad
    r=1,\ldots,R .
\end{align}
Thus \(\mathbf D\) is identified. Since \(\mathbf D\) is nonsingular,
Equation~\eqref{eq:c_anchor_ident} identifies
\begin{align}
    \boldsymbol\Omega
    =
    \mathbf D^{-1}
    \mathbf C_{0,\mathcal A\mathcal A}
    \mathbf D^{-1}.
    \label{eq:omega_identified}
\end{align}
Hence \(\boldsymbol\Omega\) is identified.

Now let \(\boldsymbol\lambda_k^\top\) denote the \(k\)th row of
\(\boldsymbol\Lambda\). The item-anchor block of \(\mathbf C_0\) satisfies
\begin{align}
    \mathbf C_{0,k\mathcal A}
    =
    \boldsymbol\lambda_k^\top
    \boldsymbol\Omega
    \mathbf D .
\end{align}
Because both \(\boldsymbol\Omega\) and \(\mathbf D\) are identified and
invertible, we obtain
\begin{align}
    \boldsymbol\lambda_k^\top
    =
    \mathbf C_{0,k\mathcal A}
    \mathbf D^{-1}
    \boldsymbol\Omega^{-1},
    \qquad
    k=1,\ldots,K .
    \label{eq:lambda_identified}
\end{align}
Therefore every row of \(\boldsymbol\Lambda\) is identified.

We next identify the drift matrix \(\boldsymbol\Gamma\). Since
\(\boldsymbol\Lambda\) has full column rank, define a left inverse
\begin{align}
    \boldsymbol\Lambda^{-}
    =
    \left(
        \boldsymbol\Lambda^\top
        \boldsymbol\Lambda
    \right)^{-1}
    \boldsymbol\Lambda^\top,
    \qquad
    \boldsymbol\Lambda^{-}\boldsymbol\Lambda
    =
    \mathbf I_R .
\end{align}
For any observed lag \(\tau\in\mathcal T\), using
Equation~\eqref{eq:c_tau_ident},
\begin{align}
    \boldsymbol\Lambda^{-}
    \mathbf C(\tau)
    \left(
        \boldsymbol\Lambda^{-}
    \right)^\top
    &=
    \boldsymbol\Lambda^{-}
    \boldsymbol\Lambda
    \boldsymbol\Omega
    e^{-\boldsymbol\Gamma^\top\tau}
    \boldsymbol\Lambda^\top
    \left(
        \boldsymbol\Lambda^{-}
    \right)^\top \\
    &=
    \boldsymbol\Omega
    e^{-\boldsymbol\Gamma^\top\tau}.
\end{align}
Since \(\boldsymbol\Omega\succ0\), it is invertible. Therefore
\begin{align}
    e^{-\boldsymbol\Gamma^\top\tau}
    =
    \boldsymbol\Omega^{-1}
    \boldsymbol\Lambda^{-}
    \mathbf C(\tau)
    \left(
        \boldsymbol\Lambda^{-}
    \right)^\top,
    \qquad
    \tau\in\mathcal T .
    \label{eq:semigroup_identified}
\end{align}
Thus the observed cross-time covariance functions identify the transition
semigroup \(e^{-\boldsymbol\Gamma^\top\tau}\), and hence
\(e^{-\boldsymbol\Gamma\tau}\), for all observed lags
\(\tau\in\mathcal T\). By the non-aliased time-lag condition in
Equation~\eqref{eq:nonalias_ident}, this identifies
\(\boldsymbol\Gamma\).

It remains to identify the covariate effects. The deterministic latent
predictor locations identified by the measurement model are
\begin{align}
    \boldsymbol\ell_{ij}
    =
    -
    \mathbf B_{\beta}\mathbf x^{(1)}_{ij}
    -
    \boldsymbol\Lambda
    \left(
        \boldsymbol\Phi\mathbf x_i^{(2)}
        +
        \boldsymbol\alpha
    \right)t_{ij}.
    \label{eq:ell_ident_proof}
\end{align}
Since \(\boldsymbol\Lambda\) has already been identified, suppose two sets of
covariate parameters
\((\mathbf B_{\beta},\boldsymbol\Phi,\boldsymbol\alpha)\) and
\((\mathbf B_{\beta}^*,\boldsymbol\Phi^*,\boldsymbol\alpha^*)\) produce the
same \(\boldsymbol\ell_{ij}\) for all \(i,j\). Taking differences gives
\begin{align}
    \Delta\mathbf B_{\beta}\mathbf x^{(1)}_{ij}
    +
    \boldsymbol\Lambda
    \left(
        \Delta\boldsymbol\Phi\,\mathbf x_i^{(2)}
        +
        \Delta\boldsymbol\alpha
    \right)t_{ij}
    =
    \mathbf 0
    \qquad
    \text{for all } i,j,
\end{align}
where
\begin{align}
    \Delta\mathbf B_{\beta}
    =
    \mathbf B_{\beta}
    -
    \mathbf B_{\beta}^*,
    \qquad
    \Delta\boldsymbol\Phi
    =
    \boldsymbol\Phi
    -
    \boldsymbol\Phi^*,
    \qquad
    \Delta\boldsymbol\alpha
    =
    \boldsymbol\alpha
    -
    \boldsymbol\alpha^* .
\end{align}
By the covariate-design injectivity condition,
\begin{align}
    \Delta\mathbf B_{\beta}
    =
    \mathbf 0,
    \qquad
    \Delta\boldsymbol\Phi
    =
    \mathbf 0,
    \qquad
    \Delta\boldsymbol\alpha
    =
    \mathbf 0 .
\end{align}
Thus
\begin{align}
    \mathbf B_{\beta}
    =
    \mathbf B_{\beta}^*,
    \qquad
    \boldsymbol\Phi
    =
    \boldsymbol\Phi^*,
    \qquad
    \boldsymbol\alpha
    =
    \boldsymbol\alpha^* .
\end{align}

Combining the preceding steps, the observed joint distribution identifies
\begin{align}
    \{\theta_{km}\},
    \quad
    \boldsymbol\Lambda,
    \quad
    \mathbf B_{\beta},
    \quad
    \boldsymbol\Sigma_b,
    \quad
    \boldsymbol\Gamma,
    \quad
    \boldsymbol\Omega,
    \quad
    \boldsymbol\Phi,
    \quad
    \boldsymbol\alpha .
\end{align}
Therefore the full parameter set \(\boldsymbol\Psi\) is identifiable.

Finally, suppose the implemented parameterization
\begin{align}
    \boldsymbol\Gamma
    =
    (
        \mathbf S+\mathbf A
    )
    \boldsymbol\Omega^{-1},
    \qquad
    \mathbf S=\mathbf S^\top\succ0,
    \qquad
    \mathbf A^\top=-\mathbf A
\end{align}
is used. Once \(\boldsymbol\Gamma\) and \(\boldsymbol\Omega\) are identified,
we have
\begin{align}
    \boldsymbol\Gamma\boldsymbol\Omega
    =
    \mathbf S+\mathbf A .
\end{align}
Taking the symmetric and skew-symmetric parts gives
\begin{align}
    \mathbf S
    =
    \frac{
        \boldsymbol\Gamma\boldsymbol\Omega
        +
        \boldsymbol\Omega\boldsymbol\Gamma^\top
    }{2},
    \qquad
    \mathbf A
    =
    \frac{
        \boldsymbol\Gamma\boldsymbol\Omega
        -
        \boldsymbol\Omega\boldsymbol\Gamma^\top
    }{2}.
\end{align}
Thus \(\mathbf S\) and \(\mathbf A\) are also identifiable. If
\(\mathbf S\) and \(\boldsymbol\Omega\) are implemented through Cholesky
factors with positive diagonal entries, those Cholesky factors are
identified by the uniqueness of the Cholesky decomposition under the chosen
sign convention. This completes the proof.
\end{proof}
\clearpage

\suppsection{Simulation Details}{appd:simu_supp}
\subsection{Parameter Prior Setting}
The prior distributions used in the Bayesian implementation of CLOUD are summarized in Web Table~\ref{tab:priors}. We assign weakly informative priors to regularize estimation while avoiding strong assumptions about the latent disease dynamics. For the measurement model, the thresholds for the binary items, $\theta_{1:5}$, and the ordered thresholds for the ordinal items, $\boldsymbol{\theta}{6:12}$, follow hierarchical normal priors with mean $\mu\theta$ and scale $\sigma_\theta$. The ordinal-item thresholds are additionally subject to the monotonicity constraints required for ordered categorical responses. The threshold hyperparameters are assigned priors $\mu_\theta \sim \mathcal{N}(0,5)$ and $\sigma_\theta \sim \mathcal{N}^{+}(0,2)$, where $\mathcal{N}^{+}$ denotes a normal distribution truncated to the positive real line. The factor loadings follow positive-truncated normal priors, $\boldsymbol{\lambda}\sim\mathcal{N}^{+}(1,\sigma_\lambda)$, with $\sigma_\lambda\sim\mathcal{N}^{+}(0,2)$, allowing moderate item-level heterogeneity while preserving positive measurement relationships. The item-level covariate effects are assigned weakly informative normal priors, $\boldsymbol{\beta}\sim\mathcal{N}(0,5)$.

Subject-specific random intercepts are specified using a non-centered parameterization. The raw effects follow $\mathbf{b}{\mathrm{raw}}\sim\mathcal{N}(0,1)$, and the corresponding random-effect standard deviations follow $\sigma{bk}\sim\mathcal{N}^{+}(0,1)$. For the latent process model, the covariate slope matrix and latent intercept are assigned priors $\boldsymbol{\Phi}\sim\mathcal{N}(0,2)$ and $\boldsymbol{\alpha}\sim\mathcal{N}(0,0.5)$, respectively. Independent normal priors are placed on the Cholesky factor $\mathbf{L}S$ of the symmetric positive-definite component of the drift matrix and on the free elements $\boldsymbol{\gamma}{\mathrm{skew}}$ defining its skew-symmetric component, with both assigned $\mathcal{N}(0,2)$ priors. This parameterization places priors on unconstrained quantities while ensuring that the induced Ornstein--Uhlenbeck process is stable. The latent correlation structure is modeled using a Cholesky factor $\mathbf{L}{\Omega}$ with an $\operatorname{LKJ\text{-}Corr\text{-}Cholesky}(2.0)$ prior. The latent innovation noise is represented non-centrally, with $\boldsymbol{\xi}{\mathrm{raw}}\sim\mathcal{N}(0,1)$. Unless otherwise specified, all priors are mutually independent.

To mimic the sparse and unbalanced observation schedules commonly encountered in clinical longitudinal studies, the number of repeated measurements for individual $i$, denoted by $n_i$, was sampled from the integers between 2 and 12. The baseline visit was fixed at $t_{i1}=0$, and each subsequent observation interval was independently generated as $\Delta t\sim\mathcal{U}(0.5,1.5)$.

To reflect realistic patterns of data attrition, missing responses were generated under a missing-at-random (MAR) mechanism. After the baseline visit, the probability that a response was missing depended on $p=2$ synthetic measurement-level covariates, $\mathbf{x}^{(1)}$, and the individual’s previously observed response for the corresponding item, through a logistic regression model. Across the simulated datasets, the resulting proportion of missing responses ranged from approximately $10\%$ to $20\%$.
\subsection{Simulation in 2D latent space}
%%%%%%%%%%%%%%%%%%%%%%%%%%%%%%%%%%%%%%%%%%%%%%%%%%%%%%%%%%%%%%%%%%%%%%%%%%%%%%%%%%%%%%%%%%%%%%%%%%%%
% Simulation Setup in 2D Latent Space
%%%%%%%%%%%%%%%%%%%%%%%%%%%%%%%%%%%%%%%%%%%%%%%%%%%%%%%%%%%%%%%%%%%%%%%%%%%%%%%%%%%%%%%%%%%%%%%%%%%%

This supplementary section provides the exact parameterization and distributional assumptions used to generate the two-dimensional latent space simulation. In this setting, the latent process has dimension $R=2$, the measurement model contains $K=7$ categorical items, and the latent mean structure includes both a population-level linear time trend and covariate-dependent linear trends.

\subsubsection{Observation Scheme and Covariates}

For each of the $N=600$ individuals, the number of measurement occasions $n_i$ was drawn from a discrete distribution on $\{2,\ldots,12\}$:
\begin{align}
    \mathbb{P}(n_i = k)
    =
    \{0.10, 0.25, 0.22, 0.18, 0.10, 0.05, 0.05, 0.02, 0.015, 0.008, 0.007\},
\end{align}
for $k \in \{2,3,\ldots,12\}$. The first observation time was set to $t_{i1}=0$. For subsequent observations, the time intervals
$\Delta t_{ij}=t_{ij}-t_{i,j-1}$, $j=2,\ldots,n_i$, were generated independently from
\begin{align}
    \Delta t_{ij} \sim \mathcal{U}(0.5,1.5).
\end{align}
Thus, the resulting observation times were irregularly spaced across individuals.

Two sets of covariates were generated. The time-invariant latent-level covariates were
\[
    \boldsymbol{x}^{(2)}_i =
    \left(x^{(2)}_{i1}, x^{(2)}_{i2}\right)^\top,
\]
where
\[
    x^{(2)}_{i1} \sim \text{Bernoulli}(0.5),
    \qquad
    x^{(2)}_{i2} \sim \mathcal{N}(0,1).
\]
The time-varying measurement-level covariates were
\[
    \boldsymbol{x}^{(1)}_{ij} =
    \left(x^{(11)}_{ij1}, x^{(1)}_{ij2}\right)^\top,
\]
where
\[
    x^{(1)}_{ij1} \sim \text{Bernoulli}(0.5),
    \qquad
    x^{(1)}_{ij2} \sim \mathcal{N}(0,1).
\]

\subsubsection{The Two-Dimensional Latent Ornstein--Uhlenbeck Process}

The latent process was generated from a stationary two-dimensional Ornstein--Uhlenbeck process with an added linear mean structure. Specifically, we first generated a zero-mean stationary OU process
$\boldsymbol{\xi}^*_{ij}=\boldsymbol{\xi}^*_i(t_{ij})$ and then defined the latent process entering the measurement model as
\begin{align}
    \boldsymbol{\xi}_{ij}
    =
    \boldsymbol{\xi}^*_{ij}
    +
    \left(
        \boldsymbol{\alpha}
        +
        \boldsymbol{\Phi}\boldsymbol{x}^{(\xi)}_i
    \right)t_{ij}.
\end{align}
Thus, $\boldsymbol{\alpha}$ represents the population-level linear trend, while $\boldsymbol{\Phi}$ captures how time-invariant covariates modify the latent trajectory over time. The value setting for S1 could be found in Table \ref{tab:s1_comparison_1} and those for S3 could be found in Web Table \ref{tab:s3_comparison_1}.  $\boldsymbol\Omega$ are shared between the two scenarios with $\rho$ to be the off diagonal entry.

The initial latent state was generated from the stationary distribution,
\begin{align}
    \boldsymbol{\xi}^*_{i1}
    \sim
    \mathcal{N}_2(\boldsymbol{0},\boldsymbol{\Omega}).
\end{align}
For $j>1$, conditional on the previous latent state, the transition distribution was
\begin{align}
    \boldsymbol{\xi}^*_{ij}
    \mid
    \boldsymbol{\xi}^*_{i,j-1}
    \sim
    \mathcal{N}_2
    \left\{
        \boldsymbol{E}_{ij}\boldsymbol{\xi}^*_{i,j-1},
        \boldsymbol{\Omega}
        -
        \boldsymbol{E}_{ij}\boldsymbol{\Omega}\boldsymbol{E}_{ij}^{\top}
    \right\},
\end{align}
where
\begin{align}
    \boldsymbol{E}_{ij}
    =
    \exp\left(-\boldsymbol{\Gamma}\Delta t_{ij}\right).
\end{align}

\subsubsection{The IRT Measurement Model}

The measurement model maps the $R=2$ latent variables to $K=7$ categorical items. Items 1--3 are binary, and items 4--7 are ordinal with four response categories. Let $Y_{ijk}$ denote the response for individual $i$ at visit $j$ on item $k$. The cumulative probability of observing category $m$ or below was generated using the ordered logistic model
\begin{align}
    \mathbb{P}(Y_{ijk} \le m)
    =
    \operatorname{expit}
    \left(
        \theta_{km}
        +
        \boldsymbol{\beta}_k^\top \boldsymbol{x}^{(y)}_{ij}
        -
        \boldsymbol{\Lambda}_k^\top \boldsymbol{\xi}_{ij}
        +
        b_{ik}
    \right),
\end{align}
where $\operatorname{expit}(x)=1/\{1+\exp(-x)\}$, $\theta_{km}$ denotes the item threshold, $\boldsymbol{\beta}_k$ is the vector of item-specific measurement-level covariate effects, $\boldsymbol{\Lambda}_k$ is the factor loading vector, and
\begin{align}
    b_{ik} \sim \mathcal{N}(0,\sigma_{bk}^2)
\end{align}
is an individual-and-item-specific random effect. For binary items, the single cumulative probability corresponds to the probability of observing category 0.

The category probabilities were obtained from the cumulative probabilities. Specifically, if item $k$ has $c_k$ categories, then
\begin{align}
    p_{ijk0} &= \mathbb{P}(Y_{ijk}\le 0), \\
    p_{ijkm} &= \mathbb{P}(Y_{ijk}\le m)
               -
               \mathbb{P}(Y_{ijk}\le m-1),
               \qquad m=1,\ldots,c_k-2, \\
    p_{ijk,c_k-1} &= 1-\mathbb{P}(Y_{ijk}\le c_k-2).
\end{align}
The response $Y_{ijk}$ was then sampled from the corresponding categorical distribution.

For factor loadings, a simple loading structure was used, with the first three items loading on the first latent dimension and the remaining four items loading on the second latent dimension. These values together with  the measurement covariate effects are shared between S1 and S3 and the value setting could be found in Table \ref{tab:s1_comparison_1} or Web Table \ref{tab:s3_comparison_1}.

With respect to the item thresholds, for binary items, one threshold was specified. For ordinal items with four categories, three ordered thresholds were specified. The item-specific thresholds and random-effect standard deviations are shared between S1 and S3 and could be found in Table \ref{tab:s1_comparison_1} or Web Table\ref{tab:s3_comparison_1}.

\subsubsection{Missing Data Mechanism}
Baseline measurements were fully observed. For visits $j>1$, item-level missingness was generated under a missing-at-random mechanism depending on the current measurement-level covariates and the previous response for the same item. Let $M_{ijk}$ be the indicator that item $k$ is missing for individual $i$ at visit $j$. The missingness probability was generated as
\begin{align}
    \mathbb{P}(M_{ijk}=1)
    =
    \operatorname{expit}
    \left(
        \kappa_{k0}
        +
        \kappa_{k1}x^{(y)}_{ij1}
        +
        \kappa_{k2}x^{(y)}_{ij2}
        +
        \kappa_{k3}Y_{i,j-1,k}
    \right).
\end{align}
The item-specific missingness coefficients were
\begin{align}
    \boldsymbol{\kappa}
    =
    \begin{bmatrix}
        -0.25 & -0.27 & -0.6 & -1.8 \\
        -0.24 &  0.12 & -0.4 & -1.8 \\
        -0.27 & -0.11 & -0.7 & -1.5 \\
        -0.22 & -0.20 & -0.7 & -1.7 \\
         0.25 & -0.16 & -0.7 & -1.8 \\
         0.20 & -0.13 & -0.8 & -1.9 \\
         0.25 &  0.06 & -1.1 & -1.5
    \end{bmatrix}.
\end{align}
These coefficients were chosen to induce moderate item-level missingness after baseline while preserving complete baseline observations.

\subsection{Simulation in 4D latent space}
This supplementary document provides the exact parameterization and distributional assumptions used to generate the simulated data described in the main manuscript.

\subsubsection{Observation Scheme and Covariates}
For each of the $N=600$ individuals, the number of measurement occasions $n_i$ was drawn from a discrete distribution ranging from 2 to 12. To reflect clinical realities, the sampling probabilities were heavily weighted toward 3 to 5 observations:
\begin{align}
    \mathbb{P}(n_i = k) = \{0.10, 0.25, 0.22, 0.18, 0.10, 0.05, 0.05, 0.02, 0.015, 0.008, 0.007\}
\end{align}
for $k \in \{2, 3, \dots, 12\}$. The time of the first observation was set to $t_{i1} = 0$. Subsequent time intervals $\Delta t_{ij} = t_{ij} - t_{i,j-1}$ were drawn from a uniform distribution $\Delta t_{ij} \sim \mathcal{U}(0.5, 1.5)$.
The covariate settings are the same as in 2D latent space:
\begin{itemize}
    \item Time-invariant latent-level covariates ($\boldsymbol{x}^{(2)}_i$): Generated as $\boldsymbol{x}^{(2)}_i = [\boldsymbol{x}^{(2)}_{i1}, \boldsymbol{x}^{(2)}_{i2}]^\top$, where $\boldsymbol{x}^{(2)}_{i1} \sim \text{Bernoulli}(0.5)$ and $\boldsymbol{x}^{(2)}_{i2} \sim \mathcal{N}(0, 1)$.
    \item Time-varying measurement-level covariates ($\boldsymbol{x}^{(1)}_{ij}$): Generated as $\boldsymbol{x}^{(1)}_{ij} = [\boldsymbol{x}^{(1)}_{ij1}, \boldsymbol{x}^{(1)}_{ij2}]^\top$, where $\boldsymbol{x}^{(1)}_{ij1} \sim \text{Bernoulli}(0.5)$ and $\boldsymbol{x}^{(1)}_{ij2} \sim \mathcal{N}(0, 1)$.
\end{itemize}

\subsubsection{The Multivariate Ornstein-Uhlenbeck Process}
The evolution of the $R=4$ dimensional latent process $\boldsymbol{\xi}_i(t)$ follows a multivariate time-inhomogeneous Ornstein-Uhlenbeck process. The continuous drift matrix $\boldsymbol{\Gamma}$ and the mean shift parameters $\boldsymbol{\Phi}$ (linking $\boldsymbol{x}^{(2)}_i$ to the latent trends) are provided in the main text.
To ensure identifiability, the latent process was standardized to yield a stationary correlation matrix ($\boldsymbol{\Omega}$) with $1$s on the diagonal. This was achieved by setting the raw system noise covariance to the identity matrix ($\boldsymbol{Q}_{raw} = \boldsymbol{I}_4$), solving the continuous Lyapunov equation $\boldsymbol{\Gamma}_{raw} \boldsymbol{\Omega}_{raw} + \boldsymbol{\Omega}_{raw} \boldsymbol{\Gamma}_{raw}^\top = \boldsymbol{Q}_{raw}$ for a raw base drift matrix, and then applying a similarity transformation using a diagonal scaling matrix $\boldsymbol{D}$.
The resulting stationary correlation matrix $\boldsymbol{\Omega}$ used to initialize the process at $t_{i1}$ is implicitly defined by the chosen $\boldsymbol{\Gamma}$ and $\boldsymbol{Q}$ structure, and the process evolves via the transition matrix $\exp(-\boldsymbol{\Gamma} \Delta t_{ij})$. The parameter value setting could be found in Table \ref{tab:s2_comparison_1} for S2 and Web Table \ref{tab:s4_comparison_1} for S4. Here again the $\boldsymbol\Omega$ value are shared between S2 and S4 and the off diagonal entries $\boldsymbol\Omega_{12}$, $\boldsymbol\Omega_{13}$, $\boldsymbol\Omega_{14}$, $\boldsymbol\Omega_{23}$, $\boldsymbol\Omega_{24}$, $\boldsymbol\Omega_{34}$ are represented by $\rho_1$ - $\rho_6$.

\subsubsection{The IRT Model}
The IRT model mapped the $R=4$ latent variables to $K=12$ items (items 1–5 are binary; items 6–12 are ordinal with 4 categories each). Let $Y_{ijk}$ be the response of individual $i$ at time $j$ to item $k$. We utilized an ordered logistic item response model. The cumulative probability of observing category $m$ or lower is given by:
\begin{align}
    \mathbb{P}(Y_{ijk} \le m) = \text{expit}\left(\theta_{km} + \boldsymbol{\beta}_k^\top \boldsymbol{x}^{(y)}_{ij} - \boldsymbol{\Lambda}_k^\top \boldsymbol{\xi}_i(t_{ij}) + b_{ik}\right)
\end{align}
where $\text{expit}(x) = 1 / (1 + \exp(-x))$, $\theta_{km}$ are the category thresholds, $\boldsymbol{\beta}_k$ captures item-specific covariate effects, $\boldsymbol{\Lambda}_k$ is the factor loading vector, and $b_{ik} \sim \mathcal{N}(0, \sigma^2_{bk})$ is an individual-and-item-specific random effect (residual variance). Notice that for binary observation variables, following the practice in \citet{de2013theory} we model its probability being equal to 0 instead of 1. 

For factor loadings ($\boldsymbol{\Lambda}$), to ensure strict model identification, items were constrained to load primarily onto a single factor (simple structure). Together with the measurement covariate effects ($\boldsymbol{\beta}$), the value setting are shared between S2 and S4 and could be found in Table \ref{tab:s2_comparison_1} or Web Table \ref{tab:s4_comparison_1}.
With respect to the item thresholds ($\boldsymbol{\theta}$), for binary items (1–5), only one threshold $\theta_1$ is required and For the ordinal items (6–12) with 4 categories, three thresholds $\{\theta_1, \theta_2, \theta_3\}$ are specified. The specified values along with the residual variances ($\sigma_{b}^2$) values are also shared between S2 and S4 and could be found in Table \ref{tab:s2_comparison_1} or Web Table \ref{tab:s4_comparison_1}.

\subsubsection{Missing Data Mechanism (MAR)}
To emulate study attrition, measurements at the baseline ($j=1$) were fully observed. For $j > 1$, missingness was simulated at the item level. Let $M_{ijk}$ be an indicator that item $k$ is missing for individual $i$ at time $j$. The probability of missingness followed a logistic regression model conditional on measurement-level covariates $\boldsymbol{x}_{ij}$ and the previously observed response for that item ($Y_{i, j-1, k}$):
\begin{align}
    P(M_{ijk} = 1) = \text{expit}\left(\kappa_{k0} + \kappa_{k1}x_{ij1} + \kappa_{k2}x_{ij2} + \kappa_{k3}Y_{i, j-1, k}\right)
\end{align}
The item-specific coefficients $\boldsymbol{\kappa}_k = [\kappa_{k0}, \kappa_{k1}, \kappa_{k2}, \kappa_{k3}]$ were specified as follows:
\begin{align}
\boldsymbol{\kappa} = \begin{bmatrix}
-0.25 & -0.27 & -0.6 & -1.8 \\
-0.24 &  0.12 & -0.4 & -1.8 \\
-0.27 & -0.11 & -0.7 & -1.5 \\
-0.22 & -0.20 & -0.7 & -1.7 \\
 0.10 & -0.10 & -0.5 & -1.6 \\
 0.25 & -0.16 & -0.7 & -1.8 \\
 0.20 & -0.13 & -0.8 & -1.9 \\
 0.25 &  0.06 & -1.1 & -1.5 \\
-0.15 &  0.05 & -0.6 & -1.4 \\
 0.18 & -0.08 & -0.9 & -1.7 \\
-0.20 &  0.15 & -0.5 & -1.5 \\
 0.22 & -0.11 & -0.8 & -1.8
\end{bmatrix}
\end{align}
These parameters were tuned to result in an approximate missingness proportion of 10$\%$ to 20$\%$ across items.

\clearpage

\suppsection{Tables}{appd:tables}
%
%
%
%%%%%%%%%%%%%%%%%%%%%%%%%%%%%%%%%%%%%%%%%%%%%%%%%%%%%%%%%%%%%%%%%%%%%%%%%%%%%%%%%%%%%%%%%%%%%%%%%%%%
% Simulation Priors
%%%%%%%%%%%%%%%%%%%%%%%%%%%%%%%%%%%%%%%%%%%%%%%%%%%%%%%%%%%%%%%%%%%%%%%%%%%%%%%%%%%%%%%%%%%%%%%%%%%%
\begin{table}[p]
\centering
\caption{Summary of weakly informative prior distributions for the CLOUD model parameters.}
\label{tab:priors}

\footnotesize

\sisetup{
  mode = math
}

\setlength{\tabcolsep}{0.35pt}
\renewcommand{\arraystretch}{0.72}

\begin{threeparttable}
\begin{tabular*}{\linewidth}{@{\extracolsep{\fill}}l l c}
\toprule
Parameter & Description & Prior distribution \\
\midrule
            $\theta_{1:5}$ & Thresholds for binary items & $\mathcal{N}(\mu_\theta, \sigma_\theta)$ \\
            $\boldsymbol{\theta}_{6:12}$ & Ordered thresholds for ordinal items & $\mathcal{N}(\mu_\theta, \sigma_\theta)$ \\
            $\mu_\theta$ & Threshold hyperparameter mean & $\mathcal{N}(0, 5)$ \\
            $\sigma_\theta$ & Threshold hyperparameter scale & $\mathcal{N}^{+}(0, 2)$ \\
            $\boldsymbol{\lambda}$ & Factor loadings & $\mathcal{N}^{+}(1, \sigma_\lambda)$ \\
            $\sigma_\lambda$ & Loading hyperparameter scale & $\mathcal{N}^{+}(0, 2)$ \\
            $\boldsymbol{\beta}$ & Item-level covariate effects & $\mathcal{N}(0, 5)$ \\
            $\mathbf{b}_{\text{raw}}$ & Subject random intercepts, raw scale & $\mathcal{N}(0, 1)$ \\
            $\sigma_{bk}$ & Random effect standard deviation & $\mathcal{N}^{+}(0, 1)$ \\
            $\boldsymbol{\Phi}$ & Latent covariate slopes & $\mathcal{N}(0, 2)$ \\
            $\boldsymbol{\alpha}$ & Latent intercept & $\mathcal{N}(0, 0.5)$ \\
            $\mathbf{L}_S$ & Drift SPD Cholesky factor & $\mathcal{N}(0, 2)$ \\
            $\boldsymbol{\gamma}_{\text{skew}}$ & Drift skew-symmetric elements & $\mathcal{N}(0, 2)$ \\
            $\mathbf{L}_{\Omega}$ & Latent correlation Cholesky factor & $\text{LKJ-Corr-Cholesky}(2.0)$ \\
            $\boldsymbol{\xi}_{\text{raw}}$ & Non-centered latent innovation noise & $\mathcal{N}(0, 1)$ \\
\bottomrule
\end{tabular*}
\begin{tablenotes}
\footnotesize
\item $\mathcal{N}^{+}$ denotes a normal prior truncated to the positive real line.
\end{tablenotes}
\end{threeparttable}
\end{table}

%%%%%%%%%%%%%%%%%%%%%%%%%%%%%%%%%%%%%%%%%%%%%%%%%%%%%%%%%%%%%%%%%%%%%%%%%%%%%%%%%%%%%%%%%%%%%%%%%%%%
% Simulation Completion Details 
%%%%%%%%%%%%%%%%%%%%%%%%%%%%%%%%%%%%%%%%%%%%%%%%%%%%%%%%%%%%%%%%%%%%%%%%%%%%%%%%%%%%%%%%%%%%%%%%%%%
% Insert after the existing Web Tables (e.g., after Table J.15) so that
% the appendix table numbering continues automatically.
\begin{table}[!htbp]
\caption{Simulation fit completion, convergence screening, and analysis
denominators by scenario and fitted model.}
\label{tab:simulation-convergence-counts}
\centering
\small
\setlength{\tabcolsep}{4.5pt}
\renewcommand{\arraystretch}{1.05}
\begin{tabular*}{\textwidth}{@{\extracolsep{\fill}}llrrrrrl@{}}
\hline
Scenario & Model &
\multicolumn{4}{c}{Number of fits} &
\multicolumn{1}{c}{Analysis} &
\multicolumn{1}{c}{Exclusion} \\
\cline{3-6}\cline{7-7}\cline{8-8}
 & &
Generated &
Completed &
\shortstack{Met convergence\\criterion} &
Excluded &
$M$ &
reason \\
\hline
S1 & CLOUD        & 100 & 100 &  92 &  8 &  92
   & $\max(\widehat R)\geq 1.10$ \\
S1 & LOU          & 100 & 100 &  82 & 18 &  82
   & $\max(\widehat R)\geq 1.10$ \\
S1 & DiagOU       & 100 & 100 & 100 &  0 & 100
   & None \\
\hline
S2 & CLOUD        & 100 & 100 &  84 & 16 &  84
   & $\max(\widehat R)\geq 1.10$ \\
% Verify against the fit logs. Five current CP cells are incompatible
% with a common denominator of M = 84.
S2 & StationaryOU & 100 & 100 &  88 & 12 &  88
   & $\max(\widehat R)\geq 1.10$ \\
S2 & DiagOU       & 100 & 100 &  99 &  1 &  99
   & $\max(\widehat R)\geq 1.10$ \\
% Verify against the fit logs. Nine current CP cells are incompatible
% with a common denominator of M = 99.
\hline
S3 & CLOUD        & 100 & 100 &  90 & 10 &  90
   & $\max(\widehat R)\geq 1.10$ \\
S3 & LOU          & 100 & 100 &  92 &  8 &  92
   & $\max(\widehat R)\geq 1.10$ \\
S3 & DiagOU       & 100 & 100 & 100 &  0 & 100
   & None \\
\hline
S4 & CLOUD        & 100 & 100 & 100 &  0 & 100
   & None \\
S4 & StationaryOU & 100 & 100 &  96 &  4 &  96
   & $\max(\widehat R)\geq 1.10$ \\
S4 & DiagOU       & 100 & 100 & 100 &  0 & 100
   & None \\
% If M = 100, correct the three currently reported 90.9 percent
% DiagOU CP entries in S4 using their integer coverage counts.
\hline
\end{tabular*}
\vspace{2pt}

\begin{minipage}{\textwidth}
\footnotesize
All 100 generated datasets produced completed model fits. A completed
fit was retained only when the maximum Gelman--Rubin diagnostic across
all monitored parameters was strictly less than 1.10. Thus, excluded
fits completed successfully but had at least one monitored parameter
with $\widehat R\geq 1.10$. The same $M$ retained fits were used to
calculate relative bias (RB), mean squared error (MSE), and coverage
probability (CP), and to summarize effective sample size (ESS) and
$\widehat R$ across simulation replications. LOU denotes the latent
Ornstein--Uhlenbeck model; DiagOU denotes the diagonal
Ornstein--Uhlenbeck model; StationaryOU denotes the stationary
Ornstein--Uhlenbeck model.
\end{minipage}
\end{table}

%%%%%%%%%%%%%%%%%%%%%%%%%%%%%%%%%%%%%%%%%%%%%%%%%%%%%%%%%%%%%%%%%%%%%%%%%%%%%%%%%%%%%%%%%%%%%%%%%%%%
% Simulation in 2D Latent Space: S1 
%%%%%%%%%%%%%%%%%%%%%%%%%%%%%%%%%%%%%%%%%%%%%%%%%%%%%%%%%%%%%%%%%%%%%%%%%%%%%%%%%%%%%%%%%%%%%%%%%%%%
\begin{landscape}
\begin{table}[p]
\centering
\caption{Comparison of simulation results across CLOUD, LOU, and DiagOU frameworks: S1 (continued)}
\label{tab:s1_comparison_2}

\footnotesize

\sisetup{
  mode = math
}

\setlength{\tabcolsep}{0.35pt}
\renewcommand{\arraystretch}{0.72}

\begin{threeparttable}
\begin{tabular*}{\linewidth}{@{\extracolsep{\fill}}l S[table-format=-1.2] *{3}{S[table-format=-2.3] S[table-format=2.3] S[table-format=3.1] S[table-format=4.1] S[table-format=1.2]}}
\toprule
Parameter & {True} & \multicolumn{5}{c}{CLOUD} & \multicolumn{5}{c}{LOU} & \multicolumn{5}{c}{DiagOU} \\
\cmidrule(lr){3-7}\cmidrule(lr){8-12}\cmidrule(l){13-17}
 &  & {RB} & {MSE} & {CP} & {ESS} & {$\hat{R}$} & {RB} & {MSE} & {CP} & {ESS} & {$\hat{R}$} & {RB} & {MSE} & {CP} & {ESS} & {$\hat{R}$} \\
\midrule
\multicolumn{17}{@{}l}{\textit{$\boldsymbol\beta$ parameters}} \\
\midrule
$\beta_{1,1}$ & 0.3 & 0.010 & 0.033 & 95.7 & 8792.1 & 1.00 & -2.021 & 0.413 & 18.3 & 3084.6 & 1.00 & 0.015 & 0.036 & 96.0 & 7306.3 & 1.00 \\
$\beta_{1,2}$ & 0.5 & 0.044 & 0.011 & 94.6 & 7871.5 & 1.00 & -2.092 & 1.104 & 0.0 & 2661.7 & 1.00 & 0.045 & 0.012 & 94.0 & 6422.1 & 1.00 \\
$\beta_{2,1}$ & 0.1 & 0.357 & 0.119 & 94.6 & 5290.0 & 1.00 & -2.135 & 0.242 & 93.9 & 1826.6 & 1.01 & 0.463 & 0.166 & 97.0 & 3772.6 & 1.00 \\
$\beta_{2,2}$ & 0.2 & 0.315 & 0.032 & 94.6 & 4519.2 & 1.01 & -1.970 & 0.189 & 58.5 & 1773.1 & 1.01 & 0.527 & 0.055 & 96.0 & 2895.8 & 1.01 \\
$\beta_{3,1}$ & -0.1 & 0.141 & 0.148 & 96.7 & 5197.3 & 1.00 & -2.958 & 0.305 & 95.1 & 1863.3 & 1.01 & 0.060 & 0.207 & 97.0 & 3915.5 & 1.00 \\
$\beta_{3,2}$ & 0.2 & 0.308 & 0.057 & 94.6 & 4567.6 & 1.01 & -2.049 & 0.220 & 69.5 & 1805.3 & 1.01 & 0.513 & 0.083 & 93.0 & 3298.3 & 1.00 \\
$\beta_{4,1}$ & -0.2 & -0.010 & 0.028 & 94.6 & 5266.6 & 1.00 & -2.127 & 0.211 & 32.9 & 1495.4 & 1.01 & 0.014 & 0.029 & 96.0 & 3095.0 & 1.00 \\
$\beta_{4,2}$ & 0.4 & 0.036 & 0.008 & 96.7 & 4938.0 & 1.00 & -2.102 & 0.716 & 0.0 & 1461.6 & 1.01 & 0.038 & 0.009 & 95.0 & 3020.0 & 1.00 \\
$\beta_{5,1}$ & 0.3 & -0.129 & 0.060 & 95.7 & 4486.3 & 1.00 & -1.984 & 0.447 & 40.2 & 1258.5 & 1.01 & -0.102 & 0.069 & 95.0 & 2606.8 & 1.00 \\
$\beta_{5,2}$ & -0.3 & -0.034 & 0.018 & 94.6 & 4094.3 & 1.00 & -1.903 & 0.344 & 1.2 & 1185.7 & 1.01 & -0.021 & 0.021 & 94.0 & 2399.9 & 1.01 \\
$\beta_{6,1}$ & -0.1 & 0.199 & 0.028 & 94.6 & 5526.3 & 1.00 & -2.209 & 0.072 & 85.4 & 1613.4 & 1.01 & 0.118 & 0.030 & 97.0 & 3372.5 & 1.00 \\
$\beta_{6,2}$ & -0.2 & -0.141 & 0.006 & 97.8 & 5537.5 & 1.00 & -1.857 & 0.145 & 1.2 & 1606.0 & 1.00 & -0.143 & 0.008 & 97.0 & 3404.2 & 1.00 \\
$\beta_{7,1}$ & -0.2 & 0.048 & 0.015 & 94.6 & 6861.1 & 1.00 & -2.076 & 0.189 & 6.1 & 1972.2 & 1.00 & 0.054 & 0.017 & 93.0 & 4343.1 & 1.00 \\
$\beta_{7,2}$ & -0.1 & -0.129 & 0.004 & 94.6 & 7062.0 & 1.00 & -1.760 & 0.035 & 22.0 & 2014.3 & 1.01 & -0.097 & 0.004 & 94.0 & 4520.5 & 1.00 \\
\addlinespace[0.25em]
\multicolumn{17}{@{}l}{\textit{$\boldsymbol{\sigma}_{\text{bk}}$ parameters}} \\
\midrule
$\sigma_{\text{bk},1}$ & 3.7 & 0.020 & 0.117 & 93.5 & 2143.4 & 1.01 & 0.109 & 0.346 & 82.9 & 864.5 & 1.01 & 0.022 & 0.133 & 94.0 & 1979.3 & 1.01 \\
$\sigma_{\text{bk},2}$ & 3.4 & 0.135 & 0.722 & 95.7 & 2710.7 & 1.02 & 0.376 & 2.651 & 90.2 & 357.6 & 1.07 & 0.197 & 1.230 & 98.0 & 596.4 & 1.02 \\
$\sigma_{\text{bk},3}$ & 4.8 & 0.202 & 2.011 & 97.8 & 2777.9 & 1.02 & 0.430 & 5.881 & 89.0 & 407.8 & 1.03 & 0.325 & 3.987 & 91.0 & 704.3 & 1.01 \\
$\sigma_{\text{bk},4}$ & 3.1 & 0.016 & 0.049 & 94.6 & 1746.9 & 1.00 & 0.009 & 0.061 & 95.1 & 683.2 & 1.01 & -0.010 & 0.052 & 98.0 & 1327.6 & 1.01 \\
$\sigma_{\text{bk},5}$ & 6.1 & -0.011 & 0.275 & 95.7 & 1086.1 & 1.01 & 0.027 & 0.536 & 97.6 & 401.2 & 1.02 & 0.000 & 0.408 & 96.0 & 789.5 & 1.01 \\
$\sigma_{\text{bk},6}$ & 5.1 & 0.005 & 0.170 & 91.3 & 1745.7 & 1.01 & 0.024 & 0.197 & 90.2 & 751.5 & 1.01 & -0.001 & 0.178 & 91.0 & 1449.4 & 1.01 \\
$\sigma_{\text{bk},7}$ & 1.7 & 0.014 & 0.012 & 93.5 & 1962.1 & 1.01 & 0.006 & 0.019 & 93.9 & 704.5 & 1.02 & -0.012 & 0.013 & 94.0 & 1676.7 & 1.01 \\
\addlinespace[0.25em]
\multicolumn{17}{@{}l}{\textit{$\boldsymbol\theta$ parameters}} \\
\midrule
$\theta_{1}$ & 2.3 & 0.017 & 0.083 & 94.6 & 2918.9 & 1.00 & -1.168 & 7.269 & 0.0 & 1323.0 & 1.01 & 0.015 & 0.089 & 94.0 & 2518.3 & 1.00 \\
$\theta_{2}$ & 2.6 & 0.122 & 0.331 & 100.0 & 1018.0 & 1.01 & 0.694 & 4.039 & 62.2 & 434.6 & 1.09 & 0.251 & 0.871 & 97.0 & 766.7 & 1.01 \\
$\theta_{3}$ & 2.9 & 0.160 & 0.580 & 97.8 & 1099.8 & 1.02 & 0.522 & 2.819 & 90.2 & 497.1 & 1.02 & 0.281 & 1.202 & 93.0 & 890.8 & 1.01 \\
$\theta_{4,1}$ & -4.0 & 0.029 & 0.109 & 93.5 & 2503.5 & 1.00 & -0.252 & 1.084 & 6.1 & 866.5 & 1.02 & 0.029 & 0.105 & 96.0 & 1898.2 & 1.01 \\
$\theta_{4,2}$ & -1.0 & 0.056 & 0.062 & 92.4 & 3575.0 & 1.00 & -1.029 & 1.111 & 1.2 & 921.2 & 1.02 & 0.060 & 0.061 & 94.0 & 2851.5 & 1.01 \\
$\theta_{4,3}$ & 2.7 & 0.001 & 0.070 & 93.5 & 3539.2 & 1.00 & 0.407 & 1.311 & 0.0 & 940.7 & 1.02 & 0.003 & 0.065 & 95.0 & 2789.6 & 1.00 \\
$\theta_{5,1}$ & -7.5 & -0.005 & 0.461 & 96.7 & 1276.6 & 1.01 & -0.206 & 2.971 & 45.1 & 482.3 & 1.03 & 0.020 & 0.667 & 96.0 & 912.5 & 1.01 \\
$\theta_{5,2}$ & -2.5 & -0.007 & 0.159 & 94.6 & 2019.6 & 1.01 & -0.656 & 2.891 & 8.5 & 882.3 & 1.02 & 0.019 & 0.183 & 99.0 & 1563.4 & 1.01 \\
$\theta_{5,3}$ & 2.6 & -0.003 & 0.160 & 96.7 & 2238.9 & 1.01 & 0.695 & 3.649 & 8.5 & 559.5 & 1.02 & 0.016 & 0.180 & 96.0 & 1738.8 & 1.01 \\
$\theta_{6,1}$ & -5.5 & 0.014 & 0.207 & 93.5 & 1949.5 & 1.00 & -0.166 & 0.963 & 36.6 & 834.9 & 1.01 & 0.018 & 0.212 & 94.0 & 1657.7 & 1.01 \\
$\theta_{6,2}$ & -2.7 & 0.018 & 0.112 & 92.4 & 1974.6 & 1.00 & -0.353 & 0.988 & 17.1 & 879.0 & 1.01 & 0.025 & 0.113 & 95.0 & 1827.1 & 1.01 \\
$\theta_{6,3}$ & 2.5 & 0.007 & 0.069 & 97.8 & 2134.8 & 1.01 & 0.441 & 1.352 & 13.4 & 932.1 & 1.01 & 0.005 & 0.073 & 97.0 & 2039.0 & 1.01 \\
$\theta_{7,1}$ & -4.3 & 0.002 & 0.039 & 95.7 & 3511.3 & 1.00 & -0.132 & 0.362 & 17.1 & 1107.4 & 1.02 & 0.003 & 0.041 & 94.0 & 2768.0 & 1.00 \\
$\theta_{7,2}$ & -1.0 & -0.017 & 0.019 & 93.5 & 4912.7 & 1.00 & -0.542 & 0.313 & 6.1 & 1051.2 & 1.02 & -0.014 & 0.020 & 94.0 & 3665.1 & 1.00 \\
$\theta_{7,3}$ & 1.4 & 0.018 & 0.020 & 96.7 & 5287.6 & 1.00 & 0.412 & 0.353 & 3.7 & 1142.0 & 1.02 & 0.014 & 0.021 & 95.0 & 4023.3 & 1.00 \\
\addlinespace[0.25em]
\multicolumn{17}{@{}l}{\textit{Other parameters}} \\
\midrule
$\rho$ & 0.6 & -0.021 & 0.001 & 94.6 & 1108.1 & 1.02 & -1.591 & 0.912 & 0.0 & 597.4 & 1.02 & -1.000 & 0.360 & 0.0 & \multicolumn{1}{c}{--} & \multicolumn{1}{c}{--} \\
\bottomrule
\end{tabular*}
\begin{tablenotes}
\footnotesize
\item RB: relative bias, defined as
$
\mathrm{RB}
=
M^{-1}\sum_{m=1}^{M}
\frac{\hat{\theta}^{(m)}-\theta_0}{\theta_0},
$
where $m=1,\ldots,M$ indexes the simulation replications, $M$ is the total number of replications, $\hat{\theta}^{(m)}$ is the estimate from replication $m$, and $\theta_0$ is the true parameter value.
\item MSE: mean squared error, defined as
$
\mathrm{MSE}
=
M^{-1}\sum_{m=1}^{M}
\left(\hat{\theta}^{(m)}-\theta_0\right)^2.
$
\item CP: coverage probability, defined as
$
\mathrm{CP}
=
100 \times M^{-1}\sum_{m=1}^{M}
\mathbb{I}\!\left\{\theta_0 \in \mathrm{CI}^{(m)}\right\},
$
where $\mathrm{CI}^{(m)}$ is the credible interval from replication $m$.
\item ESS: effective sample size, measuring the amount of independent information in the posterior draws after accounting for autocorrelation.
\item $\hat{R}$: Gelman--Rubin diagnostic, measuring convergence across Markov chains; values close to 1 indicate good mixing and convergence.
\item A dash indicates that a parameter is not included in a framework, RB is undefined because the true parameter value is zero, or a diagnostic is unavailable.
\end{tablenotes}
\end{threeparttable}
\end{table}

\end{landscape}

%%%%%%%%%%%%%%%%%%%%%%%%%%%%%%%%%%%%%%%%%%%%%%%%%%%%%%%%%%%%%%%%%%%%%%%%%%%%%%%%%%%%%%%%%%%%%%%%%%%%
% Simulation in 4D Latent Space: S2 
%%%%%%%%%%%%%%%%%%%%%%%%%%%%%%%%%%%%%%%%%%%%%%%%%%%%%%%%%%%%%%%%%%%%%%%%%%%%%%%%%%%%%%%%%%%%%%%%%%%%
\begin{landscape}
\begin{table}[p]
\centering
\caption{Comparison of simulation results across CLOUD, StationaryOU, and DiagOU frameworks: S2 (continued)}
\label{tab:s2_comparison_2}

\footnotesize

\sisetup{
  mode = math
}

\setlength{\tabcolsep}{0.35pt}
\renewcommand{\arraystretch}{0.72}

\begin{threeparttable}
\begin{tabular*}{\linewidth}{@{\extracolsep{\fill}}l S[table-format=-1.3] *{3}{S[table-format=-2.3] S[table-format=3.3] S[table-format=3.1] S[table-format=4.1] S[table-format=1.2]}}
\toprule
Parameter & {True} & \multicolumn{5}{c}{CLOUD} & \multicolumn{5}{c}{StationaryOU} & \multicolumn{5}{c}{DiagOU} \\
\cmidrule(lr){3-7}\cmidrule(lr){8-12}\cmidrule(l){13-17}
 &  & {RB} & {MSE} & {CP} & {ESS} & {$\hat{R}$} & {RB} & {MSE} & {CP} & {ESS} & {$\hat{R}$} & {RB} & {MSE} & {CP} & {ESS} & {$\hat{R}$} \\
\midrule
\multicolumn{17}{@{}l}{\textit{$\boldsymbol\beta$ parameters}} \\
\midrule
$\beta_{1,1}$ & 0.3 & -0.107 & 0.037 & 95.2 & 3427.9 & 1.00 & 0.075 & 0.002 & 100.0 & 4212.0 & 1.00 & -0.132 & 0.043 & 91.9 & 3736.3 & 1.00 \\
$\beta_{1,2}$ & 0.5 & 0.031 & 0.009 & 100.0 & 3111.1 & 1.00 & -0.111 & 0.010 & 100.0 & 3590.5 & 1.00 & -0.024 & 0.008 & 98.0 & 3352.5 & 1.00 \\
$\beta_{2,1}$ & 0.1 & -0.786 & 0.053 & 95.2 & 2990.0 & 1.00 & 2.520 & 0.083 & 50.0 & 3709.5 & 1.00 & -0.177 & 0.045 & 94.9 & 3420.7 & 1.00 \\
$\beta_{2,2}$ & 0.2 & 0.030 & 0.010 & 95.2 & 3408.0 & 1.00 & 0.055 & 0.006 & 100.0 & 3549.5 & 1.00 & -0.099 & 0.010 & 92.9 & 3702.5 & 1.01 \\
$\beta_{3,1}$ & -0.1 & 0.032 & 0.043 & 90.5 & 3091.4 & 1.00 & -0.330 & 0.044 & 100.0 & 2885.0 & 1.00 & -0.021 & 0.039 & 96.0 & 3439.2 & 1.00 \\
$\beta_{3,2}$ & 0.2 & -0.095 & 0.013 & 95.2 & 3299.1 & 1.00 & -0.115 & 0.041 & 50.0 & 3133.0 & 1.00 & -0.124 & 0.010 & 96.0 & 3796.5 & 1.00 \\
$\beta_{4,1}$ & 0.2 & -0.330 & 0.030 & 90.5 & 3227.6 & 1.00 & -0.560 & 0.016 & 100.0 & 3744.5 & 1.00 & -0.076 & 0.023 & 96.0 & 3744.5 & 1.00 \\
$\beta_{4,2}$ & -0.1 & -0.131 & 0.005 & 100.0 & 3703.6 & 1.00 & 1.135 & 0.013 & 100.0 & 3605.0 & 1.00 & 0.046 & 0.006 & 99.0 & 4088.3 & 1.00 \\
$\beta_{5,1}$ & 0.1 & -0.044 & 0.063 & 85.7 & 2971.6 & 1.00 & 1.580 & 0.043 & 100.0 & 3469.5 & 1.00 & -0.042 & 0.038 & 92.9 & 3543.3 & 1.00 \\
$\beta_{5,2}$ & 0.3 & -0.248 & 0.022 & 81.0 & 3124.8 & 1.00 & -0.418 & 0.016 & 100.0 & 3498.5 & 1.00 & -0.278 & 0.017 & 81.8 & 3573.8 & 1.00 \\
$\beta_{6,1}$ & -0.1 & -0.180 & 0.014 & 95.2 & 3245.7 & 1.00 & 1.820 & 0.034 & 100.0 & 3227.5 & 1.00 & 0.104 & 0.017 & 94.9 & 3710.6 & 1.00 \\
$\beta_{6,2}$ & -0.2 & -0.026 & 0.003 & 100.0 & 3153.4 & 1.00 & 0.182 & 0.004 & 100.0 & 3393.0 & 1.00 & 0.021 & 0.004 & 97.0 & 3669.9 & 1.00 \\
$\beta_{7,1}$ & -0.2 & 0.111 & 0.010 & 100.0 & 3476.0 & 1.00 & 0.480 & 0.009 & 100.0 & 3865.5 & 1.00 & 0.058 & 0.013 & 96.0 & 3656.2 & 1.00 \\
$\beta_{7,2}$ & -0.1 & 0.170 & 0.003 & 100.0 & 3541.0 & 1.00 & -0.270 & 0.008 & 100.0 & 3809.5 & 1.00 & 0.033 & 0.004 & 90.9 & 3885.8 & 1.00 \\
$\beta_{8,1}$ & 0.4 & -0.070 & 0.009 & 100.0 & 3196.8 & 1.00 & 0.215 & 0.008 & 100.0 & 3095.0 & 1.00 & -0.017 & 0.012 & 96.0 & 3166.7 & 1.00 \\
$\beta_{8,2}$ & 0.1 & -0.119 & 0.002 & 95.2 & 3390.6 & 1.00 & -0.010 & 0.000 & 100.0 & 3530.0 & 1.00 & -0.185 & 0.003 & 94.9 & 3434.1 & 1.00 \\
$\beta_{9,1}$ & 0.2 & -0.143 & 0.005 & 100.0 & 3147.8 & 1.00 & 0.010 & 0.000 & 100.0 & 3629.5 & 1.00 & -0.106 & 0.010 & 97.0 & 3285.9 & 1.00 \\
$\beta_{9,2}$ & -0.4 & 0.041 & 0.004 & 90.5 & 2961.3 & 1.00 & -0.039 & 0.001 & 100.0 & 3122.5 & 1.00 & 0.032 & 0.003 & 91.9 & 2848.9 & 1.00 \\
$\beta_{10,1}$ & -0.3 & -0.088 & 0.018 & 95.2 & 2930.4 & 1.00 & -0.380 & 0.021 & 100.0 & 3060.0 & 1.00 & -0.027 & 0.017 & 91.9 & 2870.8 & 1.00 \\
$\beta_{10,2}$ & 0.2 & -0.128 & 0.003 & 95.2 & 2989.2 & 1.00 & -0.250 & 0.004 & 100.0 & 2980.0 & 1.00 & -0.059 & 0.004 & 92.9 & 2849.4 & 1.00 \\
$\beta_{11,1}$ & 0.1 & 0.049 & 0.014 & 95.2 & 3429.0 & 1.00 & 0.270 & 0.001 & 100.0 & 3787.5 & 1.00 & 0.104 & 0.013 & 94.9 & 3416.6 & 1.00 \\
$\beta_{11,2}$ & -0.1 & 0.020 & 0.003 & 95.2 & 3389.5 & 1.00 & -0.395 & 0.003 & 100.0 & 3571.0 & 1.00 & 0.054 & 0.004 & 94.9 & 3525.5 & 1.00 \\
$\beta_{12,1}$ & 0.2 & -0.069 & 0.016 & 85.7 & 3004.4 & 1.00 & -0.275 & 0.039 & 50.0 & 3294.5 & 1.00 & 0.062 & 0.013 & 93.9 & 3124.6 & 1.00 \\
$\beta_{12,2}$ & 0.3 & -0.023 & 0.004 & 95.2 & 2921.0 & 1.00 & -0.277 & 0.008 & 50.0 & 3072.5 & 1.00 & -0.024 & 0.004 & 88.9 & 2811.7 & 1.00 \\
\addlinespace[0.25em]
\multicolumn{17}{@{}l}{\textit{$\boldsymbol{\sigma}_{\text{bk}}$ parameters}} \\
\midrule
$\sigma_{\text{bk},1}$ & 3.7 & -0.028 & 0.116 & 85.7 & 1040.9 & 1.01 & -0.057 & 0.048 & 100.0 & 1132.0 & 1.00 & -0.079 & 0.152 & 80.8 & 917.3 & 1.01 \\
$\sigma_{\text{bk},2}$ & 3.4 & -0.073 & 0.110 & 81.0 & 1958.4 & 1.01 & -0.110 & 0.142 & 100.0 & 975.0 & 1.00 & -0.135 & 0.268 & 69.7 & 577.3 & 1.02 \\
$\sigma_{\text{bk},3}$ & 4.8 & -0.134 & 0.481 & 87.1 & 1082.9 & 1.01 & -0.218 & 1.095 & 0.0 & 943.0 & 1.00 & -0.160 & 0.670 & 38.4 & 837.9 & 1.02 \\
$\sigma_{\text{bk},4}$ & 3.1 & -0.015 & 0.055 & 95.2 & 1991.0 & 1.00 & -0.138 & 0.184 & 50.0 & 999.0 & 1.00 & -0.084 & 0.109 & 77.8 & 892.2 & 1.01 \\
$\sigma_{\text{bk},5}$ & 4.0 & -0.137 & 0.367 & 88.1 & 1925.0 & 1.01 & -0.280 & 1.257 & 0.0 & 908.5 & 1.00 & -0.219 & 0.817 & 15.2 & 778.7 & 1.01 \\
$\sigma_{\text{bk},6}$ & 6.1 & -0.098 & 0.431 & 87.6 & 1708.2 & 1.01 & -0.171 & 1.124 & 0.0 & 762.5 & 1.00 & -0.150 & 0.894 & 10.1 & 792.4 & 1.01 \\
$\sigma_{\text{bk},7}$ & 5.1 & -0.043 & 0.091 & 90.5 & 1779.5 & 1.01 & -0.047 & 0.098 & 100.0 & 739.0 & 1.01 & -0.050 & 0.116 & 81.8 & 811.7 & 1.02 \\
$\sigma_{\text{bk},8}$ & 1.7 & -0.014 & 0.011 & 90.5 & 1844.7 & 1.01 & -0.140 & 0.058 & 0.0 & 682.5 & 1.01 & -0.010 & 0.014 & 96.0 & 746.6 & 1.02 \\
$\sigma_{\text{bk},9}$ & 2.5 & 0.012 & 0.011 & 100.0 & 1826.1 & 1.01 & 0.031 & 0.015 & 100.0 & 736.5 & 1.00 & -0.004 & 0.017 & 93.9 & 790.0 & 1.01 \\
$\sigma_{\text{bk},10}$ & 3.3 & -0.029 & 0.055 & 85.7 & 1734.8 & 1.01 & -0.017 & 0.007 & 100.0 & 742.5 & 1.01 & -0.030 & 0.042 & 91.9 & 644.1 & 1.01 \\
$\sigma_{\text{bk},11}$ & 4.2 & -0.015 & 0.029 & 95.2 & 1718.2 & 1.01 & -0.002 & 0.036 & 100.0 & 956.0 & 1.00 & -0.031 & 0.046 & 88.9 & 800.7 & 1.01 \\
$\sigma_{\text{bk},12}$ & 2.8 & -0.011 & 0.032 & 95.2 & 1809.6 & 1.01 & -0.059 & 0.029 & 100.0 & 899.5 & 1.00 & -0.009 & 0.026 & 94.9 & 771.3 & 1.01 \\
\bottomrule
\end{tabular*}
\begin{tablenotes}
\footnotesize
\item RB: relative bias, defined as
$
\mathrm{RB}
=
M^{-1}\sum_{m=1}^{M}
\frac{\hat{\theta}^{(m)}-\theta_0}{\theta_0},
$
where $m=1,\ldots,M$ indexes the simulation replications, $M$ is the total number of replications, $\hat{\theta}^{(m)}$ is the estimate from replication $m$, and $\theta_0$ is the true parameter value.
\item MSE: mean squared error, defined as
$
\mathrm{MSE}
=
M^{-1}\sum_{m=1}^{M}
\left(\hat{\theta}^{(m)}-\theta_0\right)^2.
$
\item CP: coverage probability, defined as
$
\mathrm{CP}
=
100 \times M^{-1}\sum_{m=1}^{M}
\mathbb{I}\!\left\{\theta_0 \in \mathrm{CI}^{(m)}\right\},
$
where $\mathrm{CI}^{(m)}$ is the credible interval from replication $m$.
\item ESS: effective sample size, measuring the amount of independent information in the posterior draws after accounting for autocorrelation.
\item $\hat{R}$: Gelman--Rubin diagnostic, measuring convergence across Markov chains; values close to 1 indicate good mixing and convergence.
\item A dash indicates that a parameter is not included in a framework, RB is undefined because the true parameter value is zero, or a diagnostic is unavailable.
\end{tablenotes}
\end{threeparttable}
\end{table}

\begin{table}[p]
\ContinuedFloat
\centering
\caption{Comparison of simulation results across CLOUD, StationaryOU, and DiagOU frameworks: S2 (continued)}
\label{tab:s2_comparison_3}

\footnotesize

\sisetup{
  mode = math
}

\setlength{\tabcolsep}{0.35pt}
\renewcommand{\arraystretch}{0.72}

\begin{threeparttable}
\begin{tabular*}{\linewidth}{@{\extracolsep{\fill}}l S[table-format=-1.3] *{3}{S[table-format=-2.3] S[table-format=3.3] S[table-format=3.1] S[table-format=4.1] S[table-format=1.2]}}
\toprule
Parameter & {True} & \multicolumn{5}{c}{CLOUD} & \multicolumn{5}{c}{StationaryOU} & \multicolumn{5}{c}{DiagOU} \\
\cmidrule(lr){3-7}\cmidrule(lr){8-12}\cmidrule(l){13-17}
 &  & {RB} & {MSE} & {CP} & {ESS} & {$\hat{R}$} & {RB} & {MSE} & {CP} & {ESS} & {$\hat{R}$} & {RB} & {MSE} & {CP} & {ESS} & {$\hat{R}$} \\
\midrule
\multicolumn{17}{@{}l}{\textit{$\boldsymbol\theta$ parameters}} \\
\midrule
$\theta_{1}$ & 2.3 & -0.037 & 0.047 & 100.0 & 1236.4 & 1.01 & 0.307 & 0.511 & 0.0 & 1252.0 & 1.00 & -0.060 & 0.070 & 92.9 & 1332.5 & 1.01 \\
$\theta_{2}$ & 1.5 & -0.053 & 0.054 & 90.5 & 1385.5 & 1.00 & 0.593 & 0.804 & 0.0 & 1035.5 & 1.00 & -0.073 & 0.050 & 90.9 & 1319.3 & 1.01 \\
$\theta_{3}$ & 0.8 & -0.155 & 0.084 & 90.5 & 1056.0 & 1.01 & 1.463 & 1.375 & 0.0 & 1043.5 & 1.00 & -0.052 & 0.060 & 94.9 & 1264.0 & 1.01 \\
$\theta_{4}$ & 1.2 & 0.027 & 0.031 & 100.0 & 1213.7 & 1.01 & -0.324 & 0.152 & 50.0 & 1341.5 & 1.00 & -0.048 & 0.037 & 96.0 & 1476.3 & 1.01 \\
$\theta_{5}$ & 2.1 & -0.149 & 0.129 & 86.2 & 1101.3 & 1.01 & -0.571 & 1.442 & 0.0 & 1310.5 & 1.00 & -0.201 & 0.226 & 55.6 & 1322.9 & 1.01 \\
$\theta_{6,1}$ & -4.0 & -0.063 & 0.232 & 81.0 & 1343.7 & 1.02 & -0.159 & 0.412 & 0.0 & 633.0 & 1.01 & -0.087 & 0.246 & 73.7 & 610.9 & 1.03 \\
$\theta_{6,2}$ & -1.0 & -0.074 & 0.134 & 86.2 & 1295.0 & 1.02 & -0.394 & 0.161 & 100.0 & 594.0 & 1.01 & -0.059 & 0.100 & 84.8 & 614.5 & 1.02 \\
$\theta_{6,3}$ & 2.7 & -0.043 & 0.154 & 81.0 & 1331.6 & 1.02 & -0.006 & 0.008 & 100.0 & 510.0 & 1.01 & -0.106 & 0.176 & 79.8 & 652.4 & 1.02 \\
$\theta_{7,1}$ & -5.5 & 0.001 & 0.075 & 95.2 & 1471.0 & 1.02 & 0.007 & 0.012 & 100.0 & 476.5 & 1.00 & -0.014 & 0.101 & 89.9 & 531.0 & 1.02 \\
$\theta_{7,2}$ & -2.5 & 0.028 & 0.041 & 95.2 & 1363.6 & 1.03 & 0.094 & 0.055 & 100.0 & 396.0 & 1.01 & -0.008 & 0.059 & 94.9 & 428.4 & 1.03 \\
$\theta_{7,3}$ & 2.6 & -0.045 & 0.045 & 100.0 & 1395.0 & 1.02 & -0.139 & 0.143 & 50.0 & 430.5 & 1.00 & -0.022 & 0.060 & 96.0 & 475.3 & 1.02 \\
$\theta_{8,1}$ & -4.5 & 0.003 & 0.026 & 95.2 & 1287.6 & 1.01 & 0.099 & 0.221 & 50.0 & 822.0 & 1.00 & 0.007 & 0.043 & 97.0 & 1018.6 & 1.01 \\
$\theta_{8,2}$ & -1.5 & 0.001 & 0.017 & 90.5 & 1859.4 & 1.00 & 0.311 & 0.233 & 0.0 & 1101.0 & 1.00 & 0.003 & 0.012 & 96.0 & 1838.9 & 1.01 \\
$\theta_{8,3}$ & 2.0 & -0.020 & 0.022 & 85.7 & 1490.4 & 1.00 & -0.335 & 0.450 & 0.0 & 1408.5 & 1.00 & -0.010 & 0.022 & 92.9 & 1077.1 & 1.02 \\
$\theta_{9,1}$ & -3.0 & 0.012 & 0.027 & 95.2 & 1903.4 & 1.00 & 0.196 & 0.345 & 0.0 & 822.0 & 1.00 & 0.010 & 0.025 & 93.9 & 1034.8 & 1.01 \\
$\theta_{9,2}$ & 0.0 & \multicolumn{1}{c}{--} & 0.023 & 91.7 & 1166.9 & 1.01 & \multicolumn{1}{c}{--} & 0.430 & 0.0 & 1164.7 & 1.01 & \multicolumn{1}{c}{--} & 0.020 & 95.0 & 1096.7 & 1.01 \\
$\theta_{9,3}$ & 3.0 & 0.020 & 0.034 & 95.2 & 1067.6 & 1.00 & -0.227 & 0.465 & 0.0 & 884.5 & 1.00 & -0.006 & 0.036 & 93.9 & 968.9 & 1.01 \\
$\theta_{10,1}$ & -6.0 & -0.024 & 0.081 & 95.2 & 1760.0 & 1.01 & -0.221 & 1.798 & 0.0 & 528.0 & 1.01 & -0.033 & 0.123 & 89.9 & 689.0 & 1.01 \\
$\theta_{10,2}$ & -2.0 & -0.052 & 0.039 & 95.2 & 1733.3 & 1.01 & -0.633 & 1.606 & 0.0 & 936.0 & 1.01 & -0.044 & 0.043 & 92.9 & 860.8 & 1.01 \\
$\theta_{10,3}$ & 1.5 & 0.016 & 0.011 & 100.0 & 1789.7 & 1.01 & 0.759 & 1.314 & 0.0 & 866.0 & 1.00 & 0.008 & 0.023 & 97.0 & 1025.7 & 1.01 \\
$\theta_{11,1}$ & -5.0 & -0.006 & 0.091 & 85.7 & 1549.1 & 1.03 & -0.222 & 1.233 & 0.0 & 514.0 & 1.01 & -0.018 & 0.073 & 93.9 & 665.1 & 1.02 \\
$\theta_{11,2}$ & -1.0 & -0.051 & 0.064 & 81.0 & 1412.0 & 1.03 & -0.914 & 0.839 & 0.0 & 469.5 & 1.01 & -0.031 & 0.043 & 94.9 & 534.8 & 1.02 \\
$\theta_{11,3}$ & 2.5 & 0.013 & 0.075 & 81.0 & 1449.9 & 1.02 & 0.399 & 1.014 & 0.0 & 648.5 & 1.00 & -0.010 & 0.050 & 94.9 & 580.3 & 1.02 \\
$\theta_{12,1}$ & -4.0 & -0.005 & 0.031 & 90.5 & 1888.7 & 1.01 & -0.247 & 1.029 & 0.0 & 937.5 & 1.00 & -0.025 & 0.046 & 90.9 & 882.7 & 1.01 \\
$\theta_{12,2}$ & -0.5 & -0.021 & 0.012 & 100.0 & 1885.2 & 1.01 & -1.979 & 1.010 & 0.0 & 1092.5 & 1.00 & -0.082 & 0.022 & 94.9 & 1096.6 & 1.01 \\
$\theta_{12,3}$ & 2.0 & 0.001 & 0.017 & 100.0 & 1956.3 & 1.01 & 0.426 & 0.734 & 0.0 & 1080.5 & 1.00 & 0.012 & 0.029 & 91.9 & 1152.9 & 1.01 \\
\addlinespace[0.25em]
\multicolumn{17}{@{}l}{\textit{Other parameters}} \\
\midrule
$\rho_{1}$ & -0.02 & -0.078 & 0.012 & 95.2 & 1332.9 & 1.04 & 3.658 & 0.268 & 0.0 & 243.5 & 1.03 & -1.000 & 0.020 & 0.0 & 3600.0 & \multicolumn{1}{c}{--} \\
$\rho_{2}$ & 0.04 & -0.023 & 0.011 & 91.7 & 1175.8 & 1.05 & 1.996 & 0.121 & 0.0 & 306.4 & 1.02 & -1.000 & 0.030 & 0.0 & 3600.0 & \multicolumn{1}{c}{--} \\
$\rho_{3}$ & 0.6 & 0.037 & 0.020 & 91.7 & 1134.2 & 1.07 & -2.475 & 0.455 & 0.0 & 312.5 & 1.02 & -1.000 & 0.074 & 0.0 & 3600.0 & \multicolumn{1}{c}{--} \\
$\rho_{4}$ & -0.048 & -0.023 & 0.008 & 100.0 & 1303.1 & 1.02 & 0.173 & 0.006 & 81.8 & 489.3 & 1.03 & -1.000 & 0.059 & 0.0 & 3600.0 & \multicolumn{1}{c}{--} \\
$\rho_{5}$ & -0.06 & -0.067 & 0.005 & 100.0 & 1350.9 & 1.02 & -8.375 & 0.008 & 81.8 & 420.2 & 1.02 & -1.000 & 0.000 & 0.0 & 3600.0 & \multicolumn{1}{c}{--} \\
$\rho_{6}$ & 0.0 & \multicolumn{1}{c}{--} & 0.002 & 100.0 & 1776.4 & 1.01 & \multicolumn{1}{c}{--} & 0.297 & 0.0 & 539.0 & 1.02 & \multicolumn{1}{c}{--} & 0.002 & 0.0 & 3600.0 & \multicolumn{1}{c}{--} \\
\bottomrule
\end{tabular*}
\begin{tablenotes}
\footnotesize
\item RB: relative bias, defined as
$
\mathrm{RB}
=
M^{-1}\sum_{m=1}^{M}
\frac{\hat{\theta}^{(m)}-\theta_0}{\theta_0},
$
where $m=1,\ldots,M$ indexes the simulation replications, $M$ is the total number of replications, $\hat{\theta}^{(m)}$ is the estimate from replication $m$, and $\theta_0$ is the true parameter value.
\item MSE: mean squared error, defined as
$
\mathrm{MSE}
=
M^{-1}\sum_{m=1}^{M}
\left(\hat{\theta}^{(m)}-\theta_0\right)^2.
$
\item CP: coverage probability, defined as
$
\mathrm{CP}
=
100 \times M^{-1}\sum_{m=1}^{M}
\mathbb{I}\!\left\{\theta_0 \in \mathrm{CI}^{(m)}\right\},
$
where $\mathrm{CI}^{(m)}$ is the credible interval from replication $m$.
\item ESS: effective sample size, measuring the amount of independent information in the posterior draws after accounting for autocorrelation.
\item $\hat{R}$: Gelman--Rubin diagnostic, measuring convergence across Markov chains; values close to 1 indicate good mixing and convergence.
\item A dash indicates that a parameter is not included in a framework, RB is undefined because the true parameter value is zero, or a diagnostic is unavailable.
\end{tablenotes}
\end{threeparttable}
\end{table}
\end{landscape}

%%%%%%%%%%%%%%%%%%%%%%%%%%%%%%%%%%%%%%%%%%%%%%%%%%%%%%%%%%%%%%%%%%%%%%%%%%%%%%%%%%%%%%%%%%%%%%%%%%%%
% % Simulation in 2D Latent Space: S3 
% %%%%%%%%%%%%%%%%%%%%%%%%%%%%%%%%%%%%%%%%%%%%%%%%%%%%%%%%%%%%%%%%%%%%%%%%%%%%%%%%%%%%%%%%%%%%%%%%%%%%
\begin{landscape}
\begin{table}[p]
\centering
\caption{Comparison of simulation results across CLOUD, LOU, and DiagOU frameworks: S3}
\label{tab:s3_comparison_1}

\footnotesize

\sisetup{
  mode = math
}

\setlength{\tabcolsep}{0.35pt}
\renewcommand{\arraystretch}{0.72}

\begin{threeparttable}
\begin{tabular*}{\linewidth}{@{\extracolsep{\fill}}l S[table-format=-1.2] *{3}{S[table-format=-2.3] S[table-format=2.3] S[table-format=3.1] S[table-format=4.1] S[table-format=1.2]}}
\toprule
Parameter & {True} & \multicolumn{5}{c}{CLOUD} & \multicolumn{5}{c}{LOU} & \multicolumn{5}{c}{DiagOU} \\
\cmidrule(lr){3-7}\cmidrule(lr){8-12}\cmidrule(l){13-17}
 &  & {RB} & {MSE} & {CP} & {ESS} & {$\hat{R}$} & {RB} & {MSE} & {CP} & {ESS} & {$\hat{R}$} & {RB} & {MSE} & {CP} & {ESS} & {$\hat{R}$} \\
\midrule
\multicolumn{17}{@{}l}{\textit{$\boldsymbol\Gamma$ parameters}} \\
\midrule
$\Gamma_{1,1}$ & 0.7 & -0.059 & 0.003 & 95.6 & 4766.7 & 1.02 & -0.476 & 0.112 & 0.0 & 642.1 & 1.02 & 13.645 & 1.877 & 0.0 & 1887.6 & 1.00 \\
$\Gamma_{1,2}$ & -0.25 & 0.002 & 0.001 & 92.2 & 4475.0 & 1.02 & -1.060 & 0.075 & 0.0 & 295.1 & 1.07 & -1.000 & 2.007 & 0.0 & \multicolumn{1}{c}{--} & \multicolumn{1}{c}{--} \\
$\Gamma_{2,1}$ & -0.35 & 0.003 & 0.001 & 88.9 & 4531.8 & 1.02 & -1.108 & 0.152 & 0.0 & 338.8 & 1.04 & -1.000 & 1.993 & 0.0 & \multicolumn{1}{c}{--} & \multicolumn{1}{c}{--} \\
$\Gamma_{2,2}$ & 0.65 & 0.088 & 0.003 & 92.2 & 4222.2 & 1.01 & -0.631 & 0.168 & 0.0 & 798.2 & 1.03 & 12.794 & 1.640 & 0.0 & 1710.8 & 1.01 \\
\addlinespace[0.25em]
\multicolumn{17}{@{}l}{\textit{$\boldsymbol\Lambda$ parameters}} \\
\midrule
$\lambda_{1}$ & 1.2 & 0.031 & 0.012 & 96.7 & 3318.1 & 1.00 & 1.176 & 2.077 & 0.0 & 1013.7 & 1.01 & 0.150 & 0.055 & 72.0 & 2210.4 & 1.00 \\
$\lambda_{2}$ & 4.0 & 0.068 & 0.704 & 93.3 & 1957.1 & 1.01 & 1.607 & 46.593 & 3.3 & 292.5 & 1.04 & 0.321 & 3.253 & 90.0 & 563.1 & 1.02 \\
$\lambda_{3}$ & 4.1 & 0.105 & 0.905 & 91.1 & 1018.2 & 1.01 & 1.623 & 50.464 & 1.1 & 311.3 & 1.03 & 0.368 & 4.217 & 85.0 & 642.8 & 1.01 \\
$\lambda_{4}$ & 3.1 & 0.003 & 0.032 & 93.3 & 1921.2 & 1.00 & 0.688 & 4.664 & 0.0 & 942.3 & 1.01 & 0.092 & 0.129 & 77.0 & 1469.4 & 1.01 \\
$\lambda_{5}$ & 5.2 & 0.006 & 0.192 & 94.4 & 1064.5 & 1.01 & 0.758 & 16.625 & 0.0 & 417.4 & 1.03 & 0.106 & 0.642 & 89.0 & 708.9 & 1.01 \\
$\lambda_{6}$ & 3.0 & 0.015 & 0.032 & 94.4 & 1970.1 & 1.01 & 0.779 & 5.589 & 0.0 & 907.1 & 1.02 & 0.119 & 0.173 & 70.0 & 1531.3 & 1.01 \\
$\lambda_{7}$ & 1.7 & 0.003 & 0.007 & 94.4 & 2777.5 & 1.01 & 0.737 & 1.596 & 0.0 & 1429.9 & 1.01 & 0.084 & 0.029 & 56.0 & 2500.0 & 1.01 \\
\addlinespace[0.25em]
\multicolumn{17}{@{}l}{\textit{$\boldsymbol\Phi$ parameters}} \\
\midrule
$\Phi_{1,1}$ & 0.4 & 0.020 & 0.001 & 98.9 & 4281.6 & 1.00 & \multicolumn{1}{c}{--} & \multicolumn{1}{c}{--} & \multicolumn{1}{c}{--} & \multicolumn{1}{c}{--} & \multicolumn{1}{c}{--} & -0.073 & 0.002 & 88.0 & 3800.7 & 1.00 \\
$\Phi_{1,2}$ & -0.2 & 0.013 & 0.000 & 95.6 & 4058.6 & 1.00 & \multicolumn{1}{c}{--} & \multicolumn{1}{c}{--} & \multicolumn{1}{c}{--} & \multicolumn{1}{c}{--} & \multicolumn{1}{c}{--} & -0.078 & 0.001 & 90.0 & 3623.3 & 1.00 \\
$\Phi_{2,1}$ & -0.3 & 0.010 & 0.000 & 94.4 & 2775.7 & 1.01 & \multicolumn{1}{c}{--} & \multicolumn{1}{c}{--} & \multicolumn{1}{c}{--} & \multicolumn{1}{c}{--} & \multicolumn{1}{c}{--} & -0.071 & 0.001 & 89.0 & 2410.0 & 1.00 \\
$\Phi_{2,2}$ & 0.5 & 0.000 & 0.000 & 96.7 & 2478.9 & 1.01 & \multicolumn{1}{c}{--} & \multicolumn{1}{c}{--} & \multicolumn{1}{c}{--} & \multicolumn{1}{c}{--} & \multicolumn{1}{c}{--} & -0.076 & 0.002 & 37.0 & 2504.3 & 1.00 \\
\addlinespace[0.25em]
\multicolumn{17}{@{}l}{\textit{$\boldsymbol\alpha$ parameters}} \\
\midrule
$\alpha_{1}$ & 0.5 & 0.016 & 0.001 & 96.7 & 3426.9 & 1.01 & \multicolumn{1}{c}{--} & \multicolumn{1}{c}{--} & \multicolumn{1}{c}{--} & \multicolumn{1}{c}{--} & \multicolumn{1}{c}{--} & -0.075 & 0.002 & 67.0 & 2689.6 & 1.00 \\
$\alpha_{2}$ & -0.3 & -0.008 & 0.000 & 93.3 & 2993.9 & 1.00 & \multicolumn{1}{c}{--} & \multicolumn{1}{c}{--} & \multicolumn{1}{c}{--} & \multicolumn{1}{c}{--} & \multicolumn{1}{c}{--} & -0.085 & 0.001 & 75.0 & 2344.3 & 1.00 \\
\bottomrule
\end{tabular*}
\begin{tablenotes}
\footnotesize
\item RB: relative bias, defined as
$
\mathrm{RB}
=
M^{-1}\sum_{m=1}^{M}
\frac{\hat{\theta}^{(m)}-\theta_0}{\theta_0},
$
where $m=1,\ldots,M$ indexes the simulation replications, $M$ is the total number of replications, $\hat{\theta}^{(m)}$ is the estimate from replication $m$, and $\theta_0$ is the true parameter value.
\item MSE: mean squared error, defined as
$
\mathrm{MSE}
=
M^{-1}\sum_{m=1}^{M}
\left(\hat{\theta}^{(m)}-\theta_0\right)^2.
$
\item CP: coverage probability, defined as
$
\mathrm{CP}
=
100 \times M^{-1}\sum_{m=1}^{M}
\mathbb{I}\!\left\{\theta_0 \in \mathrm{CI}^{(m)}\right\},
$
where $\mathrm{CI}^{(m)}$ is the credible interval from replication $m$.
\item ESS: effective sample size, measuring the amount of independent information in the posterior draws after accounting for autocorrelation.
\item $\hat{R}$: Gelman--Rubin diagnostic, measuring convergence across Markov chains; values close to 1 indicate good mixing and convergence.
\item A dash indicates that a parameter is not included in a framework, RB is undefined because the true parameter value is zero, or a diagnostic is unavailable.
\end{tablenotes}
\end{threeparttable}
\end{table}

\begin{table}[p]
\ContinuedFloat
\centering
\caption{Comparison of simulation results across CLOUD, LOU, and DiagOU frameworks: S3 (continued)}
\label{tab:s3_comparison_2}

\footnotesize

\sisetup{
  mode = math
}

\setlength{\tabcolsep}{0.35pt}
\renewcommand{\arraystretch}{0.72}

\begin{threeparttable}
\begin{tabular*}{\linewidth}{@{\extracolsep{\fill}}l S[table-format=-1.2] *{3}{S[table-format=-2.3] S[table-format=2.3] S[table-format=3.1] S[table-format=4.1] S[table-format=1.2]}}
\toprule
Parameter & {True} & \multicolumn{5}{c}{CLOUD} & \multicolumn{5}{c}{LOU} & \multicolumn{5}{c}{DiagOU} \\
\cmidrule(lr){3-7}\cmidrule(lr){8-12}\cmidrule(l){13-17}
 &  & {RB} & {MSE} & {CP} & {ESS} & {$\hat{R}$} & {RB} & {MSE} & {CP} & {ESS} & {$\hat{R}$} & {RB} & {MSE} & {CP} & {ESS} & {$\hat{R}$} \\
\midrule
\multicolumn{17}{@{}l}{\textit{$\boldsymbol\beta$ parameters}} \\
\midrule
$\beta_{1,1}$ & 0.3 & 0.075 & 0.035 & 97.8 & 8568.3 & 1.00 & -2.139 & 0.454 & 13.0 & 3086.0 & 1.01 & 0.056 & 0.039 & 97.0 & 7307.8 & 1.00 \\
$\beta_{1,2}$ & 0.5 & 0.012 & 0.012 & 92.2 & 7890.8 & 1.00 & -2.112 & 1.126 & 0.0 & 2612.8 & 1.01 & 0.028 & 0.014 & 95.0 & 6284.3 & 1.00 \\
$\beta_{2,1}$ & 0.1 & -0.109 & 0.089 & 94.4 & 5905.4 & 1.00 & -2.538 & 0.299 & 95.7 & 1556.6 & 1.01 & 0.052 & 0.129 & 96.0 & 3688.4 & 1.00 \\
$\beta_{2,2}$ & 0.2 & 0.100 & 0.027 & 92.2 & 5479.8 & 1.00 & -2.198 & 0.264 & 57.6 & 1375.7 & 1.01 & 0.317 & 0.058 & 91.0 & 2992.3 & 1.01 \\
$\beta_{3,1}$ & -0.1 & -0.080 & 0.121 & 96.7 & 5611.1 & 1.00 & -2.927 & 0.394 & 95.7 & 1613.8 & 1.01 & 0.055 & 0.183 & 97.0 & 3776.8 & 1.00 \\
$\beta_{3,2}$ & 0.2 & 0.199 & 0.027 & 93.3 & 5180.5 & 1.00 & -1.910 & 0.221 & 72.8 & 1537.8 & 1.01 & 0.463 & 0.054 & 96.0 & 3197.4 & 1.00 \\
$\beta_{4,1}$ & -0.2 & -0.088 & 0.025 & 93.3 & 5640.9 & 1.00 & -1.971 & 0.184 & 41.3 & 1772.4 & 1.01 & -0.154 & 0.034 & 96.0 & 2795.7 & 1.00 \\
$\beta_{4,2}$ & 0.4 & -0.013 & 0.007 & 93.3 & 5363.1 & 1.00 & -2.109 & 0.718 & 0.0 & 1735.5 & 1.02 & -0.005 & 0.008 & 93.0 & 2814.9 & 1.00 \\
$\beta_{5,1}$ & 0.3 & 0.028 & 0.057 & 96.7 & 4653.3 & 1.00 & -2.028 & 0.441 & 39.1 & 1450.1 & 1.01 & 0.088 & 0.090 & 95.0 & 2352.9 & 1.01 \\
$\beta_{5,2}$ & -0.3 & -0.036 & 0.011 & 98.9 & 4239.3 & 1.00 & -1.879 & 0.336 & 2.2 & 1392.6 & 1.01 & -0.036 & 0.018 & 97.0 & 2315.4 & 1.01 \\
$\beta_{6,1}$ & -0.1 & -0.304 & 0.024 & 95.6 & 5879.9 & 1.00 & -1.988 & 0.061 & 83.7 & 1918.5 & 1.00 & -0.379 & 0.032 & 98.0 & 3030.0 & 1.00 \\
$\beta_{6,2}$ & -0.2 & 0.013 & 0.007 & 95.6 & 5763.4 & 1.00 & -1.883 & 0.151 & 2.2 & 1879.8 & 1.00 & -0.003 & 0.009 & 97.0 & 3070.7 & 1.00 \\
$\beta_{7,1}$ & -0.2 & 0.039 & 0.012 & 97.8 & 7353.6 & 1.00 & -1.953 & 0.167 & 8.7 & 2320.3 & 1.01 & -0.008 & 0.016 & 96.0 & 3880.9 & 1.00 \\
$\beta_{7,2}$ & -0.1 & -0.021 & 0.003 & 96.7 & 7530.5 & 1.00 & -1.785 & 0.036 & 19.6 & 2352.7 & 1.00 & -0.062 & 0.003 & 98.0 & 4090.0 & 1.00 \\
\addlinespace[0.25em]
\multicolumn{17}{@{}l}{\textit{$\boldsymbol{\sigma}_{\text{bk}}$ parameters}} \\
\midrule
$\sigma_{\text{bk},1}$ & 3.7 & 0.021 & 0.084 & 97.8 & 2107.8 & 1.00 & 0.092 & 0.267 & 85.9 & 867.0 & 1.02 & 0.026 & 0.115 & 96.0 & 1893.0 & 1.01 \\
$\sigma_{\text{bk},2}$ & 3.4 & 0.068 & 0.523 & 94.4 & 2910.9 & 1.01 & 0.359 & 2.793 & 96.7 & 260.5 & 1.05 & 0.049 & 0.837 & 95.0 & 550.1 & 1.02 \\
$\sigma_{\text{bk},3}$ & 4.8 & 0.117 & 1.351 & 94.4 & 3974.8 & 1.01 & 0.527 & 9.762 & 93.5 & 323.4 & 1.02 & 0.199 & 2.794 & 97.0 & 637.9 & 1.01 \\
$\sigma_{\text{bk},4}$ & 3.1 & -0.003 & 0.047 & 92.2 & 1607.7 & 1.01 & 0.002 & 0.057 & 96.7 & 729.0 & 1.01 & -0.046 & 0.073 & 89.0 & 1284.7 & 1.01 \\
$\sigma_{\text{bk},5}$ & 6.1 & 0.004 & 0.216 & 98.9 & 1054.7 & 1.01 & 0.035 & 0.509 & 98.9 & 446.0 & 1.03 & -0.014 & 0.345 & 98.0 & 719.5 & 1.01 \\
$\sigma_{\text{bk},6}$ & 5.1 & 0.005 & 0.095 & 95.6 & 1578.6 & 1.01 & 0.015 & 0.165 & 94.6 & 819.4 & 1.02 & 0.004 & 0.144 & 97.0 & 1378.8 & 1.01 \\
$\sigma_{\text{bk},7}$ & 1.7 & 0.010 & 0.013 & 90.0 & 2067.0 & 1.01 & -0.004 & 0.016 & 93.5 & 673.1 & 1.02 & -0.036 & 0.018 & 90.0 & 1661.2 & 1.01 \\
\addlinespace[0.25em]
\multicolumn{17}{@{}l}{\textit{$\boldsymbol\theta$ parameters}} \\
\midrule
$\theta_{1}$ & 2.3 & 0.015 & 0.060 & 96.7 & 2879.1 & 1.00 & -1.388 & 10.249 & 0.0 & 1253.7 & 1.01 & 0.021 & 0.089 & 94.0 & 2486.9 & 1.00 \\
$\theta_{2}$ & 2.6 & 0.060 & 0.256 & 94.4 & 1369.6 & 1.01 & 0.018 & 0.470 & 98.9 & 392.7 & 1.03 & 0.165 & 0.672 & 97.0 & 692.8 & 1.02 \\
$\theta_{3}$ & 2.9 & 0.084 & 0.419 & 96.7 & 1422.9 & 1.01 & -0.178 & 0.641 & 91.3 & 492.8 & 1.02 & 0.203 & 0.992 & 98.0 & 804.5 & 1.01 \\
$\theta_{4,1}$ & -4.0 & 0.020 & 0.077 & 92.2 & 2318.2 & 1.00 & -0.579 & 5.461 & 0.0 & 578.0 & 1.02 & 0.023 & 0.103 & 90.0 & 1867.8 & 1.00 \\
$\theta_{4,2}$ & -1.0 & 0.052 & 0.048 & 92.2 & 3129.4 & 1.00 & -2.296 & 5.373 & 0.0 & 598.1 & 1.02 & 0.058 & 0.059 & 92.0 & 2822.1 & 1.00 \\
$\theta_{4,3}$ & 2.7 & -0.020 & 0.053 & 94.4 & 3087.3 & 1.00 & 0.874 & 5.727 & 0.0 & 720.3 & 1.02 & -0.019 & 0.061 & 94.0 & 2711.1 & 1.00 \\
$\theta_{5,1}$ & -7.5 & 0.011 & 0.423 & 96.7 & 1210.7 & 1.01 & -0.492 & 14.065 & 1.1 & 501.1 & 1.02 & 0.024 & 0.675 & 96.0 & 839.7 & 1.01 \\
$\theta_{5,2}$ & -2.5 & 0.018 & 0.132 & 97.8 & 1585.6 & 1.01 & -1.545 & 15.189 & 0.0 & 581.5 & 1.03 & 0.040 & 0.185 & 97.0 & 1439.0 & 1.01 \\
$\theta_{5,3}$ & 2.6 & -0.011 & 0.146 & 97.8 & 1703.4 & 1.01 & 1.569 & 17.317 & 0.0 & 486.8 & 1.02 & -0.009 & 0.172 & 99.0 & 1575.8 & 1.01 \\
$\theta_{6,1}$ & -5.5 & 0.019 & 0.114 & 95.6 & 1626.8 & 1.01 & -0.408 & 5.174 & 0.0 & 667.5 & 1.01 & 0.032 & 0.178 & 95.0 & 1642.2 & 1.01 \\
$\theta_{6,2}$ & -2.7 & 0.029 & 0.064 & 96.7 & 1508.1 & 1.01 & -0.835 & 5.189 & 0.0 & 645.4 & 1.01 & 0.045 & 0.089 & 98.0 & 1823.2 & 1.01 \\
$\theta_{6,3}$ & 2.5 & 0.001 & 0.074 & 94.4 & 1637.4 & 1.01 & 0.957 & 5.903 & 0.0 & 747.5 & 1.02 & 0.003 & 0.082 & 97.0 & 2048.1 & 1.00 \\
$\theta_{7,1}$ & -4.3 & 0.003 & 0.038 & 92.2 & 3531.0 & 1.00 & -0.301 & 1.715 & 0.0 & 743.5 & 1.03 & 0.000 & 0.036 & 95.0 & 2800.5 & 1.00 \\
$\theta_{7,2}$ & -1.0 & 0.005 & 0.018 & 90.0 & 5045.7 & 1.00 & -1.254 & 1.604 & 0.0 & 647.9 & 1.03 & 0.001 & 0.018 & 96.0 & 3691.2 & 1.00 \\
$\theta_{7,3}$ & 1.4 & 0.016 & 0.019 & 92.2 & 5291.5 & 1.00 & 0.929 & 1.730 & 0.0 & 740.5 & 1.02 & 0.016 & 0.023 & 92.0 & 4006.7 & 1.00 \\
\addlinespace[0.25em]
\multicolumn{17}{@{}l}{\textit{Other parameters}} \\
\midrule
$\rho$ & 0.6 & -0.047 & 0.001 & 95.6 & 1936.9 & 1.01 & -1.506 & 0.818 & 0.0 & 584.8 & 1.02 & -1.000 & 0.001 & 0.0 & \multicolumn{1}{c}{--} & \multicolumn{1}{c}{--} \\
\bottomrule
\end{tabular*}
\begin{tablenotes}
\footnotesize
\item RB: relative bias, defined as
$
\mathrm{RB}
=
M^{-1}\sum_{m=1}^{M}
\frac{\hat{\theta}^{(m)}-\theta_0}{\theta_0},
$
where $m=1,\ldots,M$ indexes the simulation replications, $M$ is the total number of replications, $\hat{\theta}^{(m)}$ is the estimate from replication $m$, and $\theta_0$ is the true parameter value.
\item MSE: mean squared error, defined as
$
\mathrm{MSE}
=
M^{-1}\sum_{m=1}^{M}
\left(\hat{\theta}^{(m)}-\theta_0\right)^2.
$
\item CP: coverage probability, defined as
$
\mathrm{CP}
=
100 \times M^{-1}\sum_{m=1}^{M}
\mathbb{I}\!\left\{\theta_0 \in \mathrm{CI}^{(m)}\right\},
$
where $\mathrm{CI}^{(m)}$ is the credible interval from replication $m$.
\item ESS: effective sample size, measuring the amount of independent information in the posterior draws after accounting for autocorrelation.
\item $\hat{R}$: Gelman--Rubin diagnostic, measuring convergence across Markov chains; values close to 1 indicate good mixing and convergence.
\item A dash indicates that a parameter is not included in a framework, RB is undefined because the true parameter value is zero, or a diagnostic is unavailable.
\end{tablenotes}
\end{threeparttable}
\end{table}
\end{landscape}

%%%%%%%%%%%%%%%%%%%%%%%%%%%%%%%%%%%%%%%%%%%%%%%%%%%%%%%%%%%%%%%%%%%%%%%%%%%%%%%%%%%%%%%%%%%%%%%%%%%%
% Simulation in 4D Latent Space: S4 
%%%%%%%%%%%%%%%%%%%%%%%%%%%%%%%%%%%%%%%%%%%%%%%%%%%%%%%%%%%%%%%%%%%%%%%%%%%%%%%%%%%%%%%%%%%%%%%%%%%%
\begin{landscape}
\begin{table}[p]
\centering
\caption{Comparison of simulation results across CLOUD, StationaryOU, and DiagOU frameworks: S4}
\label{tab:s4_comparison_1}

\footnotesize

\sisetup{
  mode = math
}

\setlength{\tabcolsep}{0.35pt}
\renewcommand{\arraystretch}{0.72}

\begin{threeparttable}
\begin{tabular*}{\linewidth}{@{\extracolsep{\fill}}l S[table-format=-1.3] *{3}{S[table-format=-2.3] S[table-format=2.3] S[table-format=3.1] S[table-format=4.1] S[table-format=1.2]}}
\toprule
Parameter & {True} & \multicolumn{5}{c}{CLOUD} & \multicolumn{5}{c}{StationaryOU} & \multicolumn{5}{c}{DiagOU} \\
\cmidrule(lr){3-7}\cmidrule(lr){8-12}\cmidrule(l){13-17}
 &  & {RB} & {MSE} & {CP} & {ESS} & {$\hat{R}$} & {RB} & {MSE} & {CP} & {ESS} & {$\hat{R}$} & {RB} & {MSE} & {CP} & {ESS} & {$\hat{R}$} \\
\midrule
\multicolumn{17}{@{}l}{\textit{$\boldsymbol\Phi$ parameters}} \\
\midrule
$\Phi_{1,1}$ & 0.4 & 0.058 & 0.002 & 100.0 & 2201.6 & 1.00 & \multicolumn{1}{c}{--} & \multicolumn{1}{c}{--} & \multicolumn{1}{c}{--} & \multicolumn{1}{c}{--} & \multicolumn{1}{c}{--} & 0.071 & 0.005 & 100.0 & 1120.7 & 1.01 \\
$\Phi_{1,2}$ & -0.2 & 0.044 & 0.000 & 100.0 & 2019.2 & 1.01 & \multicolumn{1}{c}{--} & \multicolumn{1}{c}{--} & \multicolumn{1}{c}{--} & \multicolumn{1}{c}{--} & \multicolumn{1}{c}{--} & 0.119 & 0.002 & 100.0 & 1088.7 & 1.02 \\
$\Phi_{2,1}$ & -0.3 & 0.030 & 0.002 & 85.0 & 1778.7 & 1.00 & \multicolumn{1}{c}{--} & \multicolumn{1}{c}{--} & \multicolumn{1}{c}{--} & \multicolumn{1}{c}{--} & \multicolumn{1}{c}{--} & 0.145 & 0.005 & 90.9 & 1107.1 & 1.01 \\
$\Phi_{2,2}$ & 0.5 & 0.014 & 0.000 & 95.0 & 1163.9 & 1.01 & \multicolumn{1}{c}{--} & \multicolumn{1}{c}{--} & \multicolumn{1}{c}{--} & \multicolumn{1}{c}{--} & \multicolumn{1}{c}{--} & 0.035 & 0.001 & 100.0 & 637.9 & 1.01 \\
$\Phi_{3,1}$ & 0.2 & 0.024 & 0.001 & 95.0 & 1780.5 & 1.01 & \multicolumn{1}{c}{--} & \multicolumn{1}{c}{--} & \multicolumn{1}{c}{--} & \multicolumn{1}{c}{--} & \multicolumn{1}{c}{--} & 0.002 & 0.001 & 90.9 & 1305.9 & 1.00 \\
$\Phi_{3,2}$ & 0.1 & -0.020 & 0.000 & 100.0 & 2005.2 & 1.00 & \multicolumn{1}{c}{--} & \multicolumn{1}{c}{--} & \multicolumn{1}{c}{--} & \multicolumn{1}{c}{--} & \multicolumn{1}{c}{--} & -0.086 & 0.000 & 100.0 & 1463.0 & 1.01 \\
$\Phi_{4,1}$ & -0.1 & 0.125 & 0.001 & 100.0 & 1679.5 & 1.00 & \multicolumn{1}{c}{--} & \multicolumn{1}{c}{--} & \multicolumn{1}{c}{--} & \multicolumn{1}{c}{--} & \multicolumn{1}{c}{--} & 0.070 & 0.002 & 90.9 & 1413.1 & 1.01 \\
$\Phi_{4,2}$ & -0.3 & 0.007 & 0.000 & 90.0 & 1560.9 & 1.01 & \multicolumn{1}{c}{--} & \multicolumn{1}{c}{--} & \multicolumn{1}{c}{--} & \multicolumn{1}{c}{--} & \multicolumn{1}{c}{--} & -0.010 & 0.000 & 100.0 & 653.6 & 1.01 \\
\addlinespace[0.25em]
\multicolumn{17}{@{}l}{\textit{$\boldsymbol\Gamma$ parameters}} \\
\midrule
$\Gamma_{1,1}$ & 0.79 & -0.074 & 0.050 & 90.0 & 2276.6 & 1.04 & 1.333 & 0.666 & 0.0 & 572.5 & 1.01 & 0.632 & 0.524 & 0.0 & 614.6 & 1.03 \\
$\Gamma_{1,2}$ & -0.29 & 0.019 & 0.034 & 95.0 & 2242.9 & 1.04 & -0.377 & 0.373 & 16.7 & 398.8 & 1.02 & -1.000 & 0.558 & 0.0 & 3600.0 & \multicolumn{1}{c}{--} \\
$\Gamma_{1,3}$ & 0.17 & -0.058 & 0.016 & 100.0 & 2468.2 & 1.03 & 14.079 & 1.306 & 0.0 & 492.8 & 1.01 & -1.000 & 0.036 & 0.0 & 3600.0 & \multicolumn{1}{c}{--} \\
$\Gamma_{1,4}$ & -0.12 & 0.003 & 0.053 & 85.0 & 2478.2 & 1.03 & -0.237 & 0.161 & 33.3 & 431.5 & 1.01 & -1.000 & 0.024 & 100.0 & 3600.0 & \multicolumn{1}{c}{--} \\
$\Gamma_{2,1}$ & -0.19 & 0.014 & 0.032 & 90.0 & 2278.8 & 1.02 & -0.640 & 0.809 & 0.0 & 534.2 & 1.01 & -1.000 & 0.062 & 0.0 & 3600.0 & \multicolumn{1}{c}{--} \\
$\Gamma_{2,2}$ & 0.89 & 0.018 & 0.034 & 90.0 & 2333.1 & 1.02 & -0.784 & 0.156 & 0.0 & 471.2 & 1.01 & -0.292 & 0.093 & 49.0 & 668.8 & 1.01 \\
$\Gamma_{2,3}$ & -0.1 & -0.078 & 0.022 & 100.0 & 2484.2 & 1.01 & -0.547 & 0.030 & 83.3 & 471.7 & 1.02 & -1.000 & 0.026 & 0.0 & 3600.0 & \multicolumn{1}{c}{--} \\
$\Gamma_{2,4}$ & 0.17 & 0.078 & 0.020 & 100.0 & 2480.6 & 1.02 & -3.186 & 0.087 & 16.7 & 524.2 & 1.01 & -1.000 & 0.057 & 0.0 & 3600.0 & \multicolumn{1}{c}{--} \\
$\Gamma_{3,1}$ & -0.11 & 0.042 & 0.013 & 100.0 & 2394.2 & 1.03 & -7.513 & 0.923 & 0.0 & 563.5 & 1.01 & -1.000 & 0.016 & 0.0 & 3600.0 & \multicolumn{1}{c}{--} \\
$\Gamma_{3,2}$ & 0.12 & -0.085 & 0.020 & 95.0 & 2433.4 & 1.02 & 3.182 & 0.672 & 0.0 & 538.0 & 1.01 & -1.000 & 0.063 & 0.0 & 3600.0 & \multicolumn{1}{c}{--} \\
$\Gamma_{3,3}$ & 0.71 & 0.060 & 0.015 & 95.0 & 2617.0 & 1.01 & 0.942 & 0.467 & 16.7 & 501.7 & 1.01 & 4.327 & 0.843 & 0.0 & 600.8 & 1.02 \\
$\Gamma_{3,4}$ & 0.19 & 0.006 & 0.008 & 100.0 & 2616.3 & 1.02 & 0.026 & 0.025 & 100.0 & 520.3 & 1.01 & -1.000 & 0.486 & 0.0 & 3600.0 & \multicolumn{1}{c}{--} \\
$\Gamma_{4,1}$ & -0.38 & 0.011 & 0.033 & 85.0 & 2428.2 & 1.02 & 2.820 & 0.486 & 0.0 & 534.8 & 1.01 & -1.000 & 0.310 & 100.0 & 3600.0 & \multicolumn{1}{c}{--} \\
$\Gamma_{4,2}$ & 0.45 & -0.036 & 0.009 & 100.0 & 2501.2 & 1.02 & 2.590 & 0.111 & 16.7 & 544.2 & 1.01 & -1.000 & 0.315 & 0.0 & 3600.0 & \multicolumn{1}{c}{--} \\
$\Gamma_{4,3}$ & -0.19 & 0.015 & 0.015 & 100.0 & 2657.7 & 1.01 & -0.466 & 0.987 & 0.0 & 491.5 & 1.01 & -1.000 & 0.498 & 0.0 & 3600.0 & \multicolumn{1}{c}{--} \\
$\Gamma_{4,4}$ & 1.04 & 0.008 & 0.031 & 90.0 & 2635.4 & 1.01 & -0.790 & 0.432 & 0.0 & 465.2 & 1.01 & 0.728 & 0.857 & 0.0 & 484.2 & 1.02 \\
\addlinespace[0.25em]
\multicolumn{17}{@{}l}{\textit{$\boldsymbol\alpha$ parameters}} \\
\midrule
$\alpha_{1}$ & 0.5 & 0.045 & 0.002 & 90.0 & 1423.0 & 1.01 & \multicolumn{1}{c}{--} & \multicolumn{1}{c}{--} & \multicolumn{1}{c}{--} & \multicolumn{1}{c}{--} & \multicolumn{1}{c}{--} & 0.072 & 0.002 & 77.0 & 1074.9 & 1.01 \\
$\alpha_{2}$ & -0.3 & -0.009 & 0.001 & 90.0 & 1621.7 & 1.01 & \multicolumn{1}{c}{--} & \multicolumn{1}{c}{--} & \multicolumn{1}{c}{--} & \multicolumn{1}{c}{--} & \multicolumn{1}{c}{--} & -0.286 & 0.008 & 2.0 & 1184.2 & 1.01 \\
$\alpha_{3}$ & 0.2 & -0.015 & 0.000 & 100.0 & 1783.5 & 1.01 & \multicolumn{1}{c}{--} & \multicolumn{1}{c}{--} & \multicolumn{1}{c}{--} & \multicolumn{1}{c}{--} & \multicolumn{1}{c}{--} & 0.335 & 0.005 & 1.0 & 1223.7 & 1.01 \\
$\alpha_{4}$ & -0.4 & -0.016 & 0.000 & 100.0 & 1543.2 & 1.01 & \multicolumn{1}{c}{--} & \multicolumn{1}{c}{--} & \multicolumn{1}{c}{--} & \multicolumn{1}{c}{--} & \multicolumn{1}{c}{--} & -0.198 & 0.007 & 2.0 & 1006.7 & 1.01 \\
\addlinespace[0.25em]
\multicolumn{17}{@{}l}{\textit{$\boldsymbol\Lambda$ parameters}} \\
\midrule
$\lambda_{1}$ & 0.8 & -0.046 & 0.016 & 95.0 & 1565.0 & 1.00 & 0.890 & 1.174 & 0.0 & 1088.7 & 1.00 & 0.162 & 0.057 & 89.0 & 1124.5 & 1.01 \\
$\lambda_{2}$ & 1.2 & -0.061 & 0.056 & 85.0 & 1606.0 & 1.02 & 0.210 & 0.903 & 83.3 & 615.3 & 1.01 & -0.089 & 0.320 & 90.0 & 518.6 & 1.02 \\
$\lambda_{3}$ & 1.1 & -0.083 & 0.085 & 95.0 & 1750.2 & 1.01 & 0.145 & 0.385 & 100.0 & 675.2 & 1.01 & -0.167 & 0.590 & 74.0 & 618.2 & 1.02 \\
$\lambda_{4}$ & 0.9 & -0.024 & 0.072 & 100.0 & 1872.8 & 1.01 & 0.605 & 3.585 & 0.0 & 824.0 & 1.01 & 0.389 & 1.596 & 19.0 & 739.9 & 1.01 \\
$\lambda_{5}$ & 1.4 & -0.027 & 0.054 & 90.0 & 1813.2 & 1.01 & 0.185 & 1.084 & 100.0 & 778.8 & 1.01 & 0.031 & 0.257 & 100.0 & 649.2 & 1.02 \\
$\lambda_{6}$ & 1 & -0.015 & 0.036 & 85.0 & 1625.9 & 1.01 & 0.487 & 2.209 & 0.0 & 469.5 & 1.01 & 0.353 & 1.210 & 5.0 & 436.4 & 1.02 \\
$\lambda_{7}$ & 0.7 & -0.038 & 0.016 & 90.0 & 1269.5 & 1.01 & 0.269 & 0.230 & 0.0 & 1019.7 & 1.00 & 0.038 & 0.017 & 93.0 & 829.1 & 1.02 \\
$\lambda_{8}$ & 1.1 & 0.024 & 0.074 & 90.0 & 1502.0 & 1.02 & 0.253 & 0.428 & 33.3 & 438.8 & 1.01 & 0.087 & 0.092 & 94.0 & 312.2 & 1.03 \\
$\lambda_{9}$ & 0.9 & 0.031 & 0.012 & 85.0 & 1167.2 & 1.01 & 0.263 & 0.162 & 0.0 & 893.3 & 1.01 & 0.115 & 0.041 & 69.0 & 718.7 & 1.02 \\
$\lambda_{10}$ & 1.2 & -0.003 & 0.058 & 90.0 & 1382.0 & 1.01 & 0.175 & 0.922 & 83.3 & 307.7 & 1.01 & -0.010 & 0.147 & 97.0 & 258.4 & 1.03 \\
$\lambda_{11}$ & 0.8 & -0.010 & 0.014 & 100.0 & 1809.2 & 1.01 & 0.386 & 1.160 & 0.0 & 594.7 & 1.01 & 0.209 & 0.356 & 20.0 & 492.2 & 1.02 \\
$\lambda_{12}$ & 1 & 0.009 & 0.005 & 95.0 & 1442.6 & 1.00 & 0.532 & 0.562 & 0.0 & 1087.0 & 1.01 & 0.232 & 0.115 & 8.0 & 995.8 & 1.01 \\
\bottomrule
\end{tabular*}
\begin{tablenotes}
\footnotesize
\item RB: relative bias, defined as
$
\mathrm{RB}
=
M^{-1}\sum_{m=1}^{M}
\frac{\hat{\theta}^{(m)}-\theta_0}{\theta_0},
$
where $m=1,\ldots,M$ indexes the simulation replications, $M$ is the total number of replications, $\hat{\theta}^{(m)}$ is the estimate from replication $m$, and $\theta_0$ is the true parameter value.
\item MSE: mean squared error, defined as
$
\mathrm{MSE}
=
M^{-1}\sum_{m=1}^{M}
\left(\hat{\theta}^{(m)}-\theta_0\right)^2.
$
\item CP: coverage probability, defined as
$
\mathrm{CP}
=
100 \times M^{-1}\sum_{m=1}^{M}
\mathbb{I}\!\left\{\theta_0 \in \mathrm{CI}^{(m)}\right\},
$
where $\mathrm{CI}^{(m)}$ is the credible interval from replication $m$.
\item ESS: effective sample size, measuring the amount of independent information in the posterior draws after accounting for autocorrelation.
\item $\hat{R}$: Gelman--Rubin diagnostic, measuring convergence across Markov chains; values close to 1 indicate good mixing and convergence.
\item A dash indicates that a parameter is not included in a framework, RB is undefined because the true parameter value is zero, or a diagnostic is unavailable.
\end{tablenotes}
\end{threeparttable}
\end{table}

\begin{table}[p]
\ContinuedFloat
\centering
\caption{Comparison of simulation results across CLOUD, StationaryOU, and DiagOU frameworks: S4 (continued)}
\label{tab:s4_comparison_2}

\footnotesize

\sisetup{
  mode = math
}

\setlength{\tabcolsep}{0.35pt}
\renewcommand{\arraystretch}{0.72}

\begin{threeparttable}
\begin{tabular*}{\linewidth}{@{\extracolsep{\fill}}l S[table-format=-1.3] *{3}{S[table-format=-2.3] S[table-format=2.3] S[table-format=3.1] S[table-format=4.1] S[table-format=1.2]}}
\toprule
Parameter & {True} & \multicolumn{5}{c}{CLOUD} & \multicolumn{5}{c}{StationaryOU} & \multicolumn{5}{c}{DiagOU} \\
\cmidrule(lr){3-7}\cmidrule(lr){8-12}\cmidrule(l){13-17}
 &  & {RB} & {MSE} & {CP} & {ESS} & {$\hat{R}$} & {RB} & {MSE} & {CP} & {ESS} & {$\hat{R}$} & {RB} & {MSE} & {CP} & {ESS} & {$\hat{R}$} \\
\midrule
\multicolumn{17}{@{}l}{\textit{$\boldsymbol\beta$ parameters}} \\
\midrule
$\beta_{1,1}$ & 0.3 & -0.090 & 0.053 & 90.0 & 3766.3 & 1.00 & 0.066 & 0.011 & 100.0 & 3245.8 & 1.00 & -0.041 & 0.047 & 95.0 & 3018.8 & 1.00 \\
$\beta_{1,2}$ & 0.5 & -0.104 & 0.016 & 90.0 & 3684.8 & 1.00 & 0.086 & 0.033 & 83.3 & 2658.0 & 1.00 & -0.029 & 0.013 & 96.0 & 2804.8 & 1.00 \\
$\beta_{2,1}$ & 0.1 & 0.628 & 0.106 & 90.0 & 2621.4 & 1.00 & -1.307 & 0.094 & 100.0 & 2201.3 & 1.00 & 0.244 & 0.118 & 93.0 & 2022.1 & 1.00 \\
$\beta_{2,2}$ & 0.2 & -0.165 & 0.011 & 100.0 & 2836.1 & 1.00 & 0.497 & 0.039 & 83.3 & 2126.5 & 1.00 & -0.177 & 0.022 & 95.0 & 2177.6 & 1.00 \\
$\beta_{3,1}$ & -0.1 & -1.411 & 0.053 & 100.0 & 2744.3 & 1.00 & -0.605 & 0.073 & 100.0 & 2355.7 & 1.00 & -0.621 & 0.081 & 95.0 & 2005.6 & 1.00 \\
$\beta_{3,2}$ & 0.2 & -0.145 & 0.015 & 100.0 & 3011.4 & 1.00 & 0.212 & 0.028 & 100.0 & 2341.5 & 1.00 & -0.169 & 0.022 & 95.0 & 2256.7 & 1.01 \\
$\beta_{4,1}$ & 0.2 & -0.290 & 0.051 & 95.0 & 2661.0 & 1.00 & -0.297 & 0.077 & 83.3 & 1915.2 & 1.00 & -0.161 & 0.069 & 95.0 & 1770.4 & 1.01 \\
$\beta_{4,2}$ & -0.1 & 0.230 & 0.015 & 95.0 & 2910.9 & 1.00 & 0.060 & 0.004 & 100.0 & 2144.8 & 1.00 & 0.216 & 0.017 & 95.0 & 1890.6 & 1.01 \\
$\beta_{5,1}$ & 0.1 & 0.620 & 0.087 & 95.0 & 2498.6 & 1.00 & -0.868 & 0.141 & 100.0 & 1800.2 & 1.00 & -0.152 & 0.087 & 93.0 & 1645.8 & 1.00 \\
$\beta_{5,2}$ & 0.3 & -0.426 & 0.045 & 90.0 & 2633.5 & 1.00 & -0.106 & 0.012 & 100.0 & 1897.8 & 1.00 & -0.406 & 0.042 & 88.0 & 1670.2 & 1.01 \\
$\beta_{6,1}$ & -0.1 & -0.638 & 0.029 & 100.0 & 2631.2 & 1.00 & 1.208 & 0.065 & 100.0 & 1798.0 & 1.00 & 0.335 & 0.043 & 97.0 & 1590.1 & 1.00 \\
$\beta_{6,2}$ & -0.2 & 0.017 & 0.007 & 100.0 & 2695.5 & 1.00 & 0.152 & 0.008 & 100.0 & 1849.0 & 1.00 & 0.128 & 0.013 & 91.0 & 1587.4 & 1.01 \\
$\beta_{7,1}$ & -0.2 & 0.220 & 0.023 & 95.0 & 3284.8 & 1.00 & -0.314 & 0.008 & 100.0 & 2581.0 & 1.00 & 0.022 & 0.025 & 92.0 & 2179.0 & 1.01 \\
$\beta_{7,2}$ & -0.1 & 0.025 & 0.005 & 90.0 & 3297.2 & 1.00 & 0.025 & 0.004 & 100.0 & 2602.0 & 1.00 & -0.020 & 0.005 & 94.0 & 2325.9 & 1.00 \\
$\beta_{8,1}$ & 0.4 & -0.053 & 0.018 & 100.0 & 2640.6 & 1.00 & 0.122 & 0.031 & 83.3 & 1921.2 & 1.00 & 0.041 & 0.027 & 94.0 & 1573.6 & 1.01 \\
$\beta_{8,2}$ & 0.1 & 0.219 & 0.006 & 95.0 & 2789.2 & 1.00 & 0.283 & 0.003 & 100.0 & 2264.2 & 1.00 & -0.003 & 0.006 & 94.0 & 1727.8 & 1.01 \\
$\beta_{9,1}$ & 0.2 & 0.004 & 0.011 & 95.0 & 3190.0 & 1.00 & -0.080 & 0.013 & 100.0 & 2651.7 & 1.00 & -0.070 & 0.014 & 97.0 & 2041.2 & 1.00 \\
$\beta_{9,2}$ & -0.4 & -0.036 & 0.004 & 90.0 & 3046.9 & 1.00 & -0.125 & 0.007 & 83.3 & 2557.2 & 1.00 & 0.028 & 0.005 & 90.0 & 1940.8 & 1.00 \\
$\beta_{10,1}$ & -0.3 & -0.130 & 0.030 & 100.0 & 2021.6 & 1.00 & 0.308 & 0.089 & 83.3 & 1364.7 & 1.00 & -0.056 & 0.047 & 97.0 & 1089.9 & 1.01 \\
$\beta_{10,2}$ & 0.2 & -0.231 & 0.015 & 95.0 & 2016.4 & 1.00 & -0.795 & 0.049 & 50.0 & 1533.3 & 1.00 & -0.312 & 0.021 & 92.0 & 1100.9 & 1.01 \\
$\beta_{11,1}$ & 0.1 & 0.322 & 0.022 & 95.0 & 2395.0 & 1.00 & -1.493 & 0.043 & 83.3 & 1777.2 & 1.00 & 0.056 & 0.028 & 97.0 & 1323.7 & 1.01 \\
$\beta_{11,2}$ & -0.1 & 0.331 & 0.012 & 90.0 & 2432.8 & 1.00 & 0.450 & 0.008 & 83.3 & 1720.3 & 1.00 & 0.223 & 0.009 & 92.0 & 1353.4 & 1.01 \\
$\beta_{12,1}$ & 0.2 & 0.014 & 0.007 & 100.0 & 3129.2 & 1.00 & 0.004 & 0.002 & 100.0 & 2167.7 & 1.00 & 0.004 & 0.012 & 97.0 & 1845.0 & 1.00 \\
$\beta_{12,2}$ & 0.3 & -0.062 & 0.003 & 95.0 & 3101.8 & 1.00 & -0.063 & 0.012 & 66.7 & 2074.5 & 1.00 & -0.056 & 0.004 & 94.0 & 1822.1 & 1.00 \\
\addlinespace[0.25em]
\multicolumn{17}{@{}l}{\textit{$\boldsymbol{\sigma}_{\text{bk}}$ parameters}} \\
\midrule
$\sigma_{\text{bk},1}$ & 3.7 & -0.071 & 0.129 & 90.0 & 1113.0 & 1.01 & -0.025 & 0.032 & 100.0 & 962.8 & 1.00 & -0.063 & 0.129 & 88.0 & 997.3 & 1.01 \\
$\sigma_{\text{bk},2}$ & 3.4 & -0.189 & 0.525 & 85.0 & 1608.3 & 1.01 & -0.169 & 0.503 & 66.7 & 619.7 & 1.01 & -0.260 & 0.890 & 52.0 & 516.3 & 1.02 \\
$\sigma_{\text{bk},3}$ & 4.8 & -0.285 & 1.930 & 85.0 & 1736.1 & 1.01 & -0.256 & 1.532 & 0.0 & 725.2 & 1.01 & -0.323 & 2.470 & 2.0 & 653.1 & 1.02 \\
$\sigma_{\text{bk},4}$ & 3.1 & -0.033 & 0.082 & 100.0 & 1831.4 & 1.01 & -0.090 & 0.133 & 83.3 & 760.7 & 1.01 & -0.070 & 0.127 & 98.0 & 694.5 & 1.01 \\
$\sigma_{\text{bk},5}$ & 4.0 & -0.248 & 1.028 & 85.0 & 1778.4 & 1.01 & -0.299 & 1.449 & 0.0 & 691.2 & 1.01 & -0.296 & 1.481 & 15.0 & 609.6 & 1.02 \\
$\sigma_{\text{bk},6}$ & 6.1 & -0.178 & 1.242 & 85.0 & 1702.6 & 1.01 & -0.188 & 1.355 & 0.0 & 642.0 & 1.01 & -0.164 & 1.065 & 23.0 & 525.9 & 1.02 \\
$\sigma_{\text{bk},7}$ & 5.1 & -0.060 & 0.154 & 90.0 & 1918.4 & 1.01 & -0.047 & 0.206 & 66.7 & 848.8 & 1.00 & -0.074 & 0.200 & 67.0 & 752.7 & 1.02 \\
$\sigma_{\text{bk},8}$ & 1.7 & 0.018 & 0.046 & 85.0 & 1576.6 & 1.01 & 0.069 & 0.066 & 83.3 & 497.0 & 1.02 & -0.015 & 0.023 & 94.0 & 386.1 & 1.02 \\
$\sigma_{\text{bk},9}$ & 2.5 & 0.017 & 0.019 & 95.0 & 1973.2 & 1.01 & -0.020 & 0.019 & 100.0 & 817.5 & 1.00 & 0.000 & 0.022 & 94.0 & 753.7 & 1.01 \\
$\sigma_{\text{bk},10}$ & 3.3 & -0.119 & 0.230 & 90.0 & 1423.4 & 1.01 & -0.209 & 0.543 & 33.3 & 336.0 & 1.02 & -0.120 & 0.228 & 77.0 & 287.0 & 1.03 \\
$\sigma_{\text{bk},11}$ & 4.2 & -0.032 & 0.041 & 100.0 & 1743.8 & 1.01 & -0.074 & 0.112 & 100.0 & 683.5 & 1.01 & -0.018 & 0.057 & 97.0 & 540.1 & 1.02 \\
$\sigma_{\text{bk},12}$ & 2.8 & 0.003 & 0.020 & 100.0 & 1967.4 & 1.01 & -0.016 & 0.007 & 100.0 & 912.2 & 1.01 & 0.000 & 0.020 & 95.0 & 845.8 & 1.01 \\
\bottomrule
\end{tabular*}
\begin{tablenotes}
\footnotesize
\item RB: relative bias, defined as
$
\mathrm{RB}
=
M^{-1}\sum_{m=1}^{M}
\frac{\hat{\theta}^{(m)}-\theta_0}{\theta_0},
$
where $m=1,\ldots,M$ indexes the simulation replications, $M$ is the total number of replications, $\hat{\theta}^{(m)}$ is the estimate from replication $m$, and $\theta_0$ is the true parameter value.
\item MSE: mean squared error, defined as
$
\mathrm{MSE}
=
M^{-1}\sum_{m=1}^{M}
\left(\hat{\theta}^{(m)}-\theta_0\right)^2.
$
\item CP: coverage probability, defined as
$
\mathrm{CP}
=
100 \times M^{-1}\sum_{m=1}^{M}
\mathbb{I}\!\left\{\theta_0 \in \mathrm{CI}^{(m)}\right\},
$
where $\mathrm{CI}^{(m)}$ is the credible interval from replication $m$.
\item ESS: effective sample size, measuring the amount of independent information in the posterior draws after accounting for autocorrelation.
\item $\hat{R}$: Gelman--Rubin diagnostic, measuring convergence across Markov chains; values close to 1 indicate good mixing and convergence.
\item A dash indicates that a parameter is not included in a framework, RB is undefined because the true parameter value is zero, or a diagnostic is unavailable.
\end{tablenotes}
\end{threeparttable}
\end{table}

\begin{table}[p]
\ContinuedFloat
\centering
\caption{Comparison of simulation results across CLOUD, StationaryOU, and DiagOU frameworks: S4 (continued)}
\label{tab:s4_comparison_3}

\footnotesize

\sisetup{
  mode = math
}

\setlength{\tabcolsep}{0.35pt}
\renewcommand{\arraystretch}{0.72}

\begin{threeparttable}
\begin{tabular*}{\linewidth}{@{\extracolsep{\fill}}l S[table-format=-1.3] *{3}{S[table-format=-2.3] S[table-format=2.3] S[table-format=3.1] S[table-format=4.1] S[table-format=1.2]}}
\toprule
Parameter & {True} & \multicolumn{5}{c}{CLOUD} & \multicolumn{5}{c}{StationaryOU} & \multicolumn{5}{c}{DiagOU} \\
\cmidrule(lr){3-7}\cmidrule(lr){8-12}\cmidrule(l){13-17}
 &  & {RB} & {MSE} & {CP} & {ESS} & {$\hat{R}$} & {RB} & {MSE} & {CP} & {ESS} & {$\hat{R}$} & {RB} & {MSE} & {CP} & {ESS} & {$\hat{R}$} \\
\midrule
\multicolumn{17}{@{}l}{\textit{$\boldsymbol\theta$ parameters}} \\
\midrule
$\theta_{1}$ & 2.3 & -0.072 & 0.058 & 90.0 & 1619.8 & 1.00 & 0.688 & 2.549 & 0.0 & 1085.7 & 1.00 & -0.018 & 0.055 & 96.0 & 1296.2 & 1.01 \\
$\theta_{2}$ & 1.5 & -0.242 & 0.211 & 95.0 & 1504.8 & 1.00 & 2.477 & 13.975 & 0.0 & 619.7 & 1.01 & -0.048 & 0.097 & 93.0 & 1036.8 & 1.01 \\
$\theta_{3}$ & 0.8 & -0.351 & 0.157 & 90.0 & 1959.2 & 1.00 & 4.465 & 12.784 & 0.0 & 781.8 & 1.01 & 0.026 & 0.087 & 94.0 & 1393.6 & 1.01 \\
$\theta_{4}$ & 1.2 & -0.026 & 0.089 & 90.0 & 1623.4 & 1.01 & -1.149 & 1.944 & 0.0 & 1012.2 & 1.01 & -0.286 & 0.186 & 77.0 & 1158.2 & 1.01 \\
$\theta_{5}$ & 2.1 & -0.248 & 0.374 & 90.0 & 1421.0 & 1.01 & -1.042 & 4.909 & 0.0 & 907.3 & 1.01 & -0.453 & 0.988 & 23.0 & 1006.7 & 1.01 \\
$\theta_{6,1}$ & -4.0 & -0.135 & 0.436 & 90.0 & 1788.7 & 1.01 & -0.413 & 2.793 & 0.0 & 768.8 & 1.01 & -0.159 & 0.570 & 59.0 & 658.5 & 1.02 \\
$\theta_{6,2}$ & -1.0 & -0.134 & 0.121 & 85.0 & 1891.8 & 1.01 & -1.300 & 1.758 & 0.0 & 875.7 & 1.01 & -0.282 & 0.203 & 79.0 & 947.4 & 1.01 \\
$\theta_{6,3}$ & 2.7 & -0.108 & 0.213 & 85.0 & 1861.1 & 1.01 & 0.333 & 0.937 & 33.3 & 727.5 & 1.01 & -0.036 & 0.140 & 92.0 & 855.6 & 1.01 \\
$\theta_{7,1}$ & -5.5 & -0.005 & 0.069 & 100.0 & 1741.0 & 1.01 & 0.184 & 1.152 & 16.7 & 721.2 & 1.01 & -0.029 & 0.110 & 91.0 & 627.0 & 1.02 \\
$\theta_{7,2}$ & -2.5 & 0.028 & 0.048 & 100.0 & 1638.6 & 1.02 & 0.433 & 1.207 & 0.0 & 639.8 & 1.01 & -0.017 & 0.052 & 95.0 & 564.4 & 1.02 \\
$\theta_{7,3}$ & 2.6 & -0.095 & 0.102 & 90.0 & 1723.4 & 1.02 & -0.452 & 1.419 & 0.0 & 636.3 & 1.01 & -0.053 & 0.086 & 92.0 & 616.1 & 1.02 \\
$\theta_{8,1}$ & -4.5 & 0.014 & 0.125 & 85.0 & 1736.5 & 1.01 & 0.350 & 2.640 & 0.0 & 509.3 & 1.01 & 0.017 & 0.091 & 97.0 & 431.5 & 1.02 \\
$\theta_{8,2}$ & -1.5 & 0.037 & 0.019 & 100.0 & 1635.1 & 1.01 & 1.046 & 2.524 & 0.0 & 671.5 & 1.01 & 0.020 & 0.028 & 97.0 & 987.5 & 1.02 \\
$\theta_{8,3}$ & 2.0 & 0.008 & 0.054 & 90.0 & 1064.3 & 1.01 & -0.755 & 2.310 & 0.0 & 1711.5 & 1.01 & 0.005 & 0.037 & 98.0 & 590.7 & 1.02 \\
$\theta_{9,1}$ & -3.0 & 0.012 & 0.029 & 90.0 & 1312.4 & 1.00 & 0.314 & 0.893 & 0.0 & 983.0 & 1.01 & 0.024 & 0.031 & 97.0 & 1021.6 & 1.01 \\
$\theta_{9,2}$ & 0.0 & \multicolumn{1}{c}{--} & 0.018 & 100.0 & 1484.2 & 1.00 & \multicolumn{1}{c}{--} & 0.868 & 0.0 & 1217.7 & 1.01 & \multicolumn{1}{c}{--} & 0.021 & 96.0 & 1281.5 & 1.01 \\
$\theta_{9,3}$ & 3.0 & 0.003 & 0.024 & 100.0 & 1416.5 & 1.01 & -0.324 & 0.985 & 0.0 & 1260.5 & 1.00 & -0.008 & 0.039 & 95.0 & 983.3 & 1.01 \\
$\theta_{10,1}$ & -6.0 & -0.095 & 0.608 & 90.0 & 1466.9 & 1.01 & -0.667 & 16.068 & 0.0 & 639.7 & 1.01 & -0.192 & 1.494 & 45.0 & 328.8 & 1.02 \\
$\theta_{10,2}$ & -2.0 & -0.083 & 0.079 & 90.0 & 1898.4 & 1.01 & -1.678 & 11.301 & 0.0 & 832.2 & 1.00 & -0.283 & 0.391 & 52.0 & 743.0 & 1.01 \\
$\theta_{10,3}$ & 1.5 & -0.079 & 0.037 & 100.0 & 1466.8 & 1.00 & 1.854 & 7.882 & 0.0 & 418.2 & 1.01 & 0.059 & 0.057 & 98.0 & 842.7 & 1.01 \\
$\theta_{11,1}$ & -5.0 & -0.017 & 0.096 & 100.0 & 1902.4 & 1.01 & -0.428 & 4.622 & 0.0 & 737.0 & 1.00 & -0.037 & 0.158 & 92.0 & 605.8 & 1.02 \\
$\theta_{11,2}$ & -1.0 & -0.049 & 0.061 & 90.0 & 1072.2 & 1.01 & -1.832 & 3.385 & 0.0 & 1024.2 & 1.00 & -0.153 & 0.075 & 91.0 & 935.2 & 1.01 \\
$\theta_{11,3}$ & 2.5 & -0.006 & 0.079 & 90.0 & 1071.8 & 1.01 & 0.677 & 2.903 & 0.0 & 856.8 & 1.01 & 0.066 & 0.091 & 93.0 & 912.9 & 1.01 \\
$\theta_{12,1}$ & -4.0 & -0.002 & 0.022 & 100.0 & 1264.5 & 1.00 & -0.278 & 1.261 & 0.0 & 1043.2 & 1.01 & -0.029 & 0.047 & 95.0 & 986.1 & 1.01 \\
$\theta_{12,2}$ & -0.5 & 0.061 & 0.010 & 100.0 & 1249.0 & 1.00 & -2.204 & 1.236 & 0.0 & 1130.7 & 1.01 & -0.173 & 0.031 & 91.0 & 1068.6 & 1.01 \\
$\theta_{12,3}$ & 2.0 & 0.010 & 0.017 & 100.0 & 1325.0 & 1.00 & 0.523 & 1.111 & 0.0 & 1189.0 & 1.01 & 0.059 & 0.045 & 90.0 & 1147.4 & 1.01 \\
\addlinespace[0.25em]
\multicolumn{17}{@{}l}{\textit{Other parameters}} \\
\midrule
$\rho_{1}$ & -0.02 & 0.016 & 0.002 & 100.0 & 1497.2 & 1.01 & 5.499 & 0.163 & 0.0 & 402.2 & 1.01 & -1.000 & 0.005 & 0.0 & 3600.0 & \multicolumn{1}{c}{--} \\
$\rho_{2}$ & 0.04 & -0.019 & 0.003 & 95.0 & 1373.8 & 1.02 & 1.584 & 0.075 & 0.0 & 535.8 & 1.01 & -1.000 & 0.030 & 0.0 & 3600.0 & \multicolumn{1}{c}{--} \\
$\rho_{3}$ & 0.6 & 0.062 & 0.001 & 100.0 & 1294.6 & 1.02 & -1.974 & 0.287 & 0.0 & 419.3 & 1.01 & -1.000 & 0.074 & 0.0 & 3600.0 & \multicolumn{1}{c}{--} \\
$\rho_{4}$ & -0.048 & -0.009 & 0.002 & 90.0 & 1663.9 & 1.02 & -0.187 & 0.003 & 66.7 & 953.0 & 1.00 & -1.000 & 0.059 & 0.0 & 3600.0 & \multicolumn{1}{c}{--} \\
$\rho_{5}$ & -0.06 & -0.017 & 0.001 & 100.0 & 1649.4 & 1.02 & -0.724 & 0.001 & 100.0 & 747.3 & 1.00 & -1.000 & 0.000 & 0.0 & 3600.0 & \multicolumn{1}{c}{--} \\
$\rho_{6}$ & 0.0 & \multicolumn{1}{c}{--} & 0.001 & 95.0 & 1018.6 & 1.01 & \multicolumn{1}{c}{--} & 0.164 & 0.0 & 611.5 & 1.02 & \multicolumn{1}{c}{--} & 0.001 & 0.0 & 3600.0 & \multicolumn{1}{c}{--} \\
\bottomrule
\end{tabular*}
\begin{tablenotes}
\footnotesize
\item RB: relative bias, defined as
$
\mathrm{RB}
=
M^{-1}\sum_{m=1}^{M}
\frac{\hat{\theta}^{(m)}-\theta_0}{\theta_0},
$
where $m=1,\ldots,M$ indexes the simulation replications, $M$ is the total number of replications, $\hat{\theta}^{(m)}$ is the estimate from replication $m$, and $\theta_0$ is the true parameter value.
\item MSE: mean squared error, defined as
$
\mathrm{MSE}
=
M^{-1}\sum_{m=1}^{M}
\left(\hat{\theta}^{(m)}-\theta_0\right)^2.
$
\item CP: coverage probability, defined as
$
\mathrm{CP}
=
100 \times M^{-1}\sum_{m=1}^{M}
\mathbb{I}\!\left\{\theta_0 \in \mathrm{CI}^{(m)}\right\},
$
where $\mathrm{CI}^{(m)}$ is the credible interval from replication $m$.
\item ESS: effective sample size, measuring the amount of independent information in the posterior draws after accounting for autocorrelation.
\item $\hat{R}$: Gelman--Rubin diagnostic, measuring convergence across Markov chains; values close to 1 indicate good mixing and convergence.
\item A dash indicates that a parameter is not included in a framework, RB is undefined because the true parameter value is zero, or a diagnostic is unavailable.
\end{tablenotes}
\end{threeparttable}
\end{table}
\end{landscape}

%%%%%%%%%%%%%%%%%%%%%%%%%%%%%%%%%%%%%%%%%%%%%%%%%%%%%%%%%%%%%%%%%%%%%%%%%%%%%%%%%%%%%%%%%%%%%%%%%%%%
% ALS Application: Data statistics
%%%%%%%%%%%%%%%%%%%%%%%%%%%%%%%%%%%%%%%%%%%%%%%%%%%%%%%%%%%%%%%%%%%%%%%%%%%%%%%%%%%%%%%%%%%%%%%%%%%%
\begin{table}[p]
\centering
\caption{Summary of selected ALSFRS-R cohort characteristics}
\label{tab:alsfrsr_summary}

\footnotesize

\sisetup{
  mode = math
}

\setlength{\tabcolsep}{0.35pt}
\renewcommand{\arraystretch}{0.72}

\begin{threeparttable}
\begin{tabular*}{\linewidth}{@{\extracolsep{\fill}}l l c c}
\toprule
Characteristic & Category & {Mean (SD) or $n$ (\%)} & Range \\
\midrule
Baseline age, years & -- & 55.83 (11.56) & 25--82 \\
BMI, kg/m$^2$ & -- & 27.82 (5.47) & 15.45--78.73 \\
Baseline FVC & -- & 84.11 (18.60) & 3.72--146.00 \\
\addlinespace[0.25em]
Treatment assignment & Active treatment & 443 (67.4\%) & -- \\
 & Control & 214 (32.6\%) & -- \\
\addlinespace[0.25em]
Disease onset & Bulbar onset & 112 (17.0\%) & -- \\
 & Non-bulbar onset & 545 (83.0\%) & -- \\
\addlinespace[0.25em]
Sex & Female & 242 (36.8\%) & -- \\
 & Male & 415 (63.2\%) & -- \\
\bottomrule
\end{tabular*}
\end{threeparttable}
\end{table}

%%%%%%%%%%%%%%%%%%%%%%%%%%%%%%%%%%%%%%%%%%%%%%%%%%%%%%%%%%%%%%%%%%%%%%%%%%%%%%%%%%%%%%%%%%%%%%%%%%%%
% ALS Application: Correlation between Latent Function and ALSFRSR Items
%%%%%%%%%%%%%%%%%%%%%%%%%%%%%%%%%%%%%%%%%%%%%%%%%%%%%%%%%%%%%%%%%%%%%%%%%%%%%%%%%%%%%%%%%%%%%%%%%%%%
\begin{table}[p]
\centering
\caption{ALS application: Correlation coefficients (95\% credible intervals) between the ALSFRS-R items and latent functions.}
\label{tab:als_item_function_correlations}

\footnotesize

\sisetup{
  mode = math
}

\setlength{\tabcolsep}{0.35pt}
\renewcommand{\arraystretch}{0.72}

\begin{threeparttable}
\begin{tabular*}{\linewidth}{@{\extracolsep{\fill}}l *{4}{c c c}}
\toprule
 & \multicolumn{3}{c}{Bulbar function}
 & \multicolumn{3}{c}{\shortstack{Fine motor function \\(upper limb)}}
 & \multicolumn{3}{c}{\shortstack{Gross motor function \\(lower limb)}}
 & \multicolumn{3}{c}{Respiratory function} \\
\cmidrule(lr){2-4}\cmidrule(lr){5-7}\cmidrule(lr){8-10}\cmidrule(l){11-13}
Item
& Mean & \multicolumn{2}{c}{95\% CI}
& Mean & \multicolumn{2}{c}{95\% CI}
& Mean & \multicolumn{2}{c}{95\% CI}
& Mean & \multicolumn{2}{c}{95\% CI} \\
\midrule
Speech
& \textbf{0.92} & 0.90 & 0.94
& 0.16 & 0.09 & 0.22
& -0.01 & -0.08 & 0.06
& 0.47 & 0.41 & 0.53 \\

Salivation
& \textbf{0.83} & 0.81 & 0.86
& 0.14 & 0.08 & 0.20
& -0.01 & -0.07 & 0.06
& 0.43 & 0.37 & 0.49 \\

Swallowing
& \textbf{0.89} & 0.87 & 0.91
& 0.15 & 0.08 & 0.22
& -0.01 & -0.07 & 0.06
& 0.46 & 0.40 & 0.52 \\

Handwriting
& 0.15 & 0.08 & 0.21
& \textbf{0.85} & 0.83 & 0.87
& 0.39 & 0.34 & 0.44
& 0.25 & 0.19 & 0.31 \\

Cutting food
& 0.16 & 0.09 & 0.23
& \textbf{0.94} & 0.92 & 0.96
& 0.43 & 0.38 & 0.49
& 0.28 & 0.21 & 0.35 \\

Dressing and hygiene
& 0.14 & 0.08 & 0.21
& \textbf{0.84} & 0.82 & 0.86
& 0.39 & 0.33 & 0.44
& 0.25 & 0.19 & 0.31 \\

\shortstack[l]{Turning in bed\\and adjusting bed clothes}
& -0.01 & -0.06 & 0.05
& 0.36 & 0.31 & 0.41
& \textbf{0.78} & 0.75 & 0.80
& 0.28 & 0.22 & 0.33 \\

Walking
& -0.01 & -0.08 & 0.06
& 0.43 & 0.37 & 0.48
& \textbf{0.93} & 0.91 & 0.94
& 0.33 & 0.27 & 0.40 \\

Climbing stairs
& -0.01 & -0.08 & 0.06
& 0.44 & 0.38 & 0.50
& \textbf{0.96} & 0.94 & 0.97
& 0.34 & 0.27 & 0.41 \\

Dyspnea
& 0.39 & 0.34 & 0.45
& 0.23 & 0.17 & 0.28
& 0.27 & 0.21 & 0.33
& \textbf{0.76} & 0.72 & 0.80 \\

Orthopnea
& 0.46 & 0.39 & 0.52
& 0.26 & 0.20 & 0.33
& 0.32 & 0.25 & 0.38
& \textbf{0.89} & 0.86 & 0.92 \\

Respiratory insufficiency
& 0.42 & 0.36 & 0.48
& 0.24 & 0.18 & 0.30
& 0.29 & 0.23 & 0.35
& \textbf{0.81} & 0.77 & 0.84 \\
\bottomrule
\end{tabular*}
\begin{tablenotes}
\footnotesize
\item Correlations are derived on the scale of the underlying continuous latent response, $Y^*_k$. The posterior correlation between item $k$ (assigned to primary domain $d$) and latent function $r$ is computed as $\rho_{kr} = \frac{\lambda_k \Omega_{dr}}{\sqrt{(\lambda_k^2 \Omega_{dd} + \sigma_{bk}^2 + \frac{\pi^2}{3}) \Omega_{rr}}}$, where $\lambda_k$ is the item factor loading, $\Omega$ is the latent function covariance matrix, $\sigma_{bk}^2$ is the item-specific random-effect variance, and $\frac{\pi^2}{3}$ is the standard logistic error variance.
\end{tablenotes}
\end{threeparttable}
\end{table}

%%%%%%%%%%%%%%%%%%%%%%%%%%%%%%%%%%%%%%%%%%%%%%%%%%%%%%%%%%%%%%%%%%%%%%%%%%%%%%%%%%%%%%%%%%%%%%%%%%%%
% ALS Application: Local Dependence
%%%%%%%%%%%%%%%%%%%%%%%%%%%%%%%%%%%%%%%%%%%%%%%%%%%%%%%%%%%%%%%%%%%%%%%%%%%%%%%%%%%%%%%%%%%%%%%%%%%
\begin{table}[p]
\centering
\caption{ALS application: Magnitude of local dependence (95\% credible intervals) for items in the ALSFRS.}
\label{tab:als_local_dependence}

\footnotesize

\sisetup{
  mode = math
}

\setlength{\tabcolsep}{0.35pt}
\renewcommand{\arraystretch}{0.72}

\begin{threeparttable}
\begin{tabular*}{\linewidth}{@{\extracolsep{\fill}}l S[table-format=1.2] S[table-format=1.2] S[table-format=1.2]}
\toprule
Item & {Mean} & \multicolumn{2}{c}{95\% CI} \\
\cmidrule(l){3-4}
 & & {2.5\%} & {97.5\%} \\
\midrule
Speech & 0.12 & 0.09 & 0.16 \\
Salivation & 0.20 & 0.16 & 0.25 \\
Swallowing & 0.12 & 0.08 & 0.15 \\
Handwriting & 0.21 & 0.18 & 0.25 \\
Cutting food & 0.07 & 0.04 & 0.10 \\
Dressing and hygiene & 0.24 & 0.20 & 0.27 \\
Turning in bed and adjusting bed clothes & 0.34 & 0.30 & 0.38 \\
Walking & 0.10 & 0.08 & 0.13 \\
Climbing stairs & 0.04 & 0.02 & 0.06 \\
Dyspnea & 0.28 & 0.22 & 0.34 \\
Orthopnea & 0.10 & 0.05 & 0.16 \\
Respiratory insufficiency & 0.30 & 0.24 & 0.36 \\
\bottomrule
\end{tabular*}
\begin{tablenotes}
\footnotesize
\item The magnitude of local dependence, $D_k$, for item $k$ is computed from posterior samples as $D_k = \frac{\sigma_{bk}^2}{\lambda_k^2 \Omega_{dd} + \sigma_{bk}^2}$. This assumes a simple structure in which the item loads only on its primary functional domain $d$. Here, $\sigma_{bk}^2$ is the item-specific random-effect variance, $\lambda_k$ is the factor loading, and $\Omega_{dd}$ is the variance of the target latent function. $D_k$ ranges from 0 to 1; lower values indicate that the item's variance is primarily driven by the latent trait, whereas higher values indicate greater local item dependence.
\end{tablenotes}
\end{threeparttable}
\end{table}

%%%%%%%%%%%%%%%%%%%%%%%%%%%%%%%%%%%%%%%%%%%%%%%%%%%%%%%%%%%%%%%%%%%%%%%%%%%%%%%%%%%%%%%%%%%%%%%%%%%%
% ALS Application: PPC-Value
%%%%%%%%%%%%%%%%%%%%%%%%%%%%%%%%%%%%%%%%%%%%%%%%%%%%%%%%%%%%%%%%%%%%%%%%%%%%%%%%%%%%%%%%%%%%%%%%%%%
\begin{table}[p]
\centering
\caption{Agreement between observed and CLOUD-predicted domain-score distributions in the posterior predictive checks}
\label{tab:cloud_ppc_domain}

\footnotesize

\sisetup{
  mode = math
}

\setlength{\tabcolsep}{0.35pt}
\renewcommand{\arraystretch}{0.72}

\begin{threeparttable}
\begin{tabular*}{\linewidth}{@{\extracolsep{\fill}}l S[table-format=2.3] c S[table-format=-1.3] S[table-format=-2.2] S[table-format=1.3] S[table-format=1.3] S[table-format=1.4]}
\toprule
& \multicolumn{4}{c}{Domain mean score} & \multicolumn{3}{c}{Distributional discrepancy} \\
\cmidrule(lr){2-5}\cmidrule(l){6-8}
Domain & {Observed} & {Predicted (95\% RI)} & {Bias} & {Bias (\%)} & {TVD} & {$W_1$} & {RMSE} \\
\midrule
Bulbar
& 9.922
& 9.774 (9.755, 9.793)
& -0.148
& -1.24
& 0.034
& 0.148
& 0.0068 \\

Fine Motor
& 7.524
& 7.613 (7.594, 7.633)
& 0.089
& 0.75
& 0.026
& 0.089
& 0.0048 \\

Gross Motor
& 7.071
& 6.758 (6.736, 6.780)
& -0.314
& -2.61
& 0.057
& 0.314
& 0.0107 \\

Respiratory
& 10.911
& 10.963 (10.945, 10.981)
& 0.052
& 0.43
& 0.023
& 0.052
& 0.0066 \\
\bottomrule
\end{tabular*}
\begin{tablenotes}
\footnotesize
\item All domain scores range from 0 to 12, with higher values indicating better function. Predicted values are averages across 500 model-replicated datasets. RI denotes the empirical 95\% replicate interval, calculated using the 2.5th and 97.5th percentiles of the replicated domain means. Bias is the predicted mean minus the observed mean; negative values indicate underprediction. Bias (\%) is the signed bias as a percentage of the full 12-point domain-score range. TVD denotes total variation distance, $W_1$ denotes the one-dimensional Wasserstein distance in domain-score points, and RMSE denotes the root mean squared difference between observed and predicted domain-score probabilities. For TVD, $W_1$, and RMSE, values closer to zero indicate closer distributional agreement. Replications were generated conditional on posterior mean parameter estimates and therefore do not incorporate full posterior uncertainty in model parameters.
\end{tablenotes}
\end{threeparttable}
\end{table}

%%%%%%%%%%%%%%%%%%%%%%%%%%%%%%%%%%%%%%%%%%%%%%%%%%%%%%%%%%%%%%%%%%%%%%%%%%%%%%%%%%%%%%%%%%%%%%%%%%%%
% ALS Application: PPP-Value
%%%%%%%%%%%%%%%%%%%%%%%%%%%%%%%%%%%%%%%%%%%%%%%%%%%%%%%%%%%%%%%%%%%%%%%%%%%%%%%%%%%%%%%%%%%%%%%%%%%%
\begin{table}[p]
\centering
\caption{Estimated Posterior Predictive P-Values by Item}
\label{tab:ppp_values}

\footnotesize

\sisetup{
  mode = math
}

\setlength{\tabcolsep}{0.35pt}
\renewcommand{\arraystretch}{0.72}

\begin{threeparttable}
\begin{tabular*}{\linewidth}{@{\extracolsep{\fill}}l S[table-format=1.3] l S[table-format=1.3]}
\toprule
Item label & {PPP value} & Item label & {PPP value} \\
\midrule
            Speech               & 0.509 & Turning in Bed            & 0.444 \\
            Salivation           & 0.389 & Walking                   & 0.529 \\
            Swallowing           & 0.586 & Climbing Stairs           & 0.497 \\
            Handwriting          & 0.473 & Dyspnea                   & 0.470 \\
            Cutting              & 0.553 & Orthopnea                 & 0.561 \\
            Dressing and Hygiene & 0.439 & Respiratory Insufficiency & 0.507 \\
\bottomrule
\end{tabular*}
\begin{tablenotes}
\footnotesize
\item PPP: posterior predictive P-value. For each posterior draw $m=1,\ldots,M$, a replicated response matrix $Y_{\mathrm{rep}}^{(m)}$ was generated from the fitted ordered-logistic measurement model. For item $k$, the discrepancy statistic was the observed item-total score
\[
T_k(Y)=\sum_{i=1}^{N}Y_{ik}.
\]
The item-specific PPP was computed as
\[
\mathrm{PPP}_k=\frac{1}{M}\sum_{m=1}^{M}\mathbb{I}\!\left[T_k\!\left(Y_{\mathrm{rep}}^{(m)}\right)\geq T_k(Y)\right].
\]
The total-score PPP was computed analogously using $T_{\mathrm{total}}(Y)=\sum_{i=1}^{N}\sum_{k=1}^{12}Y_{ik}$; the estimated total-score PPP was 0.454. Values near 0.5 indicate that the observed score is typical under the posterior predictive distribution, whereas values close to 0 or 1 indicate tail behavior. Under the comparison $T(Y_{\mathrm{rep}})\geq T(Y)$, small values indicate that the observed total is larger than most replicated totals, and large values indicate that it is smaller than most replicated totals.
\end{tablenotes}
\end{threeparttable}
\end{table}

%%%%%%%%%%%%%%%%%%%%%%%%%%%%%%%%%%%%%%%%%%%%%%%%%%%%%%%%%%%%%%%%%%%%%%%%%%%%%%%%%%%%%%%%%%%%%%%%%%%%
% ALS Application: Eigenvalues of Gamma Matrix
%%%%%%%%%%%%%%%%%%%%%%%%%%%%%%%%%%%%%%%%%%%%%%%%%%%%%%%%%%%%%%%%%%%%%%%%%%%%%%%%%%%%%%%%%%%%%%%%%%%%
\begin{table}[p]
\centering
\caption{CLOUD application: posterior summary for eigenvalues of $\Gamma$ for the dominant eigenvalue structure (1 complex-conjugate pair(s), posterior probability 0.761). Complex eigenvalues are reported as $a \pm b\mathrm{i}$. Intervals are equal-tailed 95\% credible intervals.}

\footnotesize

\sisetup{
  mode = math
}

\setlength{\tabcolsep}{0.35pt}
\renewcommand{\arraystretch}{0.72}

\begin{threeparttable}
\begin{tabular*}{\linewidth}{@{\extracolsep{\fill}}l *{6}{S[table-format=1.4]}}
\toprule
& \multicolumn{3}{c}{Real part $a$} & \multicolumn{3}{c}{Imaginary magnitude $b$} \\
\cmidrule(lr){2-4}\cmidrule(l){5-7}
Eigenvalue type & {2.5\%} & {Mean} & {97.5\%} & {2.5\%} & {Mean} & {97.5\%} \\
\midrule
Real root 1 & 0.4929 & 0.7203 & 0.9438 & 0 & 0 & 0 \\
Complex pair 1 & 0.1290 & 0.2097 & 0.2783 & 0.0113 & 0.0734 & 0.1566 \\
Real root 2 & 0.0904 & 0.1665 & 0.3236 & 0 & 0 & 0 \\
\bottomrule
\end{tabular*}
\end{threeparttable}
\end{table}

\begin{table}[p]
\centering
\caption{Posterior distribution of the number of complex-conjugate eigenvalue pairs of $\Gamma$.}

\footnotesize

\sisetup{
  mode = math
}

\setlength{\tabcolsep}{0.35pt}
\renewcommand{\arraystretch}{0.72}

\begin{threeparttable}
\begin{tabular*}{\linewidth}{@{\extracolsep{\fill}}l S[table-format=4.0] S[table-format=1.4]}
\toprule
Number of complex-conjugate pairs & {Posterior draws} & {Posterior probability} \\
\midrule
0 & 939 & 0.2347 \\
1 & 3046 & 0.7615 \\
2 & 15 & 0.0037 \\
\bottomrule
\end{tabular*}
\end{threeparttable}
\end{table}

% %%%%%%%%%%%%%%%%%%%%%%%%%%%%%%%%%%%%%%%%%%%%%%%%%%%%%%%%%%%%%%%%%%%%%%%%%%%%%%%%%%%%%%%%%%%%%%%%%%%%
% % ALS Application: Comparison
% %%%%%%%%%%%%%%%%%%%%%%%%%%%%%%%%%%%%%%%%%%%%%%%%%%%%%%%%%%%%%%%%%%%%%%%%%%%%%%%%%%%%%%%%%%%%%%%%%%%%
\begin{landscape}
\begin{table}[p]
\centering
\caption{Model comparison using WAIC and PSIS-LOO.}
\label{tab:model_comparison}

\footnotesize

\sisetup{
  mode = math
}

\setlength{\tabcolsep}{0.35pt}
\renewcommand{\arraystretch}{0.72}

\begin{threeparttable}
\begin{tabular*}{\linewidth}{@{\extracolsep{\fill}}l r r r r r r c c c}
\toprule
Model & WAIC & $p_{\mathrm{WAIC}}$ & $\mathrm{elpd}_{\mathrm{WAIC}}$
& $\mathrm{elpd}_{\mathrm{LOO}}$ & $\mathrm{SE}_{\mathrm{LOO}}$ & $p_{\mathrm{LOO}}$
& Pareto $k \leq 0.70$ & $0.70 < k \leq 1$ & $k > 1$ \\
\midrule
\texttt{MLTLMM} 
& 62498.12 & \textbf{2920.08} & -31249.06 
& -31430.73 & 235.51 & \textbf{3101.75} 
& \textbf{3383 (90.9\%)} & \textbf{290 (7.8\%)} & \textbf{49 (1.3\%)} \\

\texttt{DiagOU} 
& 45404.69 & 6829.11 & -22702.35 
& -23413.95 & 229.35 & 7540.71 
& 1884 (50.6\%) & 1429 (38.4\%) & 409 (11.0\%) \\

\texttt{CLOUD} 
& \textbf{45095.91} & 6552.35 & \textbf{-22547.96} 
& \textbf{-23185.65} & \textbf{224.13} & 7190.05 
& 2026 (54.4\%) & 1351 (36.3\%) & 345 (9.3\%) \\
\bottomrule
\end{tabular*}
\begin{tablenotes}
\footnotesize
\item WAIC is the Watanabe--Akaike information criterion, reported on the deviance scale; lower values indicate better expected out-of-sample predictive fit. $p_{\mathrm{WAIC}}$ is the WAIC-based effective number of parameters and reflects model flexibility. $\mathrm{elpd}_{\mathrm{WAIC}}$ is the expected log predictive density estimated by WAIC; larger values indicate better predictive accuracy.
\item $\mathrm{elpd}_{\mathrm{LOO}}$ is the expected log predictive density estimated by Pareto-smoothed leave-one-out cross-validation; larger values indicate better predictive accuracy. $\mathrm{SE}_{\mathrm{LOO}}$ is its standard error, and $p_{\mathrm{LOO}}$ is the LOO-based effective number of parameters.
\item Pareto $k$ values diagnose the reliability of the PSIS-LOO approximation: $k \leq 0.70$ is generally reliable, $0.70 < k \leq 1$ is problematic, and $k > 1$ is very problematic. All models produced PSIS-LOO warnings, so LOO comparisons should be interpreted cautiously. Bold values mark the best entry in each column among the models shown.
\end{tablenotes}
\end{threeparttable}
\end{table}
\end{landscape}

%%%%%%%%%%%%%%%%%%%%%%%%%%%%%%%%%%%%%%%%%%%%%%%%%%%%%%%%%%%%%%%%%%%%%%%%%%%%%%%%%%%%%%%%%%%%%%%%%%%%
% ALS Application: Parameter Estimations
%%%%%%%%%%%%%%%%%%%%%%%%%%%%%%%%%%%%%%%%%%%%%%%%%%%%%%%%%%%%%%%%%%%%%%%%%%%%%%%%%%%%%%%%%%%%%%%%%%%%
\begin{landscape}
\begin{table}[p]
\centering
\caption{ALS Parameter estimates and 95\% CI for all model parameters}
\label{tab:als_params}

\footnotesize

\sisetup{
  mode = math
}

\setlength{\tabcolsep}{0.35pt}
\renewcommand{\arraystretch}{0.72}

\begin{threeparttable}
\begin{tabular*}{\linewidth}{@{\extracolsep{\fill}}*{3}{l S[table-format=-2.2] S[table-format=-2.2] S[table-format=-2.2]}}
\toprule
& & \multicolumn{2}{c}{95\% CI} & & & \multicolumn{2}{c}{95\% CI} & & & \multicolumn{2}{c}{95\% CI} \\
\cmidrule(lr){3-4}\cmidrule(lr){7-8}\cmidrule(l){11-12}
Parameter & {Mean} & {2.5\%} & {97.5\%} & Parameter & {Mean} & {2.5\%} & {97.5\%} & Parameter & {Mean} & {2.5\%} & {97.5\%} \\
\midrule
$\Gamma_{1,1}$ & 0.21 & 0.12 & 0.30 & $\beta_{10,2}$ & 0.33 & -0.23 & 0.90 & $\sigma_{bk, 6}$ & 3.03 & 2.75 & 3.33 \\
$\Gamma_{1,2}$ & 0.07 & -0.01 & 0.15 & $\beta_{10,3}$ & -0.42 & -0.71 & -0.13 & $\sigma_{bk, 7}$ & 3.59 & 3.29 & 3.90 \\
$\Gamma_{1,3}$ & 0.13 & 0.03 & 0.23 & $\beta_{11,1}$ & 1.07 & 0.31 & 1.86 & $\sigma_{bk, 8}$ & 2.80 & 2.44 & 3.18 \\
$\Gamma_{1,4}$ & -0.12 & -0.28 & 0.04 & $\beta_{11,2}$ & 0.86 & 0.09 & 1.65 & $\sigma_{bk, 9}$ & 1.42 & 0.94 & 1.82 \\
$\Gamma_{2,1}$ & -0.06 & -0.18 & 0.06 & $\beta_{11,3}$ & -0.66 & -1.05 & -0.26 & $\theta_{1,1}$ & -16.65 & -18.44 & -14.96 \\
$\Gamma_{2,2}$ & 0.22 & 0.17 & 0.27 & $\beta_{12,1}$ & 1.57 & 0.30 & 2.88 & $\theta_{1,2}$ & -12.78 & -14.35 & -11.27 \\
$\Gamma_{2,3}$ & -0.04 & -0.14 & 0.05 & $\beta_{12,2}$ & 1.15 & -0.09 & 2.45 & $\theta_{1,3}$ & -7.22 & -8.58 & -5.89 \\
$\Gamma_{2,4}$ & 0.12 & -0.04 & 0.27 & $\beta_{12,3}$ & -1.07 & -1.77 & -0.44 & $\theta_{1,4}$ & -0.21 & -1.47 & 1.08 \\
$\Gamma_{3,1}$ & 0.00 & -0.11 & 0.11 & $\beta_{2,1}$ & 1.23 & 0.52 & 1.94 & $\theta_{10,1}$ & -10.76 & -11.71 & -9.88 \\
$\Gamma_{3,2}$ & 0.09 & 0.01 & 0.16 & $\beta_{2,2}$ & 0.29 & -0.41 & 0.96 & $\theta_{10,2}$ & -7.05 & -7.75 & -6.37 \\
$\Gamma_{3,3}$ & 0.14 & 0.07 & 0.22 & $\beta_{2,3}$ & -0.57 & -0.92 & -0.22 & $\theta_{10,3}$ & -4.09 & -4.70 & -3.50 \\
$\Gamma_{3,4}$ & 0.08 & -0.07 & 0.23 & $\beta_{3,1}$ & 0.96 & 0.12 & 1.86 & $\theta_{10,4}$ & -1.98 & -2.56 & -1.42 \\
$\Gamma_{4,1}$ & -0.18 & -0.38 & -0.00 & $\beta_{3,2}$ & -0.33 & -1.18 & 0.44 & $\theta_{11,1}$ & -13.02 & -14.39 & -11.76 \\
$\Gamma_{4,2}$ & 0.03 & -0.10 & 0.16 & $\beta_{3,3}$ & -0.57 & -0.98 & -0.15 & $\theta_{11,2}$ & -10.91 & -12.07 & -9.80 \\
$\Gamma_{4,3}$ & -0.15 & -0.31 & 0.01 & $\beta_{4,1}$ & -0.18 & -1.04 & 0.72 & $\theta_{11,3}$ & -7.89 & -8.91 & -6.93 \\
$\Gamma_{4,4}$ & 0.73 & 0.52 & 0.97 & $\beta_{4,2}$ & 1.18 & 0.32 & 2.01 & $\theta_{11,4}$ & -4.87 & -5.72 & -4.03 \\
$\Omega_{1,1}$ & 1.00 & 1.00 & 1.00 & $\beta_{4,3}$ & 0.74 & 0.30 & 1.19 & $\theta_{12,1}$ & -20.13 & -22.79 & -17.65 \\
$\Omega_{1,2}$ & 0.17 & 0.09 & 0.24 & $\beta_{5,1}$ & 0.11 & -0.99 & 1.23 & $\theta_{12,2}$ & -17.36 & -19.67 & -15.32 \\
$\Omega_{1,3}$ & -0.01 & -0.08 & 0.07 & $\beta_{5,2}$ & 0.31 & -0.71 & 1.30 & $\theta_{12,3}$ & -11.87 & -13.69 & -10.21 \\
$\Omega_{1,4}$ & 0.52 & 0.45 & 0.58 & $\beta_{5,3}$ & 0.97 & 0.44 & 1.51 & $\theta_{12,4}$ & -9.84 & -11.51 & -8.30 \\
$\Omega_{2,1}$ & 0.17 & 0.09 & 0.24 & $\beta_{6,1}$ & 0.10 & -0.79 & 1.03 & $\theta_{2,1}$ & -10.58 & -11.45 & -9.67 \\
$\Omega_{2,2}$ & 1.00 & 1.00 & 1.00 & $\beta_{6,2}$ & 0.11 & -0.73 & 0.97 & $\theta_{2,2}$ & -8.38 & -9.18 & -7.56 \\
$\Omega_{2,3}$ & 0.46 & 0.40 & 0.52 & $\beta_{6,3}$ & 0.49 & 0.05 & 0.94 & $\theta_{2,3}$ & -5.36 & -6.10 & -4.63 \\
$\Omega_{2,4}$ & 0.30 & 0.22 & 0.37 & $\beta_{7,1}$ & 0.46 & -0.39 & 1.34 & $\theta_{2,4}$ & -1.30 & -1.98 & -0.60 \\
$\Omega_{3,1}$ & -0.01 & -0.08 & 0.07 & $\beta_{7,2}$ & -0.48 & -1.32 & 0.39 & $\theta_{3,1}$ & -13.16 & -14.33 & -12.02 \\
$\Omega_{3,2}$ & 0.46 & 0.40 & 0.52 & $\beta_{7,3}$ & 0.18 & -0.25 & 0.63 & $\theta_{3,2}$ & -11.05 & -12.14 & -10.03 \\
$\Omega_{3,3}$ & 1.00 & 1.00 & 1.00 & $\beta_{8,1}$ & 0.70 & -0.52 & 1.86 & $\theta_{3,3}$ & -7.25 & -8.20 & -6.36 \\
$\Omega_{3,4}$ & 0.36 & 0.29 & 0.43 & $\beta_{8,2}$ & -2.10 & -3.30 & -0.93 & $\theta_{3,4}$ & -2.31 & -3.17 & -1.52 \\
$\Omega_{4,1}$ & 0.52 & 0.45 & 0.58 & $\beta_{8,3}$ & 0.11 & -0.52 & 0.74 & $\theta_{4,1}$ & -11.06 & -12.13 & -10.02 \\
$\Omega_{4,2}$ & 0.30 & 0.22 & 0.37 & $\beta_{9,1}$ & 0.50 & -0.56 & 1.54 & $\theta_{4,2}$ & -8.45 & -9.45 & -7.48 \\
$\Omega_{4,3}$ & 0.36 & 0.29 & 0.43 & $\beta_{9,2}$ & -2.30 & -3.33 & -1.30 & $\theta_{4,3}$ & -5.22 & -6.16 & -4.34 \\
$\Omega_{4,4}$ & 1.00 & 1.00 & 1.00 & $\beta_{9,3}$ & -0.20 & -0.75 & 0.34 & $\theta_{4,4}$ & 2.67 & 1.84 & 3.56 \\
$\alpha_{1}$ & -0.74 & -0.84 & -0.65 & $\lambda_{10}$ & 3.19 & 2.86 & 3.55 & $\theta_{5,1}$ & -13.74 & -15.16 & -12.43 \\
$\alpha_{2}$ & -1.15 & -1.25 & -1.04 & $\lambda_{11}$ & 4.68 & 4.13 & 5.30 & $\theta_{5,2}$ & -7.07 & -8.24 & -5.94 \\
$\alpha_{3}$ & -1.15 & -1.25 & -1.05 & $\lambda_{12}$ & 5.70 & 4.86 & 6.61 & $\theta_{5,3}$ & -2.58 & -3.65 & -1.54 \\
$\alpha_{4}$ & -1.15 & -1.32 & -1.00 & $\lambda_{1}$ & 8.52 & 7.66 & 9.45 & $\theta_{5,4}$ & 3.65 & 2.57 & 4.71 \\
$\Phi_{1,1}$ & -0.39 & -0.59 & -0.20 & $\lambda_{2}$ & 4.18 & 3.79 & 4.59 & $\theta_{6,1}$ & -10.80 & -11.82 & -9.76 \\
$\Phi_{1,2}$ & 0.12 & 0.05 & 0.20 & $\lambda_{3}$ & 5.26 & 4.78 & 5.79 & $\theta_{6,2}$ & -5.95 & -6.87 & -5.04 \\
$\Phi_{1,3}$ & -0.07 & -0.16 & 0.03 & $\lambda_{4}$ & 5.41 & 4.96 & 5.89 & $\theta_{6,3}$ & -0.08 & -0.93 & 0.77 \\
$\Phi_{2,1}$ & 0.02 & -0.18 & 0.24 & $\lambda_{5}$ & 7.44 & 6.83 & 8.13 & $\theta_{6,4}$ & 5.04 & 4.16 & 5.95 \\
$\Phi_{2,2}$ & 0.14 & 0.06 & 0.22 & $\lambda_{6}$ & 5.47 & 5.03 & 5.96 & $\theta_{7,1}$ & -11.86 & -12.91 & -10.81 \\
$\Phi_{2,3}$ & 0.03 & -0.07 & 0.12 & $\lambda_{7}$ & 4.99 & 4.56 & 5.44 & $\theta_{7,2}$ & -8.13 & -9.06 & -7.20 \\
$\Phi_{3,1}$ & 0.11 & -0.09 & 0.32 & $\lambda_{8}$ & 8.42 & 7.66 & 9.19 & $\theta_{7,3}$ & -3.99 & -4.84 & -3.16 \\
$\Phi_{3,2}$ & 0.18 & 0.10 & 0.26 & $\lambda_{9}$ & 7.50 & 6.87 & 8.18 & $\theta_{7,4}$ & 1.13 & 0.32 & 1.95 \\
$\Phi_{3,3}$ & -0.06 & -0.14 & 0.03 & $\sigma_{bk, 10}$ & 1.97 & 1.73 & 2.23 & $\theta_{8,1}$ & -19.20 & -20.98 & -17.46 \\
$\Phi_{4,1}$ & 0.21 & -0.10 & 0.53 & $\sigma_{bk, 11}$ & 1.56 & 1.09 & 2.01 & $\theta_{8,2}$ & -13.91 & -15.49 & -12.42 \\
$\Phi_{4,2}$ & 0.33 & 0.21 & 0.46 & $\sigma_{bk, 12}$ & 3.71 & 3.14 & 4.35 & $\theta_{8,3}$ & -1.69 & -2.86 & -0.55 \\
$\Phi_{4,3}$ & -0.11 & -0.27 & 0.04 & $\sigma_{bk, 1}$ & 3.13 & 2.67 & 3.61 & $\theta_{8,4}$ & 5.01 & 3.80 & 6.28 \\
$\beta_{1,1}$ & 2.13 & 0.86 & 3.48 & $\sigma_{bk, 2}$ & 2.09 & 1.84 & 2.34 & $\theta_{9,1}$ & -8.56 & -9.75 & -7.48 \\
$\beta_{1,2}$ & -0.62 & -1.92 & 0.62 & $\sigma_{bk, 3}$ & 1.91 & 1.58 & 2.23 & $\theta_{9,2}$ & -1.60 & -2.64 & -0.62 \\
$\beta_{1,3}$ & -0.47 & -1.13 & 0.18 & $\sigma_{bk, 4}$ & 2.80 & 2.54 & 3.08 & $\theta_{9,3}$ & 0.35 & -0.68 & 1.34 \\
$\beta_{10,1}$ & 0.44 & -0.14 & 1.04 & $\sigma_{bk, 5}$ & 1.96 & 1.48 & 2.38 & $\theta_{9,4}$ & 4.59 & 3.56 & 5.63 \\
\bottomrule
\end{tabular*}
\end{threeparttable}
\end{table}
\end{landscape}

\newcolumntype{L}[1]{>{\raggedright\arraybackslash}p{#1}}
\newcolumntype{C}[1]{>{\centering\arraybackslash}p{#1}}

\begin{table}[p]
\centering
\caption{ALS Functional Rating Scale--Revised (ALSFRS-R).}
\label{tab:alsfrs-r}

\footnotesize

\sisetup{
  mode = math
}

\setlength{\tabcolsep}{0.35pt}
\renewcommand{\arraystretch}{0.72}

\begin{threeparttable}
\begin{tabular*}{\linewidth}{@{\extracolsep{\fill}}L{0.30\linewidth} C{0.08\linewidth} L{0.57\linewidth}@{}}
\toprule
Item & Level & Description \\
\midrule
Speech
& 4 & Normal speech processes \\
& 3 & Detectable speech disturbance \\
& 2 & Intelligible with repeating \\
& 1 & Speech combined with nonvocal communication \\
& 0 & Loss of useful speech \\

\addlinespace[0.25em]
Salivation
& 4 & Normal \\
& 3 & Slight but definite excess of saliva in mouth; may have nighttime drooling \\
& 2 & Moderately excessive saliva; may have minimal drooling \\
& 1 & Marked excess of saliva with some drooling \\
& 0 & Marked drooling; requires constant tissue or handkerchief \\

\addlinespace[0.25em]
Swallowing
& 4 & Normal eating habits \\
& 3 & Early eating problems; occasional choking \\
& 2 & Dietary consistency changes \\
& 1 & Needs supplemental tube feeding \\
& 0 & NPO, exclusively parenteral or enteral feeding \\

\addlinespace[0.25em]
Handwriting
& 4 & Normal \\
& 3 & Slow or sloppy; all words are legible \\
& 2 & Not all words are legible \\
& 1 & Able to grip pen but unable to write \\
& 0 & Unable to grip pen \\

\addlinespace[0.25em]
Cutting food and handling utensils without gastrostomy
& 4 & Normal \\
& 3 & Somewhat slow and clumsy, but no help needed \\
& 2 & Can cut most foods, although clumsy and slow; some help needed \\
& 1 & Food must be cut by someone, but can still feed slowly \\
& 0 & Needs to be fed \\

\addlinespace[0.25em]
Cutting food and handling utensils with gastrostomy
& 4 & Normal \\
& 3 & Clumsy, but able to perform all manipulations independently \\
& 2 & Some help needed with closures and fasteners \\
& 1 & Provides minimal assistance to caregiver \\
& 0 & Unable to perform any aspect of task \\

\addlinespace[0.25em]
Dressing and hygiene
& 4 & Normal function \\
& 3 & Independent and complete self-care with effort or decreased efficiency \\
& 2 & Intermittent assistance or substitute methods \\
& 1 & Needs attendant for self-care \\
& 0 & Total dependence \\
\bottomrule
\end{tabular*}
\end{threeparttable}
\end{table}

\begin{table}[p]
\ContinuedFloat
\centering
\caption{ALS Functional Rating Scale--Revised (ALSFRS-R) (continued).}

\footnotesize

\sisetup{
  mode = math
}

\setlength{\tabcolsep}{0.35pt}
\renewcommand{\arraystretch}{0.72}

\begin{threeparttable}
\begin{tabular*}{\linewidth}{@{\extracolsep{\fill}}L{0.30\linewidth} C{0.08\linewidth} L{0.57\linewidth}@{}}
\toprule
Item & Level & Description \\
\midrule
Turning in bed and adjusting bed clothes
& 4 & Normal \\
& 3 & Somewhat slow and clumsy, but no help needed \\
& 2 & Can turn alone or adjust sheets, but with great difficulty \\
& 1 & Can initiate, but not turn or adjust sheets alone \\
& 0 & Helpless \\

\addlinespace[0.25em]
Walking
& 4 & Normal \\
& 3 & Early ambulation difficulties \\
& 2 & Walks with assistance \\
& 1 & Nonambulatory functional movement \\
& 0 & No purposeful leg movement \\

\addlinespace[0.25em]
Climbing stairs
& 4 & Normal \\
& 3 & Slow \\
& 2 & Mild unsteadiness or fatigue \\
& 1 & Needs assistance \\
& 0 & Cannot do \\

\addlinespace[0.25em]
Dyspnea
& 4 & None \\
& 3 & Occurs when walking \\
& 2 & Occurs with one or more activities of daily living, such as eating, bathing, or dressing \\
& 1 & Occurs at rest; difficulty breathing when sitting or lying \\
& 0 & Significant difficulty; considering mechanical respiratory support \\

\addlinespace[0.25em]
Orthopnea
& 4 & None \\
& 3 & Some difficulty sleeping at night due to shortness of breath; does not routinely use more than two pillows \\
& 2 & Needs extra pillows in order to sleep, more than two \\
& 1 & Can only sleep sitting up \\
& 0 & Unable to sleep \\

\addlinespace[0.25em]
Respiratory insufficiency
& 4 & None \\
& 3 & Intermittent use of BiPAP \\
& 2 & Continuous use of BiPAP during the night \\
& 1 & Continuous use of BiPAP during the night and day \\
& 0 & Invasive mechanical ventilation by intubation or tracheostomy \\
\bottomrule
\end{tabular*}
\begin{tablenotes}
\footnotesize
\item The ALSFRS-R total score is computed from 12 scored items. Use either the cutting-food item for patients without gastrostomy or the alternate item for patients with gastrostomy, not both. The total score ranges from 0 to 48, with higher scores indicating greater retained function.
\end{tablenotes}
\end{threeparttable}
\end{table}

\suppsection{Data Preprocessing details for real-world application}{appd:data_prep}
The PRO-ACT data were preprocessed by integrating longitudinal ALSFRS-R assessments with subject-level demographic, treatment, disease-history, pulmonary-function, and vital-sign information. For each ALSFRS-R visit, the analysis retained the subject identifier, the assessment time relative to baseline, and the 12 individual ALSFRS-R item scores. The cutting-food item was harmonized by using the gastrostomy-specific response when available and otherwise using the response recorded for subjects without gastrostomy. Assessments without a recorded ALSFRS-R time were excluded. Demographic covariates were summarized at the subject level using the first available record; sex was represented as a binary indicator, coded as 1 when sex was recorded as female and 0 otherwise. Treatment assignment was similarly converted into a binary variable coded as 1 when the study-arm description contained the term “Active” and 0 otherwise, with the first treatment record retained for each subject. Bulbar disease onset was also represented as a binary indicator coded as 1 when disease onset was in Bulbar region and 0 otherwise. Baseline FVC was defined as the percentage of predicted normal FVC from Trial 1. FVC values were converted to numeric format, records with missing FVC values or assessment times were removed, and the observation closest to study baseline was selected for each subject. Baseline BMI was calculated using standardized height and weight measurements. Heights recorded in inches were converted to meters, while other height measurements were assumed to be in centimeters and divided by 100; the median available height was then calculated for each subject. Weights recorded in pounds were converted to kilograms, while all other weights were treated as kilograms, and the weight measurement closest to baseline was selected. BMI was calculated as baseline weight in kilograms divided by median height in meters squared. Values below 10 or above 100 kg/m² were considered biologically implausible and replaced with missing values. The complete-case eligibility required nonmissing values for all 12 ALSFRS-R items and the six selected covariates: active treatment, female sex, baseline age, bulbar onset, baseline FVC, and baseline BMI. Finally, baseline age, FVC, and BMI were standardized to have mean zero and unit standard deviation using the retained longitudinal records to improve numerical stability during subsequent Stan Hamiltonian Monte Carlo estimation.

\suppsection{Real World Application Comparison with Baseline Methods}{appd:baseline_comparison}

We benchmarked the CLOUD framework against a standard Linear Mixed Model (LMM), a Multidimensional LMM (MLTLMM) \citep{wang2017multidimensional}, and a constrained Diagonal Ornstein-Uhlenbeck model (DiagOU) assuming independent latent trajectories.

We assessed expected out-of-sample predictive fit using the Watanabe-Akaike Information Criterion (WAIC) \citep{watanabe2010asymptotic} and Pareto-smoothed leave-one-out cross-validation (PSIS-LOO) \citep{vehtari2017practical}. WAIC is defined as:
\begin{equation}
    \mathrm{WAIC} = -2 (\mathrm{lppd} - p_{\mathrm{WAIC}})
\end{equation}
where $\mathrm{lppd}$ is the log pointwise predictive density and $p_{\mathrm{WAIC}}$ estimates the effective number of parameters. Similarly, PSIS-LOO estimates the expected log predictive density, defined as $\mathrm{elpd}_{\mathrm{LOO}} = \sum_{i=1}^n \log p(y_i | y_{-i})$.

As shown in Web Table~\ref{tab:model_comparison}, the CLOUD model achieved the most favorable fit, attaining the lowest WAIC ($45095.91$) and highest $\mathrm{elpd}_{\text{LOO}}$ ($-23185.65$). The reduction in WAIC relative to the DiagOU baseline indicated that the off-diagonal dependence structure contributed meaningful predictive information. While PSIS-LOO diagnostics flagged a high proportion of Pareto $k > 0.70$ values for both dynamic models—suggesting posterior sensitivity to single, highly informative observations. this was within our expectation as it was rooted in the sparsity of the longitudinal data we considered where each patient only had 3-5 observation over irregular time points.

Macroscopic forecasting accuracy (Web Figure~\ref{fig:macro_metrics}) was evaluated via Root Mean Squared Error (RMSE) and Mean Absolute Error (MAE), defined for $N$ predictions as $\mathrm{RMSE} = \sqrt{\frac{1}{N}\sum (\hat{y}_i - y_i)^2}$ and $\mathrm{MAE} = \frac{1}{N}\sum |\hat{y}_i - y_i|$. By explicitly modeling directed cross-domain interactions, the fully coupled CLOUD model consistently yielded the lowest RMSE and MAE across all target regions compared to the independent DiagOU baseline and standard mixed-effects models.

At the item level (Web Figure~\ref{fig:item_metrics}), predictive accuracy was quantified by Exact Match percentage and the Weighted Kappa score ($\kappa_w$) \citep{gelman2013bayesian}, utilizing quadratic weights $w_{ij} = 1 - \frac{(i-j)^2}{(C-1)^2}$ to penalize predictions further from the true ordinal category $C$. The CLOUD model achieved Exact Match rates predominantly above 80\% and Weighted Kappa scores clustered between 0.8 and 0.9. This demonstrated the model's capacity to translate multivariate continuous trajectories into accurate discrete clinical predictions without sacrificing structural integrity.

% \begin{table}[htbp]
% \centering
% \caption{Predictive model comparison using WAIC and PSIS-LOO.}
% \label{tab:model_comparison_main}
% \scriptsize
% \begin{tabular*}{\linewidth}{@{\extracolsep{\fill}}lrrr@{}}
% \toprule
% Model 
% & WAIC 
% & elpd$_{\text{LOO}}$ (SE) 
% & Pareto $k > 0.70$ \\
% \midrule
% \texttt{MLTLMM} 
% & 62498.12 
% & -31430.73 (235.51) 
% & \textbf{339 (9.1\%)} \\

% \texttt{diagOU} 
% & 45404.69 
% & -23413.95 (229.35) 
% & 1838 (49.4\%) \\

% \texttt{CLOUD} 
% & \textbf{45095.91} 
% & \textbf{-23185.65} (224.13) 
% & 1696 (45.6\%) \\
% \bottomrule
% \end{tabular*}
% \begin{flushleft}
% \footnotesize
% \textit{Notes.} WAIC is reported on the deviance scale, so lower values indicate better expected out-of-sample predictive fit. 
% elpd$_{\text{LOO}}$ is the expected log predictive density estimated by Pareto-smoothed leave-one-out cross-validation; larger values indicate better predictive accuracy. 
% SE is the standard error of elpd$_{\text{LOO}}$. 
% The Pareto $k > 0.70$ column reports the number and percentage of observations with problematic PSIS-LOO diagnostics; smaller values are preferable. 
% Bold values mark the best entry in each column among the models shown. 
% Because all models produced PSIS-LOO warnings, the LOO comparisons should be interpreted cautiously.
% \end{flushleft}
% \end{table}

\suppsection{Figures}{appd:figures}
%
%
%
%%%%%%%%%%%%%%%%%%%%%%%%%%%%%%%%%%%%%%%%%%%%%%%%%%%%%%%%%%%%%%%%%%%%%%%%%%%%%%%%%%%%%%%%%%%%%%%%%%%%
% ALS Application: Overview Fitting
%%%%%%%%%%%%%%%%%%%%%%%%%%%%%%%%%%%%%%%%%%%%%%%%%%%%%%%%%%%%%%%%%%%%%%%%%%%%%%%%%%%%%%%%%%%%%%%%%%%%
\begin{figure}[htbp]
    \centering
    \includegraphics[width=\textwidth]{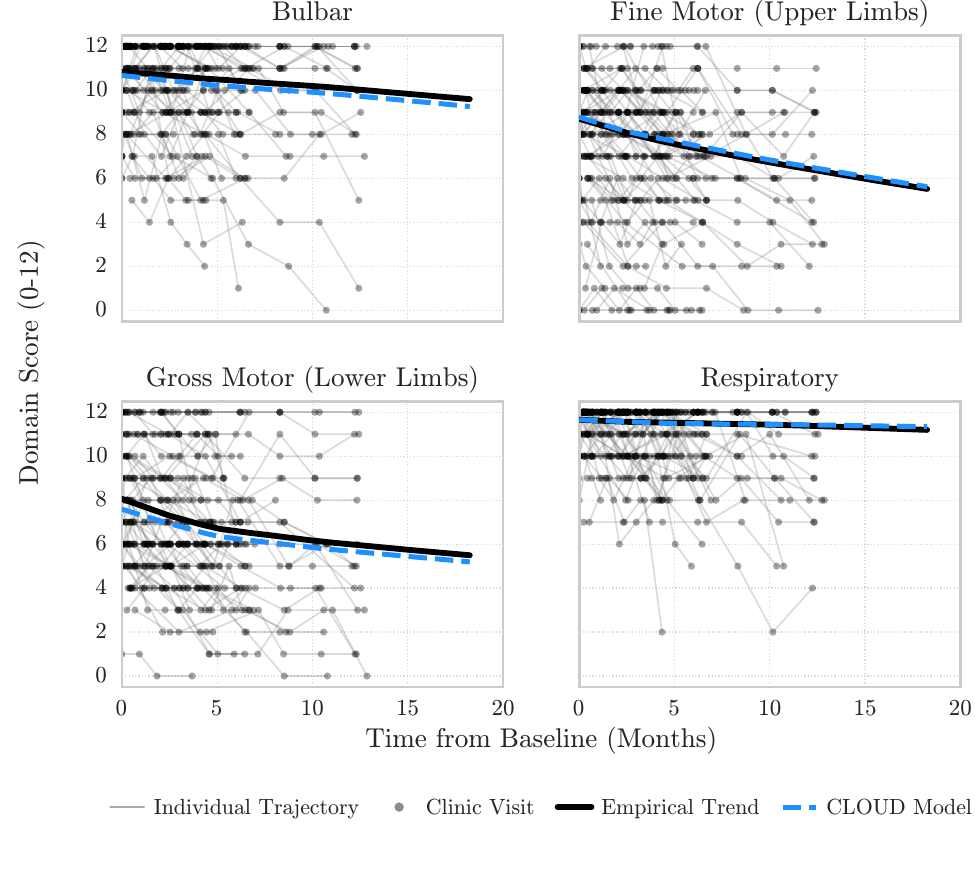}
    \caption{\textbf{Comprehensive Longitudinal Validation.} Comparison of observed disease progression and model-predicted trajectories across the Bulbar, Fine Motor, Gross Motor, and Respiratory domains over a 20-month horizon. Background gray lines and black scatter points illustrate the high variance of individual clinical trajectories and discrete visit observations for a random sub-sample of 100 patients. The solid black line represents the smoothed empirical population mean, while the dashed blue line indicates the aggregate predicted mean derived from the CLOUD model. The close alignment between the empirical and model-predicted trend lines demonstrates the model's robustness in capturing the true underlying population dynamics amidst profound individual-level heterogeneity.}
    \label{fig:combined_longitudinal_overview}
\end{figure}
%%%%%%%%%%%%%%%%%%%%%%%%%%%%%%%%%%%%%%%%%%%%%%%%%%%%%%%%%%%%%%%%%%%%%%%%%%%%%%%%%%%%%%%%%%%%%%%%%%%%
% ALS Application: PPC Check
%%%%%%%%%%%%%%%%%%%%%%%%%%%%%%%%%%%%%%%%%%%%%%%%%%%%%%%%%%%%%%%%%%%%%%%%%%%%%%%%%%%%%%%%%%%%%%%%%%%%
\begin{figure}[htbp]
    \centering
    \includegraphics[width=\textwidth]{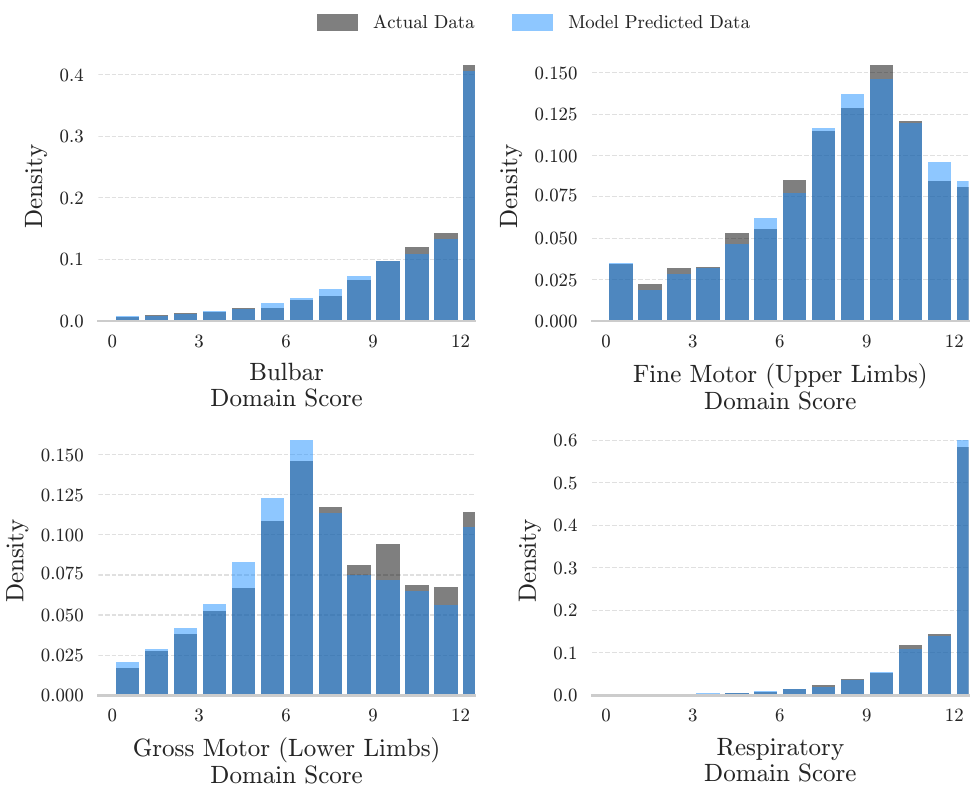}
    \caption{\textbf{Graphical model-replicated checks of domain-score
    distributions.}
    The black histograms show the observed domain-score distributions, and
    the blue histograms show one randomly generated model-replicated
    dataset for the Bulbar, Fine Motor, Gross Motor, and Respiratory
    domains. The replicated responses were generated from the fitted
    ordered-logistic measurement model conditional on posterior mean values
    of the cutpoints, item loadings, measurement-covariate effects, latent
    states, and subject--item random effects. The figure provides an
    illustrative dataset-level comparison, including the sampling
    variability present in an individual replication. Quantitative
    summaries based on 500 independently generated replicated datasets are
    reported in Web Table~\ref{tab:cloud_ppc_domain}. Because fitted quantities were fixed at
    their posterior means, this graphical check does not incorporate full
    posterior uncertainty.}
    \label{fig:ppc_distribution}
\end{figure}

%%%%%%%%%%%%%%%%%%%%%%%%%%%%%%%%%%%%%%%%%%%%%%%%%%%%%%%%%%%%%%%%%%%%%%%%%%%%%%%%%%%%%%%%%%%%%%%%%%%%
% ALS Application: Individual Trajectory
%%%%%%%%%%%%%%%%%%%%%%%%%%%%%%%%%%%%%%%%%%%%%%%%%%%%%%%%%%%%%%%%%%%%%%%%%%%%%%%%%%%%%%%%%%%%%%%%%%%%
\begin{figure}[htbp]
    \centering
    \includegraphics[width=\textwidth]{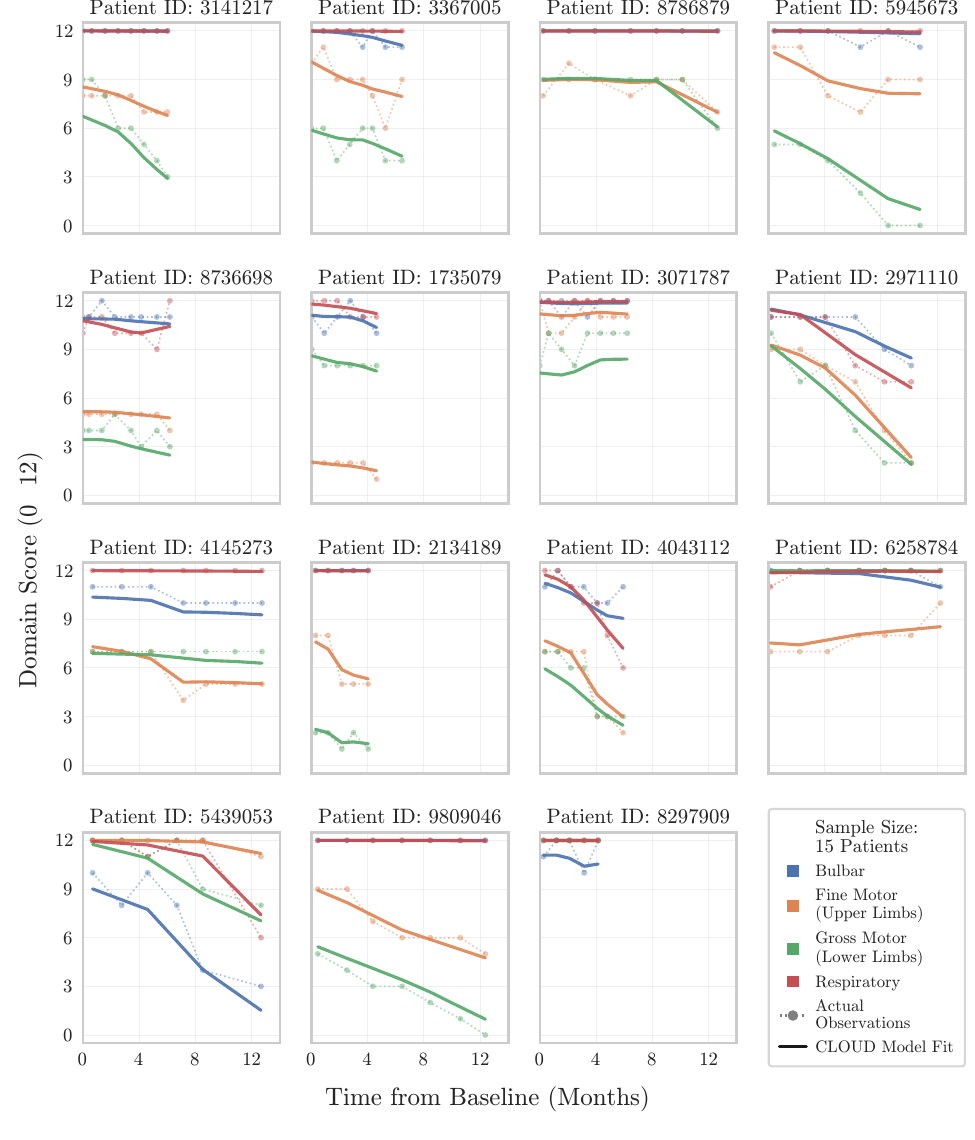}
    \caption{\textbf{High-Density Individual Patient Trajectories.} Faceted visualization of longitudinal clinical decline and model fit for a random sample of 15 long-term patients ($\ge 5$ visits). Each panel illustrates the disease progression for a single patient across the four functional domains: Bulbar, Fine Motor, Gross Motor, and Respiratory. Actual clinical observations are denoted by dotted lines with scatter markers, while the solid, smoothed lines represent the continuous latent trajectories inferred by the CLOUD model. The highly personalized predictions demonstrate the model's capacity to accurately capture heterogeneous, multidimensional disease courses at the individual level.}
    \label{fig:faceted_individual_fits}
\end{figure}

%%%%%%%%%%%%%%%%%%%%%%%%%%%%%%%%%%%%%%%%%%%%%%%%%%%%%%%%%%%%%%%%%%%%%%%%%%%%%%%%%%%%%%%%%%%%%%%%%%%%
% ALS Application: Subject Heteogeneity
%%%%%%%%%%%%%%%%%%%%%%%%%%%%%%%%%%%%%%%%%%%%%%%%%%%%%%%%%%%%%%%%%%%%%%%%%%%%%%%%%%%%%%%%%%%%%%%%%%%%
\begin{figure}[htbp]
    \centering
    % A 19x16 inch 4x3 grid demands a full page width to remain legible in a two-column format.
    \includegraphics[width=\textwidth]{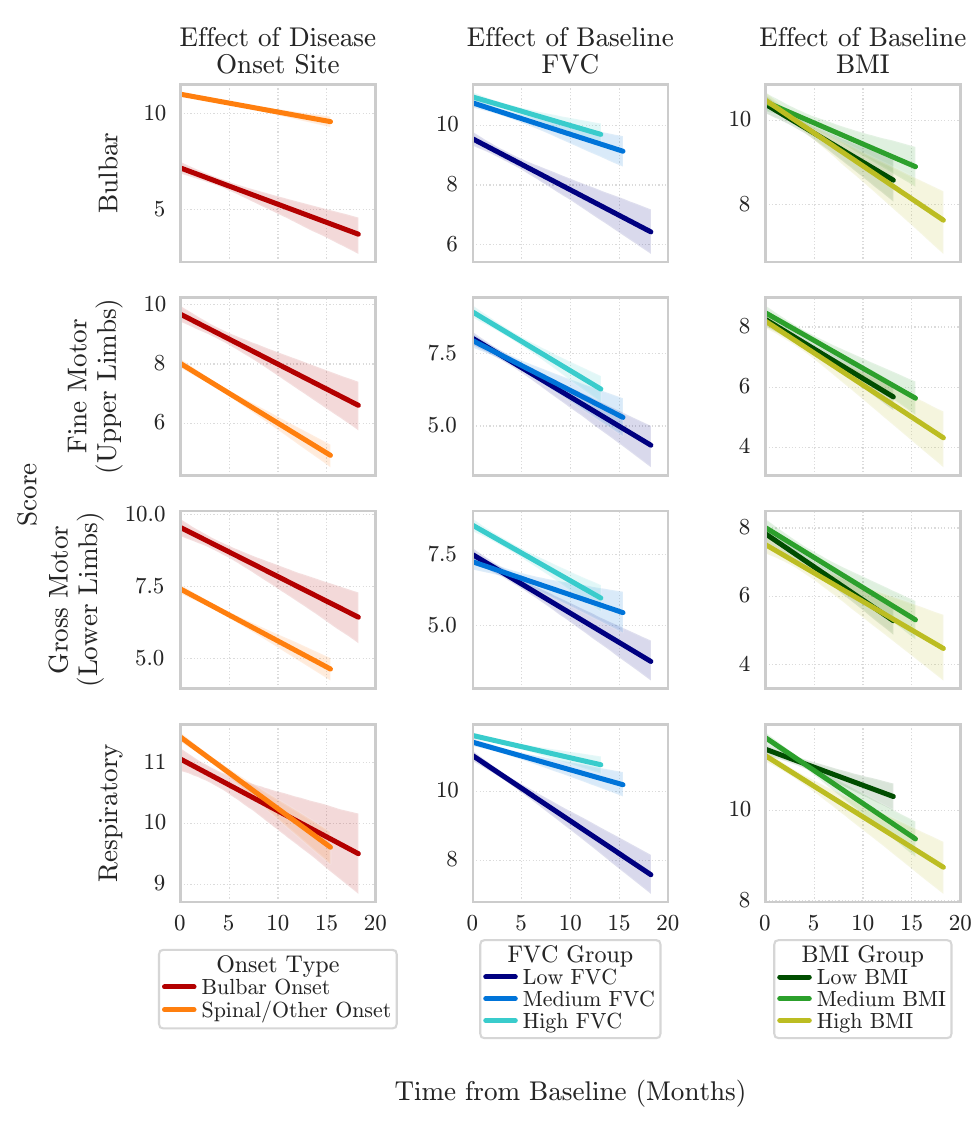}
    \caption{\textbf{Empirical Effects of Baseline Covariates on Domain-Specific Decline.} Longitudinal progression of clinical impairment across the four functional domains (Bulbar, Fine Motor, Gross Motor, and Respiratory) stratified by key baseline covariates. The columns illustrate the varying impacts of (Left) Disease Onset Site, (Center) Baseline Forced Vital Capacity (FVC), and (Right) Baseline Body Mass Index (BMI). Trend lines represent linear regression fits for each stratified sub-population over a 20-month horizon. To highlight relative effect sizes and distinct temporal trends within each sub-population, the y-axes are scaled independently.}
    \label{fig:empirical_covariates}
\end{figure}

%%%%%%%%%%%%%%%%%%%%%%%%%%%%%%%%%%%%%%%%%%%%%%%%%%%%%%%%%%%%%%%%%%%%%%%%%%%%%%%%%%%%%%%%%%%%%%%%%%%%
% ALS Application: Domain Interaction
%%%%%%%%%%%%%%%%%%%%%%%%%%%%%%%%%%%%%%%%%%%%%%%%%%%%%%%%%%%%%%%%%%%%%%%%%%%%%%%%%%%%%%%%%%%%%%%%%%%%
\begin{figure}[htbp] 
    \centering
    \includegraphics[width=\textwidth]{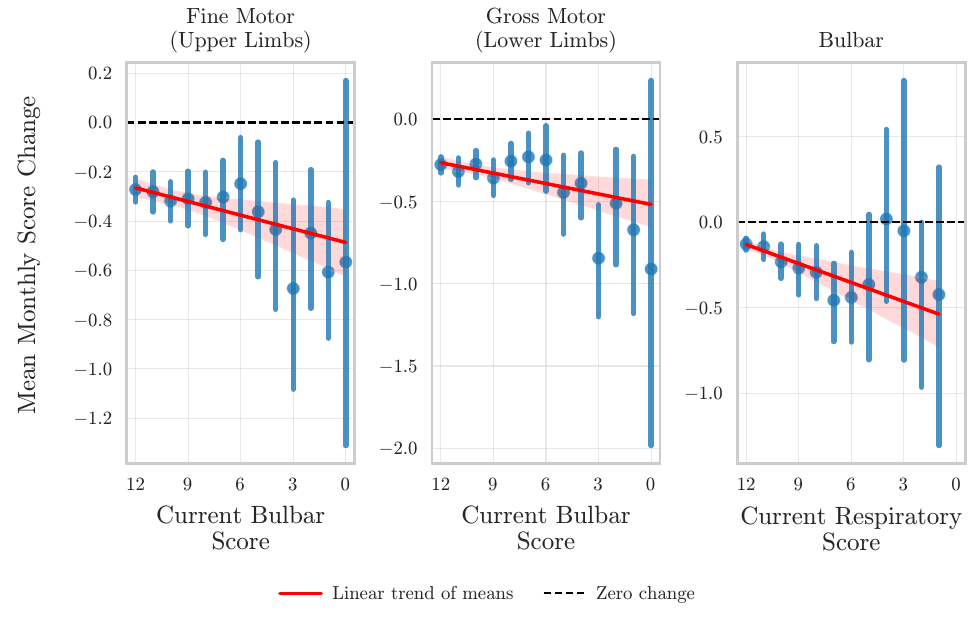}
    \caption{\textbf{Empirical Interaction Between Clinical Domains.} Evaluating the effect of a given domain's current functional state on the expected clinical decline rate (velocity, measured in average points lost per month) of secondary anatomical domains. The panels illustrate (Left) the effect of Bulbar state on Fine Motor decline, (Center) Bulbar state on Gross Motor decline, and (Right) Respiratory state on Bulbar decline. Data points represent the aggregated mean velocity for each integer score level on an inverted x-axis (12 = normal function, 0 = severe impairment). Red trend lines denote the linear relationship, demonstrating cross-domain interdependencies where impairment in one region accelerates decline in another.}
    \label{fig:empirical_interaction_velocity}
\end{figure}

%%%%%%%%%%%%%%%%%%%%%%%%%%%%%%%%%%%%%%%%%%%%%%%%%%%%%%%%%%%%%%%%%%%%%%%%%%%%%%%%%%%%%%%%%%%%%%%%%%%%
% ALS Application: Comparison
%%%%%%%%%%%%%%%%%%%%%%%%%%%%%%%%%%%%%%%%%%%%%%%%%%%%%%%%%%%%%%%%%%%%%%%%%%%%%%%%%%%%%%%%%%%%%%%%%%%%
\begin{figure}[htbp]
    \centering
    % We use the .pdf extension generated by save_multiformat for vector-quality publication graphics
    \includegraphics[width=\textwidth]{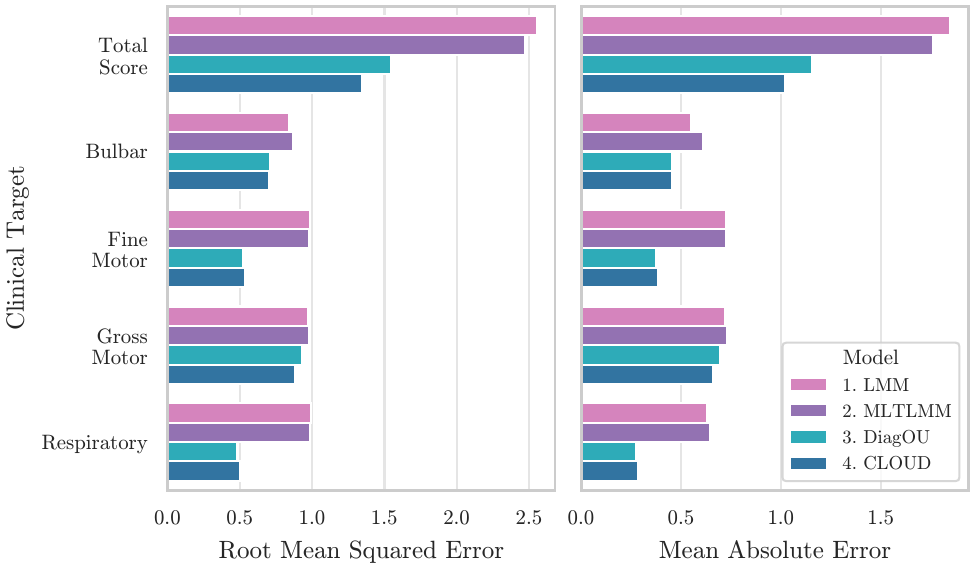}
    \caption{\textbf{Macro-Level Predictive Performance.} Comparison of forecasting accuracy measured by (Left) Root Mean Squared Error (RMSE) and (Right) Mean Absolute Error (MAE) across five clinical targets: Total Score, Bulbar, Fine Motor, Gross Motor, and Respiratory domains. Lower values indicate better predictive performance. Evaluated models include standard Linear Mixed Models (1.~LMM), Multidimensional LMM (2.~MLTLMM), Diagonal Ornstein-Uhlenbeck (3.~DiagOU), and the proposed Continuous Latent OU Dynamics (4.~CLOUD) model.}
    \label{fig:macro_metrics}
\end{figure}

\begin{figure}[htbp]
    \centering
    % Using \linewidth ensures the image scales perfectly to the single column width
    \includegraphics[width=\linewidth]{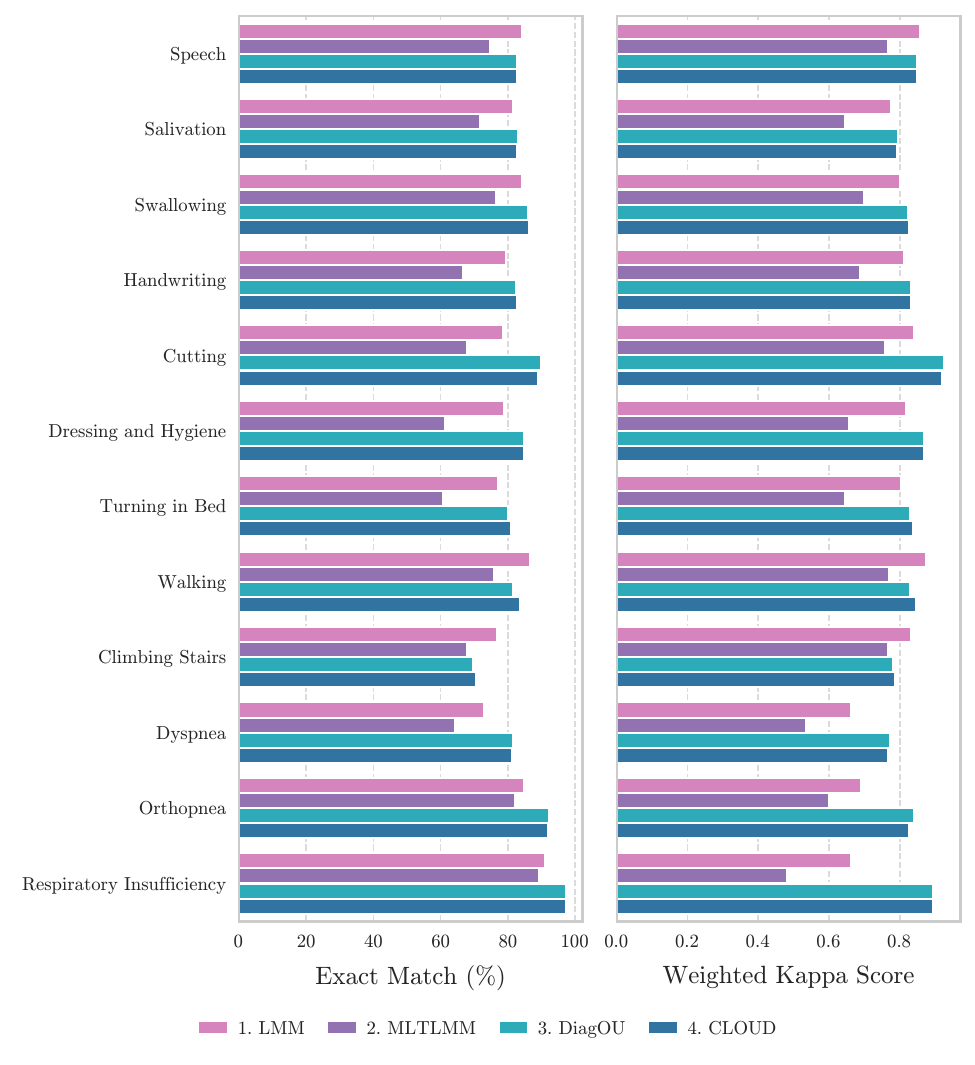}
    \caption{\textbf{Item-Level Predictive Performance.} Comparison of model accuracy at the individual clinical item level evaluated by (Left) Exact Match (\%) and (Right) Weighted Kappa score. Higher values indicate better performance. Evaluated models include standard Linear Mixed Models (1.~LMM), Multidimensional LMM (2.~MLTLMM), Diagonal Ornstein-Uhlenbeck (3.~DiagOU), and the proposed Continuous Latent OU Dynamics (4.~CLOUD) model.}
    \label{fig:item_metrics}
\end{figure}

\end{document}